\definecolor{winered}{rgb}{0.8,0,0}
\definecolor{myblue}{rgb}{0,0,0.8}
\definecolor{mygreen}{rgb}{0.0 0.5 0}
\newtheorem{definition}{Definition}
\newtheorem{theorem}{Theorem}
\newtheorem{lemma}{Lemma}
\newtheorem{proposition}{Proposition}
\newtheorem{corollary}{Corollary}
\newtheorem{remark}{Remark}
\newtheorem{assumption}{Assumption}
\DeclarePairedDelimiter\ceil{\lceil}{\rceil}
\DeclareMathOperator*{\argmax}{\arg\!\max}
\DeclareMathOperator*{\argmin}{\arg\!\min}
\begin{document}
\title{Distributed Inference with Sparse and Quantized Communication}
\author{Aritra Mitra, John A. Richards, Saurabh Bagchi, and Shreyas Sundaram
\thanks{A. Mitra, S. Bagchi and S. Sundaram are with the School of Electrical and Computer Engineering at Purdue University. J. A.  Richards is with Sandia National Laboratories.   Email: {\tt \{mitra14, sbagchi, sundara2\}@purdue.edu},  {\tt{jaricha@sandia.gov}}. This work was supported in part by NSF CAREER award
1653648, and by the Laboratory Directed Research and Development program at Sandia National Laboratories. Sandia National Laboratories is a multimission laboratory managed and operated by National Technology \& Engineering Solutions of Sandia, LLC, a wholly owned subsidiary of Honeywell International Inc., for the U.S. Department of Energy's National Nuclear Security Administration under contract DE-NA0003525. The views expressed in the article do not necessarily represent the views of the U.S. Department of Energy or the United States Government.}}
\maketitle
\begin{abstract}
We consider the problem of distributed inference where agents in a network observe a stream of private signals generated by an unknown state, and aim to uniquely identify this state from a finite set of hypotheses. We focus on scenarios where communication between agents is costly, and takes place over channels with finite bandwidth. To reduce the frequency of communication, we develop a novel event-triggered distributed learning rule that is based on the principle of diffusing low beliefs on each false hypothesis. Building on this principle, we design a trigger condition under which an agent broadcasts only those components of its belief vector that have adequate innovation, to only those neighbors that require such information. We prove that our rule guarantees convergence to the true state exponentially fast almost surely despite sparse communication, and that it has the potential to significantly reduce information flow from uninformative agents to informative agents. Next, to deal with finite-precision communication channels, we propose a distributed learning rule that leverages the idea of adaptive quantization. We show that by sequentially refining the range of the quantizers, every agent can learn the truth exponentially fast almost surely, while using just $1$ bit to encode its belief on each hypothesis. For both our proposed algorithms, we rigorously characterize the trade-offs between communication-efficiency and the learning rate. 
\end{abstract}
\section{Introduction}
Over the last couple of decades, there has been a significant shift in the model of computation - driven in part by the nature of emerging applications, and partly due to concerns of reliability and scalability - from that of a single centralized computing node to parallel, distributed architectures comprising of several devices. Depending upon the context, these devices could be smart phones interacting with the cloud in a Federated Learning setup, or embedded sensors in a modern Internet of Things (IoT) network. Typically, the devices in the above applications - henceforth referred to as \textit{agents} - run on limited battery power, and setting up communication links between such agents incurs significant latency. Thus, the need arises to reduce the amount of communication to achieve a given objective. Moreover, the communication links themselves have finite bandwidth, dictating the need to compress messages appropriately. These \textit{communication bottleneck}s pose a major technical challenge. Our goal in this paper is to take a step towards resolving this challenge for the canonical problem of distributed inference. We now briefly describe this problem. 

Consider a network of agents, where each agent receives a stream of private signals sequentially over time. The observations of each agent are generated by a common underlying distribution, parameterized by an unknown static quantity which we call the \textit{true state of the world}. The task of the agents is to collectively identify this unknown quantity from a finite family of hypotheses, while relying solely on local interactions. The problem described above arises in a variety of scenarios ranging from detection and object recognition using autonomous robots, to statistical inference and learning over multiple processors. As such, the distributed inference/hypothesis testing problem enjoys a rich history  \cite{jad1,jad2,liu,salami,shahin,nedic,lalitha,su1,uribe,mitraACC19,mitra2019new}, where a variety of techniques have been proposed over the years, with more recent efforts directed towards improving the convergence rate. These techniques can be broadly classified in terms of the data aggregation mechanism: while consensus-based linear \cite{jad1,jad2,liu,salami} and log-linear \cite{shahin,nedic,lalitha,su1,uribe} rules have been well studied, \cite{mitraACC19} and \cite{mitra2019new} propose a min-protocol that leads to improved asymptotic learning rates over previous approaches.

In general, for the problem described above, no one agent can eliminate every false hypothesis on its own to uniquely learn the true state. This leads to a fundamental tension: although communication is costly (due to battery power constraints) and imprecise (due to finite communication bandwidth), it is also necessary. \textit{How should the agents interact to learn the true state despite sparse and imprecise communication?} At the moment, a theoretical understanding of this question is lacking in the distributed inference literature. In this context, our main contributions are described below. 

\subsection{Contributions}
To reduce the frequency of communication, one needs to first answer a few basic questions. (i) When should an agent exchange information with a neighbor? (ii) What piece of information should the agent exchange? To address these questions in a principled way, our first contribution is to develop a novel distributed learning rule in Section \ref{sec:algo} by drawing on ideas from  event-triggered control \cite{tabuadaevent,heemels}. The premise of our rule is based on diffusing low beliefs on each false hypothesis across the network. Building on this principle, we design a trigger condition that carefully takes into account the specific structure of the problem, and enables an agent to decide, using purely local information, whether or not to broadcast its belief\footnote{By an agent's ``belief vector", we mean a distribution over the set of hypotheses; this vector gets recursively updated over time as an agent acquires more information.} on a given hypothesis to a given neighbor. Specifically, based on our event-triggered strategy, an agent broadcasts \textit{only} those components of its belief vector that have adequate ``innovation", to \textit{only} those neighbors that are in need of the corresponding pieces of information. Thus, our approach not only reduces the  communication frequency, but also the amount of information transmitted in each round. 

Our second contribution is to provide a detailed theoretical characterization of the proposed event-triggered learning rule in Section \ref{sec:results}. Specifically, in Theorem \ref{thm:main} we establish that our rule enables each agent to learn the true state exponentially fast almost surely, under standard assumptions on the observation model and the network topology. We characterize the learning rate of our algorithm as a function of the agents' relative entropies, the network structure, and parameters of the communication model. In particular, we show that even when the inter-communication intervals between the agents grow geometrically at a rate $p > 1$, our rule guarantees exponentially fast learning at a network-dependent rate that scales inversely with $p$. However, when such intervals grow polynomially, the learning rate remains the same as the network-independent learning rate in \cite{mitra2019new}. Thus, our results provide various interesting insights into the relationship that exists between the rate of convergence and the sparsity of the communication pattern. 

Next, in Proposition \ref{prop:sparsity} and Corollary  \ref{prop:tree}, we demonstrate that our event-triggered scheme has the potential to significantly reduce information flow from uninformative agents to informative agents.  Finally, in Theorem \ref{thm:asymp}, we argue that if asymptotic learning of the true state is the only consideration, then one can allow for communication schemes with arbitrarily long intervals between successive communications. 

While our results above concern the aspect of sparse communication, in Section \ref{sec:quantrule} we turn our attention to learning over communication channels with finite precision, i.e., channels that can support only a finite number of bits. In a recent paper \cite{lalitha} that looks at the same problem as us, the authors demonstrated in simulations that with a quantized variant of their log-linear rule, the beliefs of the agents might converge to a wrong hypothesis, if not enough bits are used to encode the beliefs. It is natural to then ask whether the above phenomenon is to be expected of $any$ rule, or whether it is specific to the one explored in \cite{lalitha}. We argue that it is in fact the latter by resolving the following fundamental question. 
\textit{In order to learn the true state, how many bits must an agent use to encode its belief on each hypothesis?} To answer this question, we develop a distributed learning rule based on the idea of adaptive quantization. The key feature of our rule is to successively refine the range of the quantizers as the agents acquire more information over time and narrow down on the truth. In Theorem \ref{thm:Quantmin}, we prove that even if every agent uses just 1 bit to encode its belief on each hypothesis, all agents end up learning the truth exponentially fast almost surely. The rate of learning, however, exhibits a dependence on the precision of the quantizer - a dependence that we explicitly characterize. In doing so, we show that if the number of bits used for encoding each hypothesis is chosen to be large enough w.r.t. certain relative entropies, then one can recover the exact same long-run learning rate as with infinite precision, i.e., the rate obtained in \cite{mitra2019new}. This constitutes our final contribution. 

To summarize, this paper (i) develops novel communication-efficient distributed inference  algorithms; (ii) provides detailed theoretical characterizations of their performance; and, in particular, (iii) highlights various interesting trade-offs between sparse and imprecise communication, and the learning rate.  

This paper significantly expands upon our preliminary work in  \cite{mitraCDC20} where we only consider the effect of sparse communication. In particular, Sections \ref{sec:quantrule} -- \ref{sec:quant_event} that deal with the aspect of imprecise communication are  entirely new additions. 

\subsection{Related Work}
Our work is closely related to the papers \cite{switching} and \cite{hareiccasp}, each of which explores the theme of event-driven communications for  distributed learning. In \cite{switching}, the authors propose a rule where an agent queries the log-marginals of its neighbors only if the total variation distance between its current belief and the Bayesian posterior after observing a new signal falls below a pre-defined threshold. That is, an agent communicates only if its current private signal is not adequately informative. Among various other differences, the trigger condition we propose is not only a function of an agent's local observations, but also carefully incorporates feedback from neighboring agents. Moreover, while we provide theoretical results to substantiate that our rule leads to sparse communication patterns, \cite{switching} does so only via simulations. The algorithm in \cite{hareiccasp} comes with no theoretical guarantees of convergence. 

The aspect of sparse communication has been studied in the context of a variety of coordination problems on networks, such as average consensus \cite{olshevsky}, distributed optimization \cite{opt1,opt2}, and static parameter estimation \cite{sahu} - settings that differ from the one we investigate in this paper. To promote communication-efficiency, \cite{opt1} and \cite{sahu} propose algorithms where inter-agent interactions become progressively sparser over time. However, these algorithms are essentially time-triggered, i.e., they do not adhere to the principle that ``an agent should communicate only when it has something useful to say". On the other hand, the strand of literature that deals with event-driven communications for multi-agent systems focuses primarily on variations of the basic consensus problem; we refer the reader to \cite{nowzari} for a survey of such techniques. 

Our work is also related to the classical literature on decentralized hypothesis testing under communication constraints \cite{csiszar,longo,amari}. However, unlike our formulation, these papers assume the presence of a centralized fusion center, and do not deal with sequential data, i.e., each agent only receives one signal. Finally, the adaptive quantization idea  used in this paper bears conceptual similarities to the encoding strategy in \cite{tatikonda} for stabilizing an LTI plant over a bit-constrained channel, and also to a recent work on distributed optimization \cite{thinh}. 

\section{Model and Problem Formulation}
\label{sec:model}
\textbf{Network Model:} We consider a group of agents $\mathcal{V}=\{1,\ldots,n\}$, and model interactions among them via an undirected graph $\mathcal{G}=(\mathcal{V},\mathcal{E})$. An edge $(i,j)\in\mathcal{E}$ indicates that agent $i$ can directly transmit information to agent $j$, and vice versa. The set of all neighbors of agent $i$ is defined as $\mathcal{N}_i=\{j\in\mathcal{V}:(j,i)\in\mathcal{E}\}$. We say that $\mathcal{G}$ is rooted at $\mathcal{C}\subseteq\mathcal{V}$, if for each agent $i\in\mathcal{V}\setminus\mathcal{C}$, there exists a path to it from some agent $j\in\mathcal{C}$. For a connected graph $\mathcal{G}$, we will use $d(i,j)$ to denote the length of the shortest path between $i$ and $j$. 

\textbf{Observation Model:} Let $\Theta=\{\theta_1,\theta_2,\ldots,\theta_m\}$ denote $m$ possible states of the world, with each state representing a hypothesis. A specific state $\theta^{\star}\in\Theta$, referred to as the true state of the world, gets realized. Conditional on its realization, at each time-step $t\in\mathbb{N}_{+}$, every agent $i\in\mathcal{V}$  privately observes a signal $s_{i,t}\in\mathcal{S}_i$, where $\mathcal{S}_i$ denotes the signal space of agent $i$.\footnote{We use $\mathbb{N}$ and $\mathbb{N}_{+}$ to represent the set of non-negative integers and positive integers, respectively.} The joint observation profile generated across the network is denoted ${s}_{t}=(s_{1,t},s_{2,t},\ldots,s_{n,t})$, where $s_t\in\mathcal{S}$, and $\mathcal{S}=\mathcal{S}_1\times\mathcal{S}_2\times\ldots \mathcal{S}_n$. 
Specifically, the signal $s_{t}$ is generated based on a conditional likelihood function $l(\cdot|\theta^{\star})$, the $i$-th marginal of which is denoted $l_i(\cdot|\theta^{\star})$, and is available to agent $i$. The signal structure of each agent $i\in\mathcal{V}$ is thus characterized by a family of parameterized marginals $l_i=\{l_i(w_i|\theta): \theta\in\Theta, w_i\in\mathcal{S}_i\}$. We make certain standard assumptions \cite{jad1,jad2,liu,salami,shahin,lalitha,su1,mitraACC19,mitra2019new}: (i) The signal space of each agent $i$, namely $\mathcal{S}_i$, is finite. (ii) Each agent $i$ has knowledge of its local likelihood functions $\{l_i(\cdot|\theta_p)\}_{p=1}^{m}$, and it holds that $l_i(w_i|\theta) > 0, \forall w_i\in\mathcal{S}_i$, and $\forall \theta \in \Theta$. (iii) {The observation sequence of each agent is described by an i.i.d. random process over time; at each time-step, agents make independent observations.} (iv) There exists a fixed true state of the world $\theta^{\star}\in\Theta$ (unknown to the agents) that generates the observations of all the agents. The probability space for our model is denoted $(\Omega,\mathcal{F},\mathbb{P}^{\theta^{\star}})$, where $\Omega\triangleq\{\omega: \omega=(s_1,s_2,\ldots), \forall s_t\in\mathcal{S}, \forall t \in \mathbb{N}_{+}\}$, $\mathcal{F}$ is the $\sigma$-algebra generated by the observation profiles, and $\mathbb{P}^{\theta^{\star}}$ is the probability measure induced by sample paths in $\Omega$. Specifically, $\mathbb{P}^{\theta^{\star}}=\prod \limits_{t=1}^{\infty}l(\cdot|\theta^{\star})$. We will use the abbreviation a.s. to indicate almost sure occurrence of an event w.r.t. $\mathbb{P}^{\theta^{\star}}$.

The goal of each agent in the network is to eventually learn the true state  $\theta^{\star}$. However, the key challenge in achieving this objective arises from an \textit{identifiability problem} that each agent might potentially face. To make this precise, define $\Theta^{\theta^{\star}}_i\triangleq\{\theta\in\Theta : l_i(w_i|\theta)=l_i(w_i|\theta^{\star}), \forall w_i\in\mathcal{S}_i\}$. In words, $\Theta^{\theta^{\star}}_i$ represents the set of hypotheses that are \textit{observationally equivalent} to $\theta^{\star}$ from the perspective of agent $i$. Thus, if $|\Theta^{\theta^{\star}}_i| > 1$, it will be impossible for agent $i$ to uniquely learn the true state $\theta^*$ without interacting with its neighbors. 

Our broad \textbf{goal} in this paper is to develop distributed learning algorithms that resolve the identifiability problem described above despite  \textit{sparse} and \textit{imprecise} communication. {To this end, we will first separately explore the ideas of event-triggering for sparse communication, and adaptive quantization for imprecise communication, in Sections \ref{sec:algo} and \ref{sec:quantrule}, respectively. We do so to reveal in a clear, understandable way the main ideas underlying each of our algorithms. Later, in Section \ref{sec:quant_event}, we will see how these ideas can be effectively combined.} Let us begin by recalling the following definition from \cite{mitraACC19}.

\begin{definition}(\textbf{Source agents}) An agent $i$ is said to be a source agent for a pair of distinct hypotheses $\theta_p,\theta_q\in\Theta$ if it can distinguish between them, i.e., if  $D(l_i(\cdot|\theta_p)||l_i(\cdot|\theta_q)) > 0$, where $D(l_i(\cdot|\theta_p)||l_i(\cdot|\theta_q))$ represents the KL-divergence \cite{cover} between the distributions $l_i(\cdot|\theta_p)$ and $l_i(\cdot|\theta_q)$. The set of source agents for pair $(\theta_p,\theta_q)$ is denoted $\mathcal{S}(\theta_p,\theta_q)$.
\end{definition}

Throughout the rest of the paper, we will use $K_i(\theta_p,\theta_q)$
as a shorthand for $D(l_i(\cdot|\theta_p)||l_i(\cdot|\theta_q))$. {For our analysis, we will make the following standard assumption.

\begin{assumption} (\textbf{Global Identifiability}) For every pair $\theta_p,\theta_q \in \Theta$ such that $\theta_p \neq \theta_q$, the corresponding source set $\mathcal{S}(\theta_p,\theta_q)$ is non-empty. 
\label{assump:globiden}
\end{assumption}

Note that global identifiability implies  $\bigcap_{i\in\mathcal{V}} \Theta^{\theta^{\star}}_i = \{\theta^*\}$, i.e., the collective information dispersed across the network allows one to distinguish $\theta^*$ from every $\theta\neq \theta^*$.}
\color{black}{}

\section{An Event-Triggered Distributed Learning Rule}
\label{sec:algo}
$\bullet$ \textbf{Belief-Update Strategy}: In this section, we develop an event-triggered distributed learning rule that enables each agent to eventually learn the truth, despite infrequent information exchanges with its neighbors. Our approach requires each agent $i$ to maintain a local belief vector $\boldsymbol{\pi}_{i,t}$, and an actual belief vector $\boldsymbol{\mu}_{i,t}$, each of which are probability distributions over the hypothesis set $\Theta$, and hence of dimension $m$. While agent $i$ updates $\boldsymbol{\pi}_{i,t}$ in a Bayesian manner using only its private signals (see Eq. \eqref{eqn:Bayes}), to formally describe how it updates  $\boldsymbol{\mu}_{i,t}$, we need to first introduce some notation. Accordingly, let $\mathds{1} _{ji,t}(\theta)\in\{0,1\}$ be an indicator variable which takes on a value of 1 if and only if agent $j$  broadcasts $\mu_{j,t}(\theta)$ to agent $i$ at time $t$. Next, we define $\mathcal{N}_{i,t}(\theta)\triangleq\{j\in\mathcal{N}_i|\mathds{1}_{ji,t}(\theta)=1\}$ as the subset of agent $i$'s neighbors who broadcast their belief on $\theta$ to $i$ at time $t$. As part of our learning algorithm, each agent $i$ keeps track of the lowest belief on each hypothesis $\theta\in\Theta$ that it has heard up to any given instant $t$, denoted by $\bar{\mu}_{i,t}(\theta)$. More precisely,  $\bar{\mu}_{i,0}(\theta)=\mu_{i,0}(\theta)$, and $ \forall t\in\mathbb{N}$, 
\begin{equation}
    \bar{\mu}_{i,t+1}(\theta)=\min\{\bar{\mu}_{i,t}(\theta),\{\mu_{j,t+1}(\theta)\}_{j\in \{i\}\cup \mathcal{N}_{i,t+1}(\theta)}\}.
    \label{eq:mubardefn}
\end{equation}
We are now in position to describe the belief-update rule at each agent: $\boldsymbol{\pi}_{i,t}$ and $\boldsymbol{\mu}_{i,t}$ are initialized with $\pi_{i,0}(\theta)>0,\mu_{i,0}(\theta)>0, \forall \theta\in\Theta,\forall i\in\mathcal{V}$ (but otherwise arbitrarily), and subsequently updated as follows $\forall t  \in\mathbb{N}$: 
\begin{equation}
\pi_{i,t+1}(\theta)=\frac{l_i(s_{i,t+1}|\theta)\pi_{i,t}(\theta)}{\sum  \limits_{p=1}^{m} l_i(s_{i,t+1}|\theta_p)\pi_{i,t}(\theta_p)},
\label{eqn:Bayes}
\end{equation}
\begin{equation}
\mu_{i,t+1}(\theta)=\frac{\min\{\bar{\mu}_{i,t}(\theta),\pi_{i,t+1}(\theta)\}}{\sum\limits_{p=1}^{m}\min\{\bar{\mu}_{i,t}(\theta_p),\pi_{i,t+1}(\theta_p)\}}.
\label{eqn:update}
\end{equation}

\begin{algorithm}[t]
\caption{\textbf{(Event-Triggered Min-Rule)}  Each agent $i \in \mathcal{V}$ executes this algorithm in parallel} \label{algo:ETmin}
\textbf{Initialization:}  $\mu_{i,0}(\theta)>0$,  $\pi_{i,0}(\theta)>0$, $\bar{\mu}_{i,0}(\theta)=\mu_{i,0}(\theta),\forall \theta\in\Theta$, and  $\sum_{\theta\in\Theta}\mu_{i,0}(\theta)=1$, $\sum_{\theta\in\Theta}\pi_{i,0}(\theta)=1$.
\begin{algorithmic}[1]
\For {$t \in \mathbb{N}$} 
\For {$\theta\in\Theta$}

\State Update $\pi_{i,t+1}(\theta)$ via \eqref{eqn:Bayes}, and   $\mu_{i,t+1}(\theta)$ via  \eqref{eqn:update}. 

\If {$t+1=t_1$}

\State Broadcast $\mu_{i,t+1}(\theta)$ to each $j\in\mathcal{N}_i$.

\Else

\State For each $j\in\mathcal{N}_i$, broadcast $\mu_{i,t+1}(\theta)$ to $j$ if and only if $t+1\in\mathbb{I}$ \textit{and} the event condition  \eqref{eq:event} holds. 
\EndIf
\State Receive $\mu_{j,t+1}(\theta)$ from each $j\in\mathcal{N}_{i,t+1}(\theta)$, and update $\bar{\mu}_{i,t+1}(\theta)$ via \eqref{eq:mubardefn}. 
\EndFor
\EndFor
\end{algorithmic}
\end{algorithm}

$\bullet$  \textbf{Communication Strategy}: We now focus on specifying \textit{when} an agent broadcasts its belief on a given hypothesis to a neighbor. To this end, we first define a sequence $\mathbb{I}=\{t_1,t_2,t_3,\ldots\}\in\mathbb{N}_{+}$ of \textit{event-monitoring} time-steps, where $t_1=1$, and $t_{k+1}-t_{k}=g(k), \forall k\in\mathbb{N}_{+}.$ Here, $g:[1,\infty)\rightarrow [1,\infty)$ is a continuous, non-decreasing function that takes on integer values at integers. We will henceforth refer to $g(k)$ as the \textit{event-interval} function. At any given time $t\in\mathbb{N}_{+}$, let $\hat{\mu}_{ij,t}(\theta)$ represent agent $i$'s belief on $\theta$ the last time (excluding time $t$) it transmitted its belief on $\theta$ to agent $j$. Our communication strategy is as follows. At $t_1$, each agent $i\in\mathcal{V}$ broadcasts its entire belief vector $\boldsymbol{\mu}_{i,t}$ to every neighbor. Subsequently, at each $t_k\in\mathbb{I}, k\geq2$, $i$ transmits $\mu_{i,t_k}(\theta)$ to $j\in\mathcal{N}_i$ if and only if the following event occurs:
\begin{equation}
    \mu_{i,t_k}(\theta) < \gamma(t_k)  \min\{\hat{\mu}_{ij,t_k}(\theta), \hat{\mu}_{ji,t_k}(\theta)\},
\label{eq:event}
\end{equation}
where $\gamma:\mathbb{N}\rightarrow (0,1]$ is a non-increasing function that we will henceforth call the \textit{threshold} function. If $t\notin\mathbb{I}$, then an agent $i$ does not communicate with its neighbors at time $t$, i.e., all inter-agent interactions are restricted to time-steps in $\mathbb{I}$, subject to the trigger-condition given by \eqref{eq:event}. Notice that we have not yet specified the functional forms of $g(\cdot)$ and $\gamma(\cdot)$; we will comment on these quantities later in Section \ref{sec:results}.  

$\bullet$ \textbf{Summary}: At each time-step $t+1\in\mathbb{N}_{+}$, and for each hypothesis $\theta\in\Theta$, the sequence of operations executed by an agent $i$ is summarized as follows. (i) Agent $i$ updates its local and actual beliefs on $\theta$ via  \eqref{eqn:Bayes} and \eqref{eqn:update}, respectively. (ii) For each neighbor $j\in\mathcal{N}_i$, it decides whether or not to transmit $\mu_{i,t+1}(\theta)$ to $j$, and collects $\{\mu_{j,t+1}(\theta)\}_{ j\in\mathcal{N}_{i,t+1}(\theta)}$.\footnote{If $t+1\notin\mathbb{I}$, this step gets bypassed, and $\mathcal{N}_{i,t+1}(\theta)=\emptyset,\forall\theta\in\Theta$.} (iii) It updates $\bar{\mu}_{i,t+1}(\theta)$ via \eqref{eq:mubardefn} using the (potentially) new information it acquires from its neighbors at time $t+1$. We call the above algorithm the \texttt{Event-Triggered Min-Rule} and outline its steps in  Algorithm \ref{algo:ETmin}. 

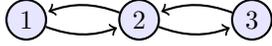
\begin{figure}[t]
\begin{center}
\begin{tikzpicture}
[->,shorten >=1pt,scale=.75,inner sep=1pt, minimum size=15pt, auto=center, node distance=3cm,
  thick, node/.style={circle, draw=black, thick},]
\tikzstyle{block1} = [rectangle, draw, fill=red!10, 
    text width=8em, text centered, rounded corners, minimum height=0.8cm, minimum width=1cm];
\node [circle, draw, fill=blue!10](n1) at (0,0)     (1)  {$1$};
\node [circle, draw, fill=blue!10](n2) at (2,0)     (2)  {$2$};
\node [circle, draw, fill=blue!10](n3) at (4,0)     (3)  {$3$};
\draw [->, thick] (1) to [bend right=20] (2);
\draw [->, thick] (2) to 
[bend right=20] (3);
\draw [->, thick] (2) to [bend right=20] (1);
\draw [->, thick] (3) to 
[bend right=20] (2);
\end{tikzpicture}
\end{center}
\caption{The figure shows a network where only agent 1 is informative. In Section \ref{sec:algo}, we design an event-triggered algorithm under which all upstream broadcasts along the path $3\rightarrow2\rightarrow1$ stop eventually almost surely. At the same time, all agents learn the true state. We demonstrate these facts both in theory (see Section \ref{sec:results}), and in simulations (see Section \ref{sec:simulations}).}
\label{fig:example}
\vspace{-3mm}
\end{figure}

$\bullet$ \textbf{Intuition:} The premise of our belief-update strategy is based on diffusing low beliefs on each false hypothesis. For a given false hypothesis $\theta$, the local Bayesian update \eqref{eqn:Bayes} will generate a decaying sequence $\pi_{i,t}(\theta)$ for each $i\in\mathcal{S}(\theta^*,\theta)$. Update rules \eqref{eq:mubardefn} and \eqref{eqn:update} then help propagate agent $i$'s low belief on $\theta$ to the rest of the network. We point out that in contrast to our earlier work \cite{mitraACC19, mitra2019new}, where for updating $\mu_{i,t+1}(\theta)$, agent $i$ used the lowest neighboring belief on $\theta$ at the \textit{previous} time-step $t$, our approach here requires an agent $i$ to use the lowest belief on $\theta$ that it has heard \textit{up to} time $t$, namely $\bar{\mu}_{i,t}(\theta)$. This modification will be crucial in the convergence analysis of Algorithm \ref{algo:ETmin}. 

To build intuition regarding our communication strategy, let us consider the network in Fig \ref{fig:example}. Suppose $\Theta=\{\theta_1,\theta_2\}, \theta^*=\theta_1$, and $\mathcal{S}(\theta_1,\theta_2)=1$, i.e., agent 1 is the only informative agent. Since our principle of learning is based on eliminating each false hypothesis, it makes sense to broadcast beliefs only if they are low enough. Based on this observation, one naive approach to enforce sparse communication could be to set a fixed low threshold, say $\beta$, and wait until beliefs fall below such a threshold to broadcast. While this might lead to sparse communication initially, in order to learn the truth, there must come a time beyond which the beliefs of all agents on the false hypothesis $\theta_2$  always stay below $\beta$, which will subsequently lead to dense communication. The obvious fix is to introduce an event-condition that is \textit{state-dependent}. Consider the following candidate strategy: an agent broadcasts its belief on a state $\theta$ only if it is sufficiently lower than what it was when it last broadcasted about $\theta$. While an improvement over the ``fixed-threshold" strategy, this new scheme has the following demerit: broadcasts are not \textit{agent-specific}. In other words, going back to our example, agent 2 (resp., agent 3) might transmit unsolicited information to agent 1 (resp., agent 2) - information, that agent 1 (resp., agent 2)  does not require. To remedy this, one can consider a request/poll based scheme as in \cite{switching} and \cite{de}, where an agent receives information from a neighbor only by polling that neighbor. However, now each time agent 2 needs information from agent 1, it needs to place a request, \textit{the request itself incurring extra communication}. 

Given the above issues, we ask: Is it possible to devise an event-triggered scheme that eventually stops unnecessary broadcasts from agent 3 to 2, and agent 2 to 1, while preserving  essential information flow from agent 1 to 2, and agent  2 to 3? \textit{More generally, we seek a triggering rule that can reduce transmissions from uninformative agents to informative agents.} This leads us to the event condition in Eq. \eqref{eq:event}. For each $\theta\in\Theta$, an agent $i$
 broadcasts $\mu_{i,t}(\theta)$ to a neighbor $j\in\mathcal{N}_i$ only if $\mu_{i,t}(\theta)$ has adequate ``innovation" w.r.t. $i$'s last broadcast about $\theta$ to $j$, \textit{and} $j$'s last broadcast about $\theta$ to $i$. A decreasing threshold function $\gamma(t)$ makes it progressively harder to satisfy the event condition in Eq. \eqref{eq:event}, demanding more innovation to merit broadcast as time progresses.\footnote{We will see later  (Corollary  \ref{prop:tree}) that for the network in Fig. \ref{fig:example}, this scheme provably stops communications from agent 3 to 2, and agent 2 to 1, eventually.} The rationale behind checking the event condition only at time-steps in $\mathbb{I}$ is twofold.\footnote{While this might appear similar to the Periodic Event-Triggering (PETM) framework \cite{PETM} where events are checked periodically, the sequence $\mathbb{I}$ can be significantly more general than a simple periodic sequence.} First, it saves computations since the event condition need not be checked all the time. Second, and more importantly, it provides an additional instrument to control communication-sparsity on top of event-triggering. Indeed, a monotonically increasing event-interval function $g(\cdot)$ implies fewer agent interactions with time, since all potential broadcasts are restricted to $\mathbb{I}$. In particular, without the event condition in Eq. \eqref{eq:event}, our communication strategy would boil down to a simple time-triggered rule, akin to the one studied in our recent work \cite{mitraCDC19}.
 
 We close this section by highlighting that our event condition (i) is \textit{$\theta$-specific}, since an agent may not be equally informative about all states\footnote{This is precisely the motivation behind tracking changes in individual components of the belief vector, as opposed to looking at changes in the overall belief vector using, for instance, the total variation metric.}; (ii) is \textit{neighbor-specific}, since not all neighbors might require information;
 (iii) is  \textit{problem-specific}, since it is built upon the principle of eliminating false hypotheses by diffusing low beliefs; and 
 (iv) can be checked using local information only. {While the event condition \eqref{eq:event} can significantly reduce communication (as we shall see in the next section), checking this condition imposes additional \textit{memory requirements} for each agent: in addition to maintaining the vectors $\boldsymbol{\pi}_{i,t}, \boldsymbol{\mu}_{i,t}$, and $\boldsymbol{\bar{\mu}}_{i,t}$, each agent $i$ has to maintain a vector $\boldsymbol{\hat{\mu}}_{ji,t}$ for each neighbor $j \in \mathcal{N}_i$. Recall that  ${\hat{\mu}}_{ji,t}(\theta)$ stores the most recent belief on $\theta$ that $i$ has received from $j$. Thus, overall, each agent $i$ needs to maintain and dynamically update $(|\mathcal{N}_i|+3)$ $m$-dimensional vectors. Note that this memory overhead need not necessarily scale with the size of the network (e.g., in sparse or bounded-degree graphs).}

\section{Theoretical Guarantees for Algorithm \ref{algo:ETmin}}
\label{sec:results}
In this section, we state the main theoretical results pertaining to our \texttt{Event-Triggered Min-Rule}, and then discuss their implications. Proofs of these results are deferred to Appendix  \ref{sec:proofs}. To state the first result concerning the convergence of our learning rule, let us define $G(z) \triangleq \int\limits_{1}^{z} g(\tau)d\tau, \forall z\in [1,\infty)$. Let $G^{-1}(\cdot)$ represent the inverse of $G(\cdot)$, i.e., $\forall z\in[1,\infty), G^{-1}(G(z))=z$. Since $g(\cdot)$ is continuous and takes values in $[1,\infty)$ by definition, $G(\cdot)$ is strictly increasing, unbounded, and continuous, with $G(1)=0$, and hence, $G^{-1}(z)$ is well-defined for all $z\in [0,\infty)$. 

\begin{theorem}
Suppose the functions $g(\cdot)$ and $\gamma(\cdot)$ satisfy:
\begin{equation} \lim_{t\to\infty}\frac{G(G^{-1}(t)-2)}{t}=\alpha\in(0,1]; \hspace{2mm} \lim_{t\to\infty}\frac{\log({1}/{\gamma(t)})}{t}=0.
\label{eqn:functions}
\end{equation}
Furthermore, suppose global identifiability (Assumption \ref{assump:globiden}) holds, and the communication graph $\mathcal{G}$ is connected. Then, Algorithm \ref{algo:ETmin} guarantees the following. 
\begin{itemize}
    \item \textbf{(Consistency)}: For each agent  $i\in\mathcal{V}$,  $\mu_{i,t}(\theta^{\star}) \rightarrow 1$ a.s.
    \item \textbf{(Exponentially Fast Rejection of False Hypotheses)}: For each agent  $i\in\mathcal{V}$, and for each false hypothesis $\theta\in\Theta\setminus\{\theta^{\star}\},$ the following holds:
    \begin{equation}
        \liminf_{t\to\infty}-\frac{\log\mu_{i,t}(\theta)}{t} \geq \max_{v\in\mathcal{S}(\theta^{\star},\theta)}\alpha^{d(v,i)}K_v(\theta^{\star},\theta) \hspace{1mm} a.s.
        \label{eqn:asymprate}
    \end{equation}
\end{itemize}
\label{thm:main}
\end{theorem}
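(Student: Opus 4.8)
The plan is to prove the two claims in a specific order: first establish exponentially fast rejection at the \emph{source} agents for each false hypothesis, then propagate this rejection through the network along shortest paths, and finally derive consistency as a consequence. Throughout, fix a false hypothesis $\theta \in \Theta \setminus \{\theta^\star\}$.

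\emph{Step 1 (Source agents).} For any $v \in \mathcal{S}(\theta^\star,\theta)$, the local Bayesian update \eqref{eqn:Bayes} applied to the log-ratio $\log(\pi_{v,t}(\theta^\star)/\pi_{v,t}(\theta))$ telescopes into a sum of i.i.d. log-likelihood-ratio increments. By the strong law of large numbers, $-\tfrac{1}{t}\log \pi_{v,t}(\theta) \to K_v(\theta^\star,\theta)$ a.s. (using that $\pi_{v,t}(\theta^\star)$ stays bounded below by a positive constant along the paths where $\theta^\star$ is not rejected — a fact I would need to bootstrap, or alternatively argue directly that $\log \pi_{v,t}(\theta^\star)/t \to 0$ a.s. via a standard argument). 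Because $\mu_{v,t+1}(\theta) \le \min\{\bar\mu_{v,t}(\theta), \pi_{v,t+1}(\theta)\} / (\text{normalizer} \le 1)$, the source agent's actual belief on $\theta$ inherits at least the rate $K_v(\theta^\star,\theta)$.

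\emph{Step 2 (Propagation along shortest paths).} This is the crux and the main obstacle. I want to show: if a neighbor $j$ of $i$ achieves $\liminf_t -\tfrac{1}{t}\log\mu_{j,t}(\theta) \ge \rho$, then $i$ achieves $\liminf_t -\tfrac{1}{t}\log\mu_{i,t}(\theta) \ge \alpha\rho$. The obstacle is that $j$ may stop broadcasting $\mu_{j,t}(\theta)$ to $i$ once the event condition \eqref{eq:event} fails to trigger. The key observation is that the event condition triggers \emph{whenever} $\mu_{j,t_k}(\theta)$ drops by a factor $\gamma(t_k)$ below the last transmitted value; since $\mu_{j,t}(\theta) \to 0$, agent $j$ \emph{must} keep transmitting infinitely often along $\mathbb{I}$, and each transmitted value is at most $\gamma(t_k)$ times the previous one divided by... — more precisely, consecutive transmission times $t_{k_\ell}$ satisfy $\mu_{j,t_{k_{\ell+1}}}(\theta) < \gamma(t_{k_{\ell+1}}) \mu_{j, t_{k_\ell}}(\theta)$, so between the $\ell$-th broadcast (near ``time'' $\tau_\ell$ in the reparametrized index) and the current time, $\bar\mu_{i,t}(\theta)$ has absorbed a value that is small. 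The condition $\lim_{t\to\infty} \log(1/\gamma(t))/t = 0$ ensures the threshold erosion is subexponential, and the condition $\lim_{t\to\infty} G(G^{-1}(t)-2)/t = \alpha$ quantifies how the event-monitoring sparsity stretches time: a broadcast that ``should'' have happened around real time $s$ is delayed to around real time $G^{-1}(G(s)+O(1))$, and the factor $\alpha$ is exactly the asymptotic ratio accounting for this delay. I would make this precise by tracking, for each $t$, the most recent time $r(t) \le t$ at which $j$ broadcast to $i$, showing $G(r(t)) \ge G(t) - O(1)$ hence $r(t)/t \to \alpha$ (using monotonicity of $g$), and concluding $-\tfrac{1}{t}\log\bar\mu_{i,t}(\theta) \ge -\tfrac{1}{t}\log\mu_{j,r(t)}(\theta) - o(1) \ge \alpha\rho - o(1)$ along the relevant subsequence, then extending to all $t$ by the non-increasing nature of $\bar\mu_{i,\cdot}(\theta)$. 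Since $\mu_{i,t+1}(\theta) \le \bar\mu_{i,t}(\theta)/(\text{normalizer})$, the same rate passes to $\mu_{i,t}(\theta)$, provided the normalizer $\sum_p \min\{\bar\mu_{i,t}(\theta_p),\pi_{i,t+1}(\theta_p)\}$ is bounded below — which follows once we know $\mu_{i,t}(\theta^\star)$ does not vanish, handled in Step 3 or by a simultaneous induction.

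\emph{Step 3 (Iterate and conclude).} Iterating Step 2 along a shortest path from $v$ to $i$ of length $d(v,i)$ yields $\liminf_t -\tfrac{1}{t}\log\mu_{i,t}(\theta) \ge \alpha^{d(v,i)} K_v(\theta^\star,\theta)$; taking the max over $v \in \mathcal{S}(\theta^\star,\theta)$ (nonempty by hypothesis (i)) gives \eqref{eqn:asymprate}. For consistency, since $\alpha \in (0,1]$ and each $K_v(\theta^\star,\theta) > 0$, every false hypothesis $\theta$ satisfies $\mu_{i,t}(\theta) \to 0$ a.s.; as $\sum_{\theta\in\Theta}\mu_{i,t}(\theta) = 1$ and $\Theta$ is finite, $\mu_{i,t}(\theta^\star) \to 1$ a.s. To close the circularity with the normalizer bound, I would run the whole argument as an induction on graph distance from the source set with the inductive hypothesis that $\liminf_t \mu_{i,t}(\theta^\star) > 0$ a.s., verifying the base case at source agents directly and noting that at each stage the normalizers are bounded below by (eventually) the $\theta^\star$-component, which stays uniformly positive because all \emph{other} components are being driven down at a strictly positive exponential rate.
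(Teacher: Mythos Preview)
Your overall architecture---source agents first, then propagation along shortest paths, then consistency---matches the paper's. However, your Step~2 has a genuine gap in how you handle the event condition.

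You write that ``the event condition triggers whenever $\mu_{j,t_k}(\theta)$ drops by a factor $\gamma(t_k)$ below the last transmitted value'' and plan to track the broadcast times $r(t)$. But the event condition \eqref{eq:event} is \emph{two-sided}: $j$ broadcasts to $i$ only if $\mu_{j,t_k}(\theta) < \gamma(t_k)\min\{\hat\mu_{ji,t_k}(\theta),\hat\mu_{ij,t_k}(\theta)\}$. Thus $j$ may fail to broadcast even when $\mu_{j,t_k}(\theta)$ is far below $\hat\mu_{ji,t_k}(\theta)$, provided $\hat\mu_{ij,t_k}(\theta)$ (the last thing $i$ sent to $j$) is already small. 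Your claim that ``$j$ must keep transmitting infinitely often'' therefore requires a separate argument, and your inequality $\mu_{j,t_{k_{\ell+1}}}(\theta) < \gamma(t_{k_{\ell+1}})\mu_{j,t_{k_\ell}}(\theta)$ between consecutive broadcast times is not guaranteed---the min in \eqref{eq:event} could be achieved by the $i\to j$ term.

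The paper sidesteps this entirely. At \emph{every} event-monitoring time $t_k$, regardless of whether a broadcast occurs, one gets a useful bound. If $j$ broadcasts, then $\bar\mu_{i,t_k}(\theta)\le\mu_{j,t_k}(\theta)$ directly. If $j$ does \emph{not} broadcast, the failure of \eqref{eq:event} forces either $\hat\mu_{ji,t_k}(\theta)\le\mu_{j,t_k}(\theta)/\gamma(t_k)$ or $\hat\mu_{ij,t_k}(\theta)\le\mu_{j,t_k}(\theta)/\gamma(t_k)$; in either case $i$ has already seen (or itself produced and stored in $\bar\mu_{i,\cdot}$) a belief on $\theta$ of at most $\mu_{j,t_k}(\theta)/\gamma(t_k)$, so $\bar\mu_{i,t}(\theta)\le\mu_{j,t_k}(\theta)/\gamma(t_k)$ for all $t\ge t_k$. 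Since $\log(1/\gamma(t))/t\to 0$, this extra factor is asymptotically harmless. The factor $\alpha$ then enters purely from the spacing of $\mathbb{I}$: for a generic $t$, the last $t_k\in\mathbb{I}$ below $t$ satisfies $t_k/t\to\alpha$, which is exactly what $\lim_{t\to\infty} G(G^{-1}(t)-2)/t=\alpha$ encodes. (Here $G$ maps the \emph{index} $k$ to the time $t_k\approx G(k)$, so your expression ``$G(r(t))$'' with $r(t)$ a time is a category error.)

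A smaller point: the normalizer lower bound---equivalently, $\bar\mu_{i,t}(\theta^\star)\ge\eta(\omega)>0$ eventually---is established in the paper \emph{before} any propagation argument, directly from the structure of the min-rule and the almost-sure positivity and eventual uniform lower bound of $\pi_{i,t}(\theta^\star)$. No simultaneous induction on graph distance is needed; you can and should decouple this from the rate analysis.
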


At this point, it is natural to ask: For what classes of functions $g(\cdot)$ does the result of Theorem \ref{thm:main} hold? The following result provides an answer. 

\begin{corollary}
Suppose the conditions in Theorem \ref{thm:main} hold.
\begin{itemize}
    \item[(i)]  Suppose $g(x)=x^p, \forall x\in\mathbb{R}_{+}$, where $p$ is any positive integer. Then, for each $\theta\in\Theta\setminus\{\theta^{\star}\}$, and $i\in\mathcal{V}$:
    \begin{equation}
        \liminf_{t\to\infty}-\frac{\log\mu_{i,t}(\theta)}{t} \geq \max_{v\in\mathcal{S}(\theta^{\star},\theta)}K_v(\theta^{\star},\theta) \hspace{1mm} a.s.
        \label{eqn:corr1}
    \end{equation}
\item[(ii)] Suppose $g(x)=p^x, \forall x\in\mathbb{R}_{+}$, where $p$ is any positive integer. Then, for each $\theta\in\Theta\setminus\{\theta^{\star}\}$, and $i\in\mathcal{V}$:
\begin{equation}
        \liminf_{t\to\infty}-\frac{\log\mu_{i,t}(\theta)}{t} \geq \max_{v\in\mathcal{S}(\theta^{\star},\theta)} \frac{K_v(\theta^{\star},\theta)}{p^{2d(v,i)}} \hspace{1mm} a.s.
        \label{eqn:corr2}
    \end{equation}
\end{itemize}
\label{corr:functionalform}
\end{corollary}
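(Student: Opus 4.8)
The plan is to treat Corollary \ref{corr:functionalform} as a direct verification of the hypotheses of Theorem \ref{thm:main} for two explicit families of event-interval functions, followed by a mechanical substitution into the rate bound \eqref{eqn:asymprate}. For each prescribed $g(\cdot)$ I would (a) check that it has the structural properties demanded in Section \ref{sec:algo} — continuous, non-decreasing, valued in $[1,\infty)$, and integer-valued at integers; (b) compute the constant $\alpha$ defined by the first limit in \eqref{eqn:functions}; and (c) plug this $\alpha$ into \eqref{eqn:asymprate}. Step (a) is immediate in both cases, since $x\mapsto x^p$ and $x\mapsto p^x$ (with $p\in\mathbb{N}_+$) are smooth, strictly increasing on $[1,\infty)$, bounded below by $1$ there, and map integers to integers. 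The second limit in \eqref{eqn:functions}, which constrains $\gamma(\cdot)$, is inherited verbatim from the assumption ``the conditions in Theorem \ref{thm:main} hold,'' so nothing separate is needed there. The only real content is step (b).

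For $g(x)=x^p$, I would compute $G(z)=\int_1^z \tau^p\,d\tau=(z^{p+1}-1)/(p+1)$, invert it to obtain $G^{-1}(t)=\big((p+1)t+1\big)^{1/(p+1)}$, and then evaluate
\[
G\big(G^{-1}(t)-2\big)=\frac{\big(G^{-1}(t)-2\big)^{p+1}-1}{p+1}.
\]
Writing $u=G^{-1}(t)=\big((p+1)t+1\big)^{1/(p+1)}$, a binomial expansion gives $(u-2)^{p+1}=u^{p+1}-2(p+1)u^{p}+O(u^{p-1})=(p+1)t+1+O\big(t^{p/(p+1)}\big)$, where I used $u^{p+1}=(p+1)t+1$ and $u^{p}=O\big(t^{p/(p+1)}\big)=o(t)$. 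Hence $G\big(G^{-1}(t)-2\big)=t+o(t)$, so $\alpha=1$; substituting $\alpha=1$ into \eqref{eqn:asymprate} collapses $\alpha^{d(v,i)}$ to $1$ for every source $v$ and yields \eqref{eqn:corr1}.

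For $g(x)=p^x$ with $p\ge 2$, I would compute $G(z)=\int_1^z p^{\tau}\,d\tau=(p^{z}-p)/\ln p$, invert to $G^{-1}(t)=\log_p\big(t\ln p+p\big)$, and then exploit the fact that here the shift by $2$ inside $G^{-1}$ acts multiplicatively rather than negligibly:
\[
G\big(G^{-1}(t)-2\big)=\frac{p^{\,G^{-1}(t)-2}-p}{\ln p}=\frac{(t\ln p+p)/p^{2}-p}{\ln p}=\frac{t}{p^{2}}+\frac{p-p^{3}}{p^{2}\ln p},
\]
so $\alpha=1/p^{2}$. The degenerate case $p=1$ gives $g\equiv 1$, $G(z)=z-1$, $G\big(G^{-1}(t)-2\big)=t-2$, hence $\alpha=1$, which is consistent with the formula $\alpha=1/p^{2}$. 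Substituting $\alpha=p^{-2}$ into \eqref{eqn:asymprate} gives $\alpha^{d(v,i)}=p^{-2d(v,i)}$, i.e.\ exactly \eqref{eqn:corr2}. In both families $\alpha\in(0,1]$, so the hypothesis $\alpha\in(0,1]$ of Theorem \ref{thm:main} is satisfied.

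The one step that requires genuine care — and the only place I expect any subtlety — is the asymptotic analysis in the polynomial case: one must justify that replacing $G^{-1}(t)$ by $G^{-1}(t)-2$ before re-applying $G$ perturbs the output by only $o(t)$. This rests on the sub-linear growth $G^{-1}(t)=\Theta\big(t^{1/(p+1)}\big)$, which makes the leading correction term $u^{p}$ in the binomial expansion $o(t)$; everything else in the corollary is a closed-form computation followed by a one-line substitution.
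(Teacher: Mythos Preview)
Your proposal is correct and follows exactly the approach the paper takes: the paper's proof is a two-line remark stating that one directly computes the limit in \eqref{eqn:functions} to obtain $\alpha=1$ in case (i) and $\alpha=1/p^{2}$ in case (ii), and your write-up simply supplies those computations in full. There is no methodological difference to discuss.
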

\begin{proof}
The proof follows by directly computing the limit in Eq. \eqref{eqn:functions}. For case (i), $\alpha=1$, and for case (ii),  $\alpha=1/p^2$. 
\end{proof}

Clearly, the communication pattern between the agents is at least as sparse as the sequence $\mathbb{I}$. Our event-triggering scheme introduces further sparsity, as we next establish. 

\begin{proposition}
Suppose the conditions in Theorem \ref{thm:main} are met. Then, there exists $\bar{\Omega}\subseteq\Omega$ such that $\mathbb{P}^{\theta^*}(\bar{\Omega})=1$, and for each $\omega\in\bar{\Omega}$, $\exists T_1(\omega), T_2(\omega) < \infty$ such that the following hold.
\begin{itemize}
    \item[(i)] At each $t_k\in\mathbb{I}$ such that $t_k > T_1(\omega)$, $\mathds{1}_{ij,t_k}(\theta^*)\neq 1, \forall i\in\mathcal{V}$ and $\forall j\in\mathcal{N}_i$. 
    \item[(ii)] For all $\theta\neq\theta^*$, and $i \notin \mathcal{S}(\theta^*,\theta)$, it holds that at each $t_k > T_2(\omega)$, $\exists j\in \mathcal{N}_i$ such that $\mathds{1}_{ij,t_k}(\theta)\neq 1$.\footnote{In this claim, $j$ might depend on $t_k$.} 
\end{itemize}
\label{prop:sparsity}
\end{proposition}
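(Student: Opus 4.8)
\textbf{Proof plan for Proposition \ref{prop:sparsity}.}

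The plan is to exploit two facts established (or readily derivable from the machinery) in Theorem \ref{thm:main}: first, that $\mu_{i,t}(\theta^*)\to 1$ a.s. for every agent, and second, that for a false hypothesis $\theta$, a \emph{non-source} agent $i\notin\mathcal{S}(\theta^*,\theta)$ still has some neighbor $j$ along a shortest path to a source agent $v\in\mathcal{S}(\theta^*,\theta)$ whose actual belief $\mu_{j,t}(\theta)$ decays strictly faster (in exponential rate) than $\mu_{i,t}(\theta)$ does. The key structural observation underlying both parts is that the quantities $\hat\mu_{ij,t}(\theta)$ and $\hat\mu_{ji,t}(\theta)$ appearing in the trigger condition \eqref{eq:event} are, by definition, the values of $\mu_{i,t}$ (resp. $\mu_{j,t}$) at \emph{past} broadcast instants, hence are lower-bounded by $\min_{\theta\in\Theta}\mu_{i,0}(\theta)>0$ if no broadcast has occurred, and more generally track past belief values; meanwhile $\gamma(t_k)\le 1$. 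So a broadcast of $\mu_{i,t_k}(\theta)$ to $j$ requires $\mu_{i,t_k}(\theta)$ to be strictly below a positive quantity related to earlier beliefs.

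For part (i): on the a.s. event that $\mu_{i,t}(\theta^*)\to 1$ for all $i\in\mathcal{V}$ (a finite set), pick $\omega$ in this event and choose $T_1(\omega)$ large enough that $\mu_{i,t}(\theta^*) > 1/2$ for all $i$ and all $t>T_1(\omega)$; one may even take $1/2$ replaced by any constant exceeding $\max_i \mu_{i,0}(\theta^*)$ is not needed — the point is that after $T_1$ the belief on $\theta^*$ is bounded below by a fixed constant $c>0$, while $\gamma(t_k)\le 1$ and $\min\{\hat\mu_{ij,t_k}(\theta^*),\hat\mu_{ji,t_k}(\theta^*)\}\le 1$ trivially; more carefully, since the last broadcast values $\hat\mu_{ij,t_k}(\theta^*)$ are themselves at most $1$, and after $T_1$ we have $\mu_{i,t_k}(\theta^*)>c$, I need the right-hand side of \eqref{eq:event} to eventually drop below $c$. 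This forces me to also use the second hypothesis in \eqref{eqn:functions}, namely $\log(1/\gamma(t))/t\to 0$, together with the fact that $\hat\mu_{ij,t_k}(\theta^*)$ and $\hat\mu_{ji,t_k}(\theta^*)$ are bounded \emph{above} by quantities that, if further broadcasts kept occurring, would shrink geometrically — leading to a contradiction: if infinitely many broadcasts of $\theta^*$ occurred from $i$ to $j$, the sequence of broadcast values would be strictly decreasing by a factor $\gamma(t_k)\le 1$ at each step and in fact, by \eqref{eq:event} applied at consecutive broadcast times, would decay to $0$, contradicting $\mu_{i,t}(\theta^*)\to 1$. Hence only finitely many such broadcasts occur for each of the finitely many ordered pairs $(i,j)$, and $T_1(\omega)$ is the max over these.

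For part (ii): fix a false $\theta$ and a non-source agent $i\notin\mathcal{S}(\theta^*,\theta)$. Let $v\in\mathcal{S}(\theta^*,\theta)$ be a source agent achieving (or contributing to) the max in \eqref{eqn:asymprate}, and let $j$ be the neighbor of $i$ on a shortest path from $i$ to $v$, so $d(v,j)=d(v,i)-1$. From Theorem \ref{thm:main}'s rate bound, $\liminf_{t\to\infty} -\log\mu_{j,t}(\theta)/t \ge \alpha^{d(v,j)}K_v(\theta^*,\theta)$, which is strictly larger than the \emph{upper} rate one can show for $\mu_{i,t}(\theta)$ when $i$ is a non-source — here I would need an upper bound $\limsup -\log\mu_{i,t}(\theta)/t \le \alpha^{d(v,i)}K_v(\theta^*,\theta)$ or at least $\le \alpha^{d(v,j)}K_v(\theta^*,\theta)$ with a strict gap (since $\alpha^{d(v,i)} < \alpha^{d(v,j)}$ when $\alpha<1$; when $\alpha=1$ a more delicate argument via the $\bar\mu$ propagation and the $\gamma$-factor accumulated innovations is needed, so this is the main obstacle). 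Granting the gap, if $i$ broadcast $\mu_{i,t_k}(\theta)$ to \emph{every} neighbor infinitely often, then in particular it would broadcast to $j$ at an infinite sequence of times $t_{k_\ell}$; but $j$'s belief $\mu_{j,t}(\theta)$, being heard by $i$ via \eqref{eq:mubardefn} and feeding \eqref{eqn:update}, would eventually be much smaller than $\mu_{i,t}(\theta)$, and propagating $\bar\mu_{i,t}(\theta)\le$ (recent $\mu_{j,\cdot}(\theta)$ values) back into \eqref{eqn:update} would pull $\mu_{i,t}(\theta)$ down to match $j$'s faster rate — contradicting the assumed slow rate, or more directly, the trigger value $\hat\mu_{ji,t_k}(\theta)$ on the right of \eqref{eq:event} would be so small (reflecting $j$'s fast decay) that $\mu_{i,t_k}(\theta) < \gamma(t_k)\hat\mu_{ji,t_k}(\theta)$ fails for all large $k$. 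Taking $T_2(\omega)$ as the threshold beyond which this failure is guaranteed, and intersecting over the finitely many pairs $(\theta,i)$, completes the argument.

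\textbf{Main obstacle.} The delicate point is part (ii) in the regime $\alpha=1$ (e.g. $g(x)=x^p$), where the source and non-source agents share the \emph{same} leading exponential rate $K_v(\theta^*,\theta)$; there I expect to need the finer structure of the proof of Theorem \ref{thm:main} — specifically that the $\bar\mu$-based update \eqref{eq:mubardefn}–\eqref{eqn:update} makes a non-source agent's belief lag its informed neighbor's by an amount governed by the accumulated $\gamma$-innovations and the event-interval spacing, so that the \emph{ratio} $\mu_{i,t_k}(\theta)/\hat\mu_{ji,t_k}(\theta)$ still eventually exceeds $\gamma(t_k)$. Establishing this lag rigorously (rather than the crude rate-gap argument that suffices when $\alpha<1$) is where the real work lies.
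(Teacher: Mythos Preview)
Your argument for part (i) is correct and is essentially the paper's: assuming infinitely many broadcasts from $i$ to $j$ on $\theta^*$ and chaining the trigger condition along the subsequence of broadcast times forces $\mu_{i,t_{p_k}}(\theta^*)$ to be strictly decreasing (and, when $\gamma(t_{p_0})<1$, geometrically so), contradicting $\mu_{i,t}(\theta^*)\to 1$.

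Your plan for part (ii), however, has a genuine gap, and the obstacle you identify at the end is in fact fatal to the approach rather than a technicality. You fix a single neighbor $j$ (the one on a shortest path to a source $v$) and try to compare the \emph{rates} of $\mu_{i,t}(\theta)$ and $\hat\mu_{ji,t}(\theta)$. But Theorem~\ref{thm:main} only supplies \emph{lower} bounds on $-\log\mu_{\cdot,t}(\theta)/t$; nowhere is an upper bound of the form $\limsup -\log\mu_{i,t}(\theta)/t \le \alpha^{d(v,i)}K_v(\theta^*,\theta)$ established, and in fact there is no reason it should hold: once $j$ broadcasts to $i$, the update \eqref{eq:mubardefn}--\eqref{eqn:update} drags $\mu_{i,\cdot}(\theta)$ down toward $j$'s level, so $i$'s rate can match $j$'s, wiping out the gap you need. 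When $\alpha=1$ (polynomial $g$, Corollary~\ref{corr:functionalform}(i)) the rate bounds for $i$ and $j$ are identical anyway, so the argument cannot even start. More concretely, to show the trigger \eqref{eq:event} fails for your fixed $j$ you need $\mu_{i,t_k}(\theta)\ge \gamma(t_k)\hat\mu_{ji,t_k}(\theta)$; but from \eqref{eqn:update} you only get $\mu_{i,t_k}(\theta)\ge \bar\mu_{i,t_k-1}(\theta)$, and $\bar\mu_{i,t_k-1}(\theta)$ is a minimum over \emph{all} neighbors' past broadcasts, so it can be strictly smaller than $\hat\mu_{ji,t_k}(\theta)$ whenever some other neighbor $j'\neq j$ delivered a lower value more recently.

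The paper sidesteps rates entirely with a purely local, structural argument that crucially lets $j$ depend on $t_k$ (this is exactly why the footnote is there). It defines the subsequence $\{t_{p_k}\}\subset\mathbb{I}$ of times at which $\bar\mu_{i,\cdot}(\theta)$ strictly drops, and at each such time lets $j_{p_k}\in\mathcal{N}_i$ be the neighbor whose broadcast caused the drop (the argmin in \eqref{eq:mubardefn}); one checks $j_{p_k}\neq i$ because $i\notin\mathcal{S}(\theta^*,\theta)$ forces $\pi_{i,t}(\theta)$ to stay bounded away from zero, so $i$ alone cannot lower $\bar\mu_{i,\cdot}(\theta)$. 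Then on the entire interval $(t_{p_k},t_{p_{k+1}}]$ one has simultaneously $\mu_{i,t}(\theta)\ge \bar\mu_{i,t_{p_k}}(\theta)=\mu_{j_{p_k},t_{p_k}}(\theta)$ (since $\bar\mu_i$ is flat there and $\pi_{i,t}(\theta)$ dominates) and $\hat\mu_{j_{p_k}i,t}(\theta)=\mu_{j_{p_k},t_{p_k}}(\theta)$ (since any further broadcast from $j_{p_k}$ would have caused another drop), so $\mu_{i,t}(\theta)\ge \hat\mu_{j_{p_k}i,t}(\theta)\ge \gamma(t)\hat\mu_{j_{p_k}i,t}(\theta)$ and \eqref{eq:event} fails for the pair $(i,j_{p_k})$. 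No rate comparison is needed, and the argument is uniform in $\alpha$.
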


The following result is an immediate application of the above proposition.

\begin{corollary}
Suppose the conditions in Theorem \ref{thm:main} are met. Additionally, suppose $\mathcal{G}$ is a tree graph, and for each pair $\theta_p,\theta_q \in \Theta$, $|\mathcal{S}(\theta_p,\theta_q)|=1$. Consider any $\theta\neq\theta^*$, and let $\mathcal{S}(\theta^*,\theta)=v_\theta$. Then, each agent $i\in \mathcal{V}\setminus\{v_\theta\}$ stops broadcasting its belief on $\theta$ to its parent in the tree rooted at $v_{\theta}$ eventually almost surely. 
\label{prop:tree}
\end{corollary}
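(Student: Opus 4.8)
The plan is to derive Corollary~\ref{prop:tree} directly from Proposition~\ref{prop:sparsity}(ii) by exploiting the extra structure assumed here: $\mathcal{G}$ is a tree and every hypothesis pair has a unique source. Fix a false hypothesis $\theta \neq \theta^\star$ and let $v_\theta$ be the unique source agent in $\mathcal{S}(\theta^\star,\theta)$. Root the tree at $v_\theta$; this makes the ``parent'' of each agent $i \neq v_\theta$ well-defined, and since the tree is connected there is a unique path from $v_\theta$ to $i$, with the parent being the neighbor of $i$ on that path. The quantity I want to control for each agent $i \neq v_\theta$ is whether $\mathds{1}_{i,\mathrm{par}(i),t_k}(\theta) = 1$ for infinitely many $t_k \in \mathbb{I}$, and I want to show this fails a.s.

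First I would invoke Proposition~\ref{prop:sparsity}(ii): on a probability-one event $\bar\Omega$, for every $\omega \in \bar\Omega$ there is $T_2(\omega) < \infty$ such that for each $t_k > T_2(\omega)$ and each $i \notin \mathcal{S}(\theta^\star,\theta) = \{v_\theta\}$, there exists \emph{some} neighbor $j \in \mathcal{N}_i$ with $\mathds{1}_{ij,t_k}(\theta) \neq 1$. The gap between this statement and the corollary is that the non-transmitting neighbor $j$ guaranteed by the proposition need not be $\mathrm{par}(i)$, and it may vary with $t_k$. To close this gap I would argue by induction on the depth (distance from $v_\theta$) of agent $i$ in the rooted tree, reasoning as follows. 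Consider a leaf-ward path and note that in a tree, if an agent $i$ at depth $\ell$ does not transmit its belief on $\theta$ to its parent at some time, the ``low belief'' $\bar\mu$ information that would normally propagate toward $v_\theta$ is blocked along that edge at that instant. The cleaner route, though, is to induct \emph{outward} from $v_\theta$: I would show that for $i$ at depth $1$ (a neighbor of $v_\theta$), the only neighbor that could be the ``silent'' neighbor furnished by the proposition eventually must be $\mathrm{par}(i) = v_\theta$, because the subtree hanging below $i$ (away from $v_\theta$) contains no source for $(\theta^\star,\theta)$, so by an inductive hypothesis those children of $i$ already stop broadcasting $\theta$ to $i$; once a child $c$ of $i$ stops transmitting $\mu_c(\theta)$ to $i$ after some finite time, it can no longer be the silent neighbor ``$j$'' in the proposition's claim for $i$ (the claim's $j$ must be one with $\mathds{1}_{ij,t_k}(\theta)\neq1$, which is automatically true of silent children, so in fact I need the contrapositive reading). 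Let me restate the intended logic precisely: the proposition says \emph{at least one} neighbor is silent toward $i$; I want to conclude $\mathrm{par}(i)$ is eventually silent. So I need to rule out that, infinitely often, $\mathrm{par}(i)$ is the \emph{transmitting} neighbor while all silence is ``used up'' by children. The induction hypothesis should instead directly give: each child $c$ of $i$ eventually stops transmitting $\mu_c(\theta)$ to $i$ (that is exactly the corollary's conclusion for $c$, since $c$'s parent is $i$). Hence after a finite time $T_2(\omega)$ and after all children of $i$ have gone silent toward $i$, the proposition's guaranteed silent neighbor of $i$ can only be $\mathrm{par}(i)$ — because every other neighbor of $i$ in a tree is a child, and all children are already permanently silent, so if the silent neighbor guaranteed at time $t_k$ is forced to be distinct-or-not it must be $\mathrm{par}(i)$; but I need the silence to be permanent, not just at a single $t_k$. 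To get permanence I would additionally use the threshold function $\gamma(\cdot) \le 1$ structure: once $i$ transmits $\mu_i(\theta)$ to $\mathrm{par}(i)$, the recorded value $\hat\mu_{i,\mathrm{par}(i)}(\theta)$ decreases, and since beliefs on $\theta$ are bounded below away from $0$ along the proof of Theorem~\ref{thm:main} (the actual beliefs converge, $\mu_{i,t}(\theta^\star)\to1$ forces $\mu_{i,t}(\theta)\to0$, but the \emph{local} Bayesian part only decays at rate tied to sources, and $i$ is not a source) — more to the point, the event condition \eqref{eq:event} compares $\mu_{i,t_k}(\theta)$ against a shrinking multiple of a monotone-nonincreasing recorded value, and for a non-source $i$ the belief $\mu_{i,t}(\theta)$ stabilizes once no neighbor feeds it strictly lower values, so \eqref{eq:event} can fire only finitely often.

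Concretely, the steps in order: (1) fix $\theta$, set $v_\theta = \mathcal{S}(\theta^\star,\theta)$, root the tree at $v_\theta$ and define parents/children; (2) intersect the probability-one events from Theorem~\ref{thm:main} and Proposition~\ref{prop:sparsity} to get a full-measure $\bar\Omega$ on which $\mu_{i,t}(\theta)\to0$ for all $i$ and the finite times $T_1(\omega),T_2(\omega)$ exist; (3) perform induction on tree-depth outward from $v_\theta$: for $i$ at depth $1$, observe its only non-child neighbor is $v_\theta$, combine with Proposition~\ref{prop:sparsity}(ii) applied to $i$ and with the fact that $v_\theta$ \emph{keeps transmitting low beliefs downward} (so the silent neighbor cannot be forced away from $\mathrm{par}(i)$ indefinitely) — actually the clean claim is that $i$ receives ever-lower beliefs on $\theta$ from $v_\theta$, hence $\bar\mu_{i,t}(\theta)\to0$, hence $\mu_{i,t}(\theta)\to0$, and then show the event \eqref{eq:event} for the pair $(i\to \mathrm{par}(i))$ fires only finitely often because $\hat\mu_{i,\mathrm{par}(i),t_k}(\theta)$ is eventually smaller than the current belief times $\gamma$; (4) for the inductive step at depth $\ell+1$, use the induction hypothesis that all depth-$\le\ell$ agents have stopped broadcasting $\theta$ to their parents, in particular the parent of $i$ still forwards its own decaying belief but $i$'s children are silent, then repeat the argument of step (3); (5) conclude for all $i\in\mathcal{V}\setminus\{v_\theta\}$ and intersect over the finitely many $\theta\neq\theta^\star$ to retain probability one.

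The main obstacle I anticipate is step (3)/(4): turning ``transmits only finitely often'' for each agent into a statement that the recorded last-transmitted value $\hat\mu_{i,\mathrm{par}(i),t_k}(\theta)$ behaves well enough that \eqref{eq:event} is permanently violated. The subtlety is that $\hat\mu_{ij,t_k}(\theta)$ is the belief at the \emph{last} transmission, which could in principle be a large value if $i$ transmitted long ago; one must argue that either $i$ transmitted recently enough that $\hat\mu$ tracks the decaying belief, or $i$ hasn't transmitted in a long time precisely because the event kept failing — a mild bootstrapping argument. The other delicate point is handling the ``$j$ may depend on $t_k$'' caveat in Proposition~\ref{prop:sparsity}(ii): the clean way around it in the tree case is to not rely on identifying $j$ with $\mathrm{par}(i)$ at all, but rather to show directly that the downward flow of low beliefs from $v_\theta$ drives $\mu_{i,t}(\theta)\to0$ and makes the upward event condition asymptotically unsatisfiable; I expect the published proof takes essentially this route, and I would structure mine the same way.
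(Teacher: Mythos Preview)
Your proposal has a genuine structural gap: the induction is run in the wrong direction, and the inductive step as you state it is incoherent. In step (4) you say ``use the induction hypothesis that all depth-$\le\ell$ agents have stopped broadcasting $\theta$ to their parents, in particular \ldots\ $i$'s children are silent'' --- but for $i$ at depth $\ell+1$, the children are at depth $\ell+2$, so an inductive hypothesis about depths $\le \ell$ gives you nothing about them. The same issue already bites at depth $1$: an agent $i$ adjacent to $v_\theta$ may have many children, and Proposition~\ref{prop:sparsity}(ii) only guarantees that $i$ is silent toward \emph{some} neighbor at each $t_k$; absent information about the children, that silent neighbor could very well be a child rather than $v_\theta$, at every time-step. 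Your attempt to rescue this by showing the event condition \eqref{eq:event} for the edge $i\to\mathrm{par}(i)$ is ``asymptotically unsatisfiable'' does not work either: $\mu_{i,t}(\theta)\to 0$ while $\hat{\mu}_{i,\mathrm{par}(i),t_k}(\theta)$ is a past value that can stay bounded away from $0$ if $i$ hasn't transmitted recently, so the inequality $\mu_{i,t_k}(\theta) < \gamma(t_k)\hat{\mu}_{i,\mathrm{par}(i),t_k}(\theta)$ is \emph{easier}, not harder, to satisfy as $t_k$ grows --- you identified this subtlety yourself but did not resolve it.

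The paper's proof fixes both issues by inducting from the \emph{leaves} (maximum distance from $v_\theta$) inward toward the root. The base case is clean because a leaf has a single neighbor, its parent, so Proposition~\ref{prop:sparsity}(ii) directly yields permanent silence toward that parent. For the inductive step, the paper does not re-analyze the event condition at all; instead it reaches back into the \emph{proof} of Proposition~\ref{prop:sparsity}(ii), where the silent neighbor $j_{p_k}$ is identified concretely as the neighbor whose transmission most recently caused $\bar{\mu}_{i,t}(\theta)$ to strictly decrease (the sequence \eqref{eq:Sequence}). Once, by the inductive hypothesis, all children of $i$ have permanently stopped broadcasting about $\theta$ \emph{to} $i$, the only agent that can cause a decrease in $\bar{\mu}_{i,t}(\theta)$ is the parent; hence $j_{p_k}=\mathrm{par}(i)$ for all large $k$, and the proof of Proposition~\ref{prop:sparsity} then gives that $i$ does not broadcast to $j_{p_k}$ on each interval $(t_{p_k},t_{p_{k+1}}]$, which covers all sufficiently large times. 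That mechanism --- identifying the silent neighbor as the one who last lowered $\bar{\mu}$ --- is the missing idea in your proposal.
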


A few comments are now in order.

$\bullet$ \textbf{On the nature of $g(\cdot)$ and $\gamma(\cdot)$:}  Intuitively, if the event-interval function $g(\cdot)$ does not grow too fast, and the threshold function $\gamma(\cdot)$ does not decay too fast, one should expect things to fall in place. {This intuition is made precise by the limit conditions in Eq. \eqref{eqn:functions}. In particular, the parameter $\alpha$ is a measure of how fast $g(\cdot)$ grows: roughly speaking, smaller the value of $\alpha$, the faster $g(\cdot)$ grows, as alluded to by Corollary \ref{corr:functionalform}. To achieve exponentially fast learning based on our rule, we require $\alpha$ to be strictly greater than $1$. Corollary \ref{corr:functionalform} reveals that up to integer constraints, any polynomial or exponential event-interval function meets this requirement. Regarding the threshold function, we note from \eqref{eqn:functions} that any sub-exponentially decaying $\gamma(\cdot)$ works for our purpose.}

$\bullet$ \textbf{Trade-offs between sparse communication and learning rate:} What is the price paid for sparse communication? To answer the above question, we set as benchmark the scenario studied in our previous work \cite{mitra2019new}, where we did not account for communication efficiency. There, we showed that each false hypothesis $\theta$ gets rejected exponentially fast by every agent at the \textit{network-independent} rate  $\max_{v\in\mathcal{V}} K_v(\theta^*,\theta)$.\footnote{In contrast, for linear \cite{jad1,jad2,liu,salami} and log-linear \cite{shahin,nedic,lalitha,su1,uribe} rules, the corresponding rate is a convex combination of the relative entropies $K_v(\theta^*,\theta), v\in\mathcal{V}$.} From \eqref{eqn:asymprate}, we note that under highly sparse communication regimes which correspond to $\alpha < 1$, although learning occurs exponentially fast, the learning rate gets lowered relative to \cite{mitra2019new}. Moreover, unlike \cite{mitra2019new}, \eqref{eqn:asymprate} reveals that the asymptotic learning rate is \textit{network-dependent} and \textit{agent-specific}, i.e., different agents may discover the truth at different rates. In particular, when considering the asymptotic rate of rejection of a particular false hypothesis at a given agent $i$, notice from the R.H.S. of \eqref{eqn:asymprate} that one needs to account for the attenuated relative entropies of the corresponding source agents, where the attenuation factor scales exponentially with the distances of agent $i$ from such source agents. An instance of the above scenario is when the inter-communication intervals grow geometrically at rate $p >1$; see case (ii) of Corollary \ref{corr:functionalform}.

On the other hand, from case (i) of Corollary \ref{corr:functionalform}, we glean that  polynomially growing inter-communication intervals, coupled with our proposed event-triggering strategy, lead to \textit{no loss in the long-term learning rate relative to the benchmark case in}  \cite{mitra2019new}, i.e., as far as asymptotic performance is concerned, communication-efficiency comes essentially for ``free" under this regime. {However, even when $g(x)$ grows polynomially, the \textit{transient} behavior induced by our algorithm will depend on how $g(x)$ is chosen. While in this paper we focus on the \textit{asymptotic learning rate} as our sole performance metric of interest, striking a desired balance between transient performance and sparse communication will require a finer non-asymptotic analysis of Algorithm \ref{algo:ETmin}.}

$\bullet$ \textbf{Sparse communication introduced by event-triggering:} Observe that being able to eliminate each false hypothesis is enough for learning the true state. In other words, agents need not exchange their beliefs on the true state (of course, no agent knows a priori what the true state is). Our event-triggering scheme precisely achieves this, as evidenced by claim (i) of Proposition \ref{prop:sparsity}: {on almost all sample paths, there exists a (sample-path dependent) time $T_1(\omega)$ after which every agent stops broadcasting its belief on the true state $\theta^*$.}

In addition, an important property of our event-triggering strategy is that it reduces information flow from uninformative agents to informative agents. To see this, consider any false hypothesis $\theta\neq\theta^*$, and an agent $i\notin\mathcal{S}(\theta^*,\theta)$. Since $i\notin\mathcal{S}(\theta^*,\theta)$, agent $i$'s local belief  $\pi_{i,t}(\theta)$ will stop decaying eventually, making it impossible for agent $i$ to lower its actual belief  $\mu_{i,t}(\theta)$ without the influence of its neighbors. Consequently, when left alone between consecutive event-monitoring time-steps, $i$ will not be able to leverage its own private signals to generate enough ``innovation" in $\mu_{i,t}(\theta)$ to broadcast to the neighbor who most recently contributed to lowering $\mu_{i,t}(\theta)$. The intuition here is simple: an uninformative agent cannot outdo the source of its information. This idea is made precise in claim (ii) of Proposition \ref{prop:sparsity}: {on almost all sample paths, there exists a (sample-path dependent) time $T_2(\omega)$, such that at each event-monitoring time-step $t_k > T_2(\omega)$, agent $i$ never transmits  $\mu_{i,t_k}(\theta)$ to \textit{all} its neighbors. That is, there exists at least one $j\in\mathcal{N}_i$ to which $i$ does not transmit $\mu_{i,t_k}(\theta)$.

To further demonstrate that our rule promotes sparse communication, we consider the setting described in Corollary \ref{prop:tree} where the baseline graph is a tree, and for every pair of states, there is a unique informative agent that can distinguish between them. Our result states that all upstream broadcasts to such informative agents stop after a finite period of time, almost surely. In other words, for this setting, our rule provably ensures that eventually, information flows only from informative agents to uninformative agents.} \color{black}{}
{\begin{remark}
It should be noted that the limit conditions in Eq.  \eqref{eqn:functions} are specific to Algorithm \ref{algo:ETmin}, and, as such, are only sufficient conditions for learning the true state. The condition on graph connectivity is also not necessary, and can be relaxed. However, the assumption of global identifiability is in fact necessary when agents make conditionally independent observations; see \cite{mitra2019new} for more details on this topic. 
\end{remark}}

\subsection{Asymptotic Learning of the Truth}
If asymptotic learning of the true state is all one cares about, i.e., if the convergence rate is no longer a consideration, then one can allow for arbitrarily sparse communication patterns, as we shall soon demonstrate. {In particular, our goal is to show that as long as each agent transmits its belief vector to every neighbor infinitely often, all agents will asymptotically learn the truth. We will establish the above claim as an immediate consequence of a much stronger statement that even allows the baseline network to change over time.} To this end, let $\mathcal{G}(t)=(\mathcal{V},\mathcal{E}(t))$ denote the changing neighbor graph. To allow for this general setting, we let $\mathbb{I}=\mathbb{N}_{+}$, i.e., the event condition \eqref{eq:event} is now monitored at each time-step. Furthermore, we set $\gamma(t)=\gamma \in (0,1], \forall t\in\mathbb{N}.$  At each time-step $t\in\mathbb{N}_{+}$, and for each $\theta\in\Theta$, an agent $i\in\mathcal{V}$ decides whether or not to broadcast $\mu_{i,t}(\theta)$ to an instantaneous neighbor $j\in\mathcal{N}_i(t)$ by checking the event condition \eqref{eq:event}. While checking this condition, if agent $i$ has not yet transmitted to (resp., heard from) agent $j$ about $\theta$ prior to time $t$, then it sets $\hat{\mu}_{ij,t}(\theta)$ (resp., $\hat{\mu}_{ji,t}(\theta)$) to $1$. Update rules \eqref{eq:mubardefn}, \eqref{eqn:Bayes}, \eqref{eqn:update} remain the same, with $\mathcal{N}_{i,t}(\theta)$ now interpreted as  $\mathcal{N}_{i,t}(\theta)\triangleq\{j\in\mathcal{N}_i(t)|\mathds{1}_{ji,t}(\theta)=1\}$. Finally, by an union graph over an interval $[t_1,t_2]$, we will imply the graph with vertex set $\mathcal{V}$, and edge set $\cup_{\tau=t_1}^{t_2}\mathcal{E}(\tau)$. With these modifications in place, we have the following result.
\begin{theorem}
Suppose global identifiability (Assumption \ref{assump:globiden}) holds. Furthermore, suppose for each $t\in\mathbb{N}_{+}$, the union graph over $[t,\infty)$ is rooted at $\mathcal{S}(\theta_p,\theta_q)$. Then, the event-triggered distributed learning rule described above guarantees $\mu_{i,t}(\theta^*)\rightarrow 1$ a.s. $\forall i\in\mathcal{V}.$
\label{thm:asymp} 
\end{theorem}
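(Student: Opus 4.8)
The plan is to argue hypothesis-by-hypothesis that each false $\theta \neq \theta^\star$ is eventually rejected by every agent, which together with the normalization in \eqref{eqn:update} forces $\mu_{i,t}(\theta^\star) \to 1$. Fix a false hypothesis $\theta$ and a source agent $v \in \mathcal{S}(\theta^\star,\theta)$ (nonempty by assumption). The first step is to show that $v$ itself drives its actual belief on $\theta$ to zero. By the SLLN applied to the i.i.d. log-likelihood ratios, $-\tfrac{1}{t}\log\big(\pi_{v,t}(\theta)/\pi_{v,t}(\theta^\star)\big) \to K_v(\theta^\star,\theta) > 0$ a.s., so $\pi_{v,t}(\theta) \to 0$ a.s.; combined with \eqref{eqn:update} and \eqref{eq:mubardefn}, $\mu_{v,t}(\theta)$ is driven below any level infinitely often, and in fact $\bar\mu_{v,t}(\theta) \to 0$. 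This is essentially the base case already established in the analysis of \cite{mitra2019new} and of Theorem \ref{thm:main}; I would quote it.

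The second and main step is the propagation argument across the time-varying, $\mathcal{S}(\theta^\star,\theta)$-rooted union graphs. Since for every $t$ the union graph over $[t,\infty)$ is rooted at the source set, for any agent $i$ there is a sequence of edges, each used at some time $\geq t$, forming a path from some source $v$ to $i$. I would proceed by induction on the "hop distance" in a spanning structure: suppose agent $j$ has $\bar\mu_{j,t}(\theta) \to 0$ a.s. (equivalently $\mu_{j,t}(\theta)\to 0$), and let $i$ be a neighbor that hears from $j$ about $\theta$ at infinitely many times. The key sub-claim is that at infinitely many such times the event condition \eqref{eq:event} is actually triggered for the pair $(j,i)$ on $\theta$: because $\hat\mu_{ji,t}(\theta)$ and $\hat\mu_{ij,t}(\theta)$ are frozen at their last-transmitted values while $\mu_{j,t}(\theta)\to 0$, and $\gamma(t)=\gamma>0$ is bounded away from $0$, once $\mu_{j,t}(\theta)$ drops below $\gamma \min\{\hat\mu_{ji,t}(\theta),\hat\mu_{ij,t}(\theta)\}$ the condition fires, $j$ transmits, and both frozen values get replaced by the new small $\mu_{j,t}(\theta)$ — so this recurs and $\hat\mu_{ji,t}(\theta)\to 0$. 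Each such transmission feeds into $\bar\mu_{i,t}(\theta)$ via \eqref{eq:mubardefn}, forcing $\bar\mu_{i,t}(\theta)\to 0$, hence $\mu_{i,t}(\theta)\to 0$ by \eqref{eqn:update}. Iterating along the rooted structure covers all agents; since there are finitely many hypotheses and finitely many agents, intersecting the almost-sure events completes the argument, and normalization yields $\mu_{i,t}(\theta^\star)\to 1$.

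The main obstacle I anticipate is making the "infinitely many triggered transmissions along a path" argument rigorous in the time-varying setting: the rootedness condition guarantees an infinite reverse path from a source to each agent, but the edges along that path are used at different, a priori unsynchronized times, and an agent's $\bar\mu$ can only incorporate a neighbor's belief at a time when that edge is actually present \emph{and} the event fires. One must be careful that "freezing" of $\hat\mu_{ji,t}(\theta)$ is genuine (the convention $\hat\mu=1$ before the first transmission must be handled, and one needs that after the first ever transmission on edge $(j,i)$ the stored value only moves when a transmission occurs), and that the monotone-minimum structure of \eqref{eq:mubardefn} is what makes the argument robust to asynchrony — $\bar\mu_{i,t}(\theta)$ never increases, so a single sufficiently-late small transmission from $j$ suffices to pull it down, and we do not need simultaneity. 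Handling the correlation across agents' observations is not an issue here since we only ever invoke the SLLN for a single source agent's own signals. A secondary subtlety is that $\mathcal{S}(\theta_p,\theta_q)$-rootedness is assumed for \emph{every} pair, so for a given false $\theta$ one uses rootedness at $\mathcal{S}(\theta^\star,\theta)$; I would state this reduction explicitly at the outset.
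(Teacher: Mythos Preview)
Your overall architecture --- fix a false $\theta$, start at a source $v\in\mathcal{S}(\theta^{\star},\theta)$, and propagate decay hop by hop using the monotonicity of $\bar\mu$ together with the rootedness of the union graphs --- is the same as the paper's. The gap is in your ``key sub-claim.'' You assert that the event condition \eqref{eq:event} fires from $j$ to $i$ infinitely often because the cached values are ``frozen'' while $\mu_{j,t}(\theta)\to 0$. Two problems. First, when $j$ transmits to $i$ only $\hat{\mu}_{ji}$ is refreshed, not $\hat{\mu}_{ij}$, so ``both frozen values get replaced by the new small $\mu_{j,t}(\theta)$'' is simply false. Second, the trigger involves the \emph{minimum} $\min\{\hat{\mu}_{ji,t}(\theta),\hat{\mu}_{ij,t}(\theta)\}$, and nothing you say rules out $\hat{\mu}_{ij,t}(\theta)\to 0$ (because $i$ keeps transmitting ever smaller beliefs on $\theta$ to $j$). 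In that regime $\mu_{j,t}(\theta)$ may perpetually fail to undercut $\gamma\,\hat{\mu}_{ij,t}(\theta)$, so $j$ may never again fire towards $i$; your argument does not cover this, and the conclusion you draw from it (``$\hat{\mu}_{ji,t}(\theta)\to 0$'') is unsupported.

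The paper avoids this by not insisting that the event fire. At a time $\tau$ where the edge is present it splits into two cases, exactly as in the proof of Lemma~\ref{lemma:main}: if the event fires, $\bar{\mu}_{j,\tau}(\theta)\le \mu_{v,\tau}(\theta)$; if it does not, then by the \emph{violation} of \eqref{eq:event} one of $\hat{\mu}_{vj,\tau}(\theta)$ or $\hat{\mu}_{jv,\tau}(\theta)$ is at most $\mu_{v,\tau}(\theta)/\gamma$, and in either sub-case $\bar{\mu}_{j,t}(\theta)\le \mu_{v,\tau}(\theta)/\gamma$ for all $t\ge\tau$ (the first because $j$ already received that value, the second because $j$ itself once broadcast it and \eqref{eq:mubardefn} absorbed it). Combined with the uniform lower bound $\eta$ on the $\theta^{\star}$-beliefs from Lemma~\ref{lemma:bound} --- which you also need but never invoke, to control the denominator in \eqref{eqn:update} --- this yields $\mu_{j,t}(\theta)\le \mu_{v,\tau}(\theta)/(\eta\gamma)$ for all $t>\tau$. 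The paper then makes this quantitative: set $\bar{\epsilon}=\min\{\epsilon,\gamma\eta\}$, wait until every source is below $\bar{\epsilon}^{|\mathcal{V}|}$, peel off one layer $\mathcal{F}_k(\theta)$ of the rooted union graph at a time (losing at most a factor $1/(\eta\gamma)\le 1/\bar{\epsilon}$ per layer), and after at most $|\mathcal{V}|$ layers every agent is below $\epsilon$. Your inductive scheme is salvageable, but only by replacing the ``event fires infinitely often'' mechanism with this dichotomy.
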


While a result of the above flavor is well known for the basic consensus setting \cite{moreau}, we are unaware of its analogue for the distributed inference problem. When $\mathcal{G}(t)=\mathcal{G},\forall t\in\mathbb{N}$, we observe from Theorem \ref{thm:asymp} that, as long as each agent $i$ transmits $\boldsymbol{\mu}_{i,t}$ infinitely often to each neighbor $j\in\mathcal{N}_i$, all agents will asymptotically learn the true state. In particular, other than the above requirement, our result places no constraints on the \textit{frequency} of agent interactions.

\section{A Distributed Learning Rule based on Adaptive Quantization}
\label{sec:quantrule}
The focus of Section \ref{sec:algo} was on  designing an algorithm that guarantees learning despite sparse communication. In this section, we turn our attention to promoting communication-efficiency via a complementary mechanism, namely, by compressing the amount of information transmitted by each agent. Our investigations here are motivated by the fact that in practice, communication channels modeling the interactions between agents have finite bandwidth. Accordingly, let us suppose that $\forall \theta\in\Theta$,  each agent $i$ uses only $B(\theta)$ bits to encode its belief on $\theta$.  \textit{Under what conditions on $B(\theta)$ will each agent eventually learn the true state?}

To answer the above question, we need to design an appropriate quantization scheme, which, in turn,  requires resolving the following issues. (1) The scheme should be such that the belief of each agent on $\theta^*$ converges \textit{exactly} to $1$, as opposed to getting stuck in a neighborhood of $1$. There are in fact various examples in the literature where due to quantization effects, the algorithm converges  to a neighborhood of the desired point \cite{quantopt1,quantopt2,quantopt3}. 
(2) Precaution needs to be taken to ensure that the belief of an agent on $\theta^*$ never gets quantized to $0$. Indeed, it might very well be that during an initial transient phase, the belief of some agent on $\theta^*$ falls inadvertently. If the quantization scheme is not designed appropriately, such a low belief on $\theta^*$ might get quantized to a $0$ value, causing every agent to eventually place a $0$ belief on the true state due to diffusion. This is a serious issue that needs to be addressed, and, in fact, this exact phenomenon has been reported in a simulation study conducted in \cite{lalitha}. Specifically, the authors in \cite{lalitha} present an example where using $12$ bits to represent each hypothesis leads to learning the true state, but using $8$ bits results in convergence to a false hypothesis. In what follows, we propose an algorithm that tackles the above issues; later, we argue that our algorithm guarantees exponentially fast learning even when merely $1$ bit is used to encode each hypothesis. 

To proceed, suppose we wish to encode a scalar $x$ that belongs to the interval $[L,U]$ using $B$ bit precision. Then, we first divide the interval $[L,U]$ into $2^B$ bins, each of equal width. Next, we identify the bin to which $x$ belongs, and let the quantized value of $x$ simply be the {\textit{upper end point}} of that bin. Let this entire operation be described formally by a map $\mathcal{Q}_{R,B}(\cdot)$ with range parameter $R=[L,U]$ and bit parameter $B$. Then, we have $\mathcal{Q}_{R,B}(x)=L+d\ceil{(x-L)/d}$, where $d=(U-L)/2^B$; {note here that we use the  \texttt{ceil} function for quantization.} The above encoder will serve as a basic building block for encoding each component of an agent's belief vector, and our key idea will be to sequentially refine the range of the quantizer over time.

\begin{algorithm}[t]
\caption{\textbf{(Quantized Min-Rule)}  Each agent $i \in \mathcal{V}$ executes this algorithm in parallel} \label{algo:Qmin}
\textbf{Initialization:}  $\pi_{i,0}(\theta), \mu_{i,0}(\theta)$ and $\bar{\mu}_{i,0}(\theta)$ initialized as in Algorithm \ref{algo:ETmin}; $q_{i,0}(\theta)=1,\forall \theta\in\Theta$. 
\begin{algorithmic}[1]
\For {$t \in \mathbb{N}$} 
\For {$\theta\in\Theta$}

\State Update $\pi_{i,t+1}(\theta)$ via \eqref{eqn:Bayes}, and   $\mu_{i,t+1}(\theta)$ via  \eqref{eqn:update}. 

\If{$\mu_{i,t+1}(\theta)\in[0,q_{i,t}(\theta))$}
\State Quantize $\mu_{i,t+1}(\theta)$ to $q_{i,t+1}(\theta)$ via \eqref{eqn:encoder}, and broadcast  $\mathbb{J}_{i,t+1}(\theta)$ to each $j\in\mathcal{N}_i$.
\Else
\State Set $q_{i,t+1}(\theta)=q_{i,t}(\theta)$, and do not broadcast about $\theta$.
\EndIf
\For {$j\in\mathcal{N}_i$}
\If {$j\in\mathcal{N}_{i,t+1}(\theta)$}
\State Decode $q_{j,t+1}(\theta)$ from $\mathbb{J}_{j,t+1}(\theta)$. 
\Else
\State Set $q_{j,t+1}(\theta)=q_{j,t}(\theta)$. 
\EndIf
\EndFor
\State Update $\bar{\mu}_{i,t+1}(\theta)$ via \eqref{eq:qmubar}.
\EndFor
\EndFor
\end{algorithmic}
\end{algorithm}

$\bullet$ \textbf{Encoding Beliefs:} As with Algorithm \ref{algo:ETmin}, each agent $i$ maintains a local belief vector $\boldsymbol{\pi}_{i,t}$, and an actual belief vector $\boldsymbol{\mu}_{i,t}$, which are updated via \eqref{eqn:Bayes} and \eqref{eqn:update}, respectively. In addition, for encoding its belief on $\theta$, an agent $i$ maintains a quantity $q_{i,t}(\theta)$, with $q_{i,0}(\theta)=1, \forall \theta\in\Theta$. At each time-step $t+1\in\mathbb{N}_{+}$, and for each $\theta\in\Theta$, an agent checks whether $\mu_{i,t+1}(\theta)\in[0,q_{i,t}(\theta))$. If so, it quantizes $\mu_{i,t+1}(\theta)$ to $q_{i,t+1}(\theta)=\mathcal{Q}_{R_{i,t}(\theta),B(\theta)}(\mu_{i,t+1}(\theta))$, with range parameter $R_{i,t}(\theta)=[0,q_{i,t}(\theta)]$, and a bit parameter $B(\theta)$ that will be specified later on. More precisely, if $\mu_{i,t+1}(\theta)\in[0,q_{i,t}(\theta))$, then $\mu_{i,t+1}(\theta)$ is quantized  as:
\begin{equation}
    q_{i,t+1}(\theta)=\frac{q_{i,t}(\theta)}{2^{B(\theta)}}\ceil{{\mu_{i,t+1}(\theta)2^{B(\theta)}}/{q_{i,t}(\theta)}}.
\label{eqn:encoder}
\end{equation}
Let $\mathbb{J}_{i,t+1}(\theta)$ denote the binary representation of the index of the bin to which $\mu_{i,t+1}(\theta)$ belongs. The quantized belief ${q}_{i,t+1}(\theta)$ is encoded as $\mathbb{J}_{i,t+1}(\theta)$, and the latter is broadcasted to each neighbor $j\in\mathcal{N}_i$. If $\mu_{i,t+1}(\theta) \geq q_{i,t}(\theta)$, then agent $i$ sets $q_{i,t+1}(\theta)=q_{i,t}(\theta)$, and does not broadcast about $\theta$ to any neighbor. In words, at each $t+1\in\mathbb{N}$, an agent $i$ broadcasts about $\theta$ if and only if $\mu_{i,t+1}(\theta)$ is strictly lower than the last quantized belief on $\theta$ that it broadcasted, namely $q_{i,t}(\theta)$. This last transmitted belief $q_{i,t}(\theta)$ also serves as the upper limit of the range $R_{i,t}(\theta)$ of the quantizer used for encoding $\mu_{i,t+1}(\theta)$, while the lower limit remains at $0$ for all time. The above steps constitute our \textit{adaptive quantization} scheme.\footnote{The adaptive nature of our encoding strategy stems from the fact that the range of the quantizer used to encode each hypothesis is dynamically updated.}

$\bullet$ \textbf{Decoding Beliefs:} For decoding beliefs, we make the following natural assumptions. For every $\theta\in\Theta$, each agent is aware of (i) the initial quantizer range, i.e., the fact that $q_{i,0}(\theta)=1,\forall \theta\in\Theta,\forall i\in\mathcal{V}$; (ii) the nature of the encoding operation  $\mathcal{Q}_{R,B}(\cdot)$; and (iii) the bit precision $B(\theta)$. Now consider any agent $j\in\mathcal{N}_i$. At any time-step $t+1\in\mathbb{N}_{+}$, if $j$ receives $\mathbb{J}_{i,t+1}(\theta)$ from $i$, then it can \textit{exactly} recover ${q}_{i,t+1}(\theta)$. This follows from the assumptions we made above, and the fact that node $j$ has access to $q_{i,t}(\theta)$, since it was the last quantized belief on $\theta$ that was transmitted by $i$ to each of its neighbors. If $j$ does not hear about $\theta$ from node $i$, then on its end, it sets $q_{i,t+1}(\theta)=q_{i,t}(\theta)$. 

Based on the above discussion, it should be apparent that at each time-step $t\in\mathbb{N}$, and for each $\theta\in\Theta$, the value of $q_{i,t}(\theta)$ held by an agent $i$ is consistent with those held by each of its neighbors - a fact that is crucial for correctly decoding the messages transmitted by $i$. Finally, upon completion of the decoding step, an agent $i$ updates $\bar{\mu}_{i,t+1}(\theta)$ as:
\begin{equation}
    \bar{\mu}_{i,t+1}(\theta)=\min\{\bar{\mu}_{i,t}(\theta),\mu_{i,t+1}(\theta),\{q_{j,t+1}(\theta)\}_{j\in\mathcal{N}_i}\}.
    \label{eq:qmubar}
\end{equation}

We call the above algorithm the \texttt{Quantized Min-Rule}, and outline its steps in Algorithm \ref{algo:Qmin}. In Line 10 of this algorithm, $\mathcal{N}_{i,t+1}(\theta)$ has the same meaning as in the rest of this paper: it represents the neighbors of $i$ who broadcast their beliefs (in this case, quantized beliefs) on $\theta$ to $i$ at time $t+1$. {Similar to Algorithm \ref{algo:ETmin}, implementing Algorithm \ref{algo:Qmin} imposes certain memory requirements on the part of each agent. Specifically, in addition to the vectors $\boldsymbol{\pi}_{i,t}, \boldsymbol{\mu}_{i,t}$, $\boldsymbol{\bar{\mu}}_{i,t}$, and $\boldsymbol{q}_{i,t}$, an agent $i$ needs to store the vector $\boldsymbol{q}_{j,t}$ for each neighbor $j\in\mathcal{N}_i$. The entries of $\boldsymbol{q}_{j,t}$ are the most recent quantized beliefs broadcasted by agent $j$, and storing them is necessary in order to decode the beliefs transmitted by agent $j$. Overall, for running Algorithm \ref{algo:Qmin}, each agent $i$ needs to maintain and update $(|\mathcal{N}_i|+4)$ $m$-dimensional vectors.}

{It is important to emphasize the rationale behind using a ceil operator for our quantization scheme (see Eq. \eqref{eqn:encoder}) as opposed to a floor operator. If at any point in time, the belief $\mu_{i,t}(\theta^*)$ of an agent $i$ falls in the lowest quantization bin of its quantizer range for $\theta^*$, then using a floor operator will cause $\mu_{i,t}(\theta^*)$ to get quantized to $0$. Performing a $\min$ operation  on this quantized value (as in Eq. \eqref{eq:qmubar}) will cause the output to be $0$, and eventually, via diffusion, all agents will end up with a $0$ belief on the true state $\theta^*$.  {To avoid the above phenomenon, we use a ceil operator for encoding beliefs}. Doing so ensures that the quantized belief on any hypothesis is greater than or equal to the actual belief on that hypothesis - a key component of our analysis. See also Lemma \ref{lemma:qminbound}.}

\section{Theoretical Guarantees for Algorithm \ref{algo:Qmin}}
\label{sec:quantresults}
The following is our main result concerning the convergence guarantees of Algorithm \ref{algo:Qmin}.
\begin{theorem}
Suppose every agent uses at least one bit to encode each hypothesis, i.e., let $B(\theta) \geq 1, \forall \theta\in\Theta$. Furthermore, suppose global identifiability (Assumption \ref{assump:globiden}) holds, and the communication graph $\mathcal{G}$ is connected. Then, Algorithm \ref{algo:Qmin} guarantees the following. 
\begin{itemize}
    \item \textbf{(Consistency)}: For each agent  $i\in\mathcal{V}$,  $\mu_{i,t}(\theta^{\star}) \rightarrow 1$ a.s.
    \item \textbf{(Exponentially Fast Rejection of False Hypotheses)}: For each agent  $i\in\mathcal{V}$, and for each false hypothesis $\theta\in\Theta\setminus\{\theta^{\star}\},$ the following holds:
    \begin{equation}
        \liminf_{t\to\infty}-\frac{\log\mu_{i,t}(\theta)}{t} \geq \max_{v\in\mathcal{S}(\theta^{\star},\theta)} H_v(\theta^*,\theta) \hspace{1mm} a.s., 
\label{eqn:quantizedrate}
    \end{equation}
where $H_v(\theta^*,\theta)=\min\{B(\theta) \log 2, K_v(\theta^{\star},\theta)\}$.
\end{itemize}
\label{thm:Quantmin}
\end{theorem}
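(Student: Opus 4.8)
The plan is to adapt the diffusion-based convergence argument behind Theorem~\ref{thm:main} to the adaptively quantized setting, with the quantizer bias playing the role that the event-triggering threshold played before. First I would establish the crucial monotonicity/sandwich property of the encoder: by construction $\mathcal{Q}_{R,B}(x)\geq x$, so $q_{i,t}(\theta)\geq\mu_{i,t}(\theta)$ whenever a transmission occurs, and moreover $q_{i,t+1}(\theta)\leq q_{i,t}(\theta)$ always (the sequence of broadcast quantized values is non-increasing in $t$, since a new broadcast only happens when $\mu_{i,t+1}(\theta)<q_{i,t}(\theta)$). This immediately gives $\mu_{i,t}(\theta^\star)$ can never be quantized to $0$, hence $\bar{\mu}_{i,t}(\theta^\star)>0$ for all $t$, which is the content alluded to in the footnote and presumably isolated as a lemma (``Lemma~\ref{lemma:qminbound}''). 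Combined with $\sum_p \mu_{i,t}(\theta_p)=1$, controlling all false hypotheses $\theta\neq\theta^\star$ automatically forces $\mu_{i,t}(\theta^\star)\to1$, so the consistency claim reduces to the exponential-rejection claim.

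For the rate bound I would argue by induction on the distance $d(v,i)$ from a source agent $v\in\mathcal{S}(\theta^\star,\theta)$, exactly as in the proof of Theorem~\ref{thm:main}. Base case: for a source agent $v$, the Bayesian iterate satisfies $-\tfrac{1}{t}\log\pi_{v,t}(\theta)\to K_v(\theta^\star,\theta)$ a.s.\ by the SLLN applied to the log-likelihood-ratio increments (standard; also, $\pi_{v,t}(\theta^\star)$ stays bounded below so the normalizer is harmless). Then $\mu_{v,t}(\theta)\leq \min\{\bar{\mu}_{v,t-1}(\theta),\pi_{v,t}(\theta)\}/(\text{normalizer})$, and the normalizer is $\geq \mu_{v,t}(\theta^\star)$ which is bounded below; so $\mu_{v,t}(\theta)$ decays at least at rate $K_v(\theta^\star,\theta)$ \emph{until it is so small that the quantizer floor kicks in}. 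The new ingredient relative to Theorem~\ref{thm:main} is precisely this floor: once $\mu_{v,t}(\theta)$ enters the lowest bin $[0,q/2^{B(\theta)})$, its transmitted value $q_{v,t}(\theta)$ gets multiplied by at most $1/2^{B(\theta)}$ per broadcast, so the exponent of $q_{v,t}(\theta)$, and hence of $\bar{\mu}_{j,t}(\theta)$ for neighbors $j$, grows by $B(\theta)\log2$ per broadcast at best — this is the source of the $B(\theta)\log 2$ term and of the $\min$ in $H_v(\theta^\star,\theta)$. I would make this precise by tracking the number of broadcasts up to time $t$ and showing it grows linearly in $t$ (each time $\mu_{v,t+1}(\theta)<q_{v,t}(\theta)$ a broadcast fires, and since the true exponential decay of $\pi_{v,t}(\theta)$ outpaces any sub-$K_v$ rate, broadcasts cannot stall while $\pi$ is still driving things down), giving $-\tfrac{1}{t}\log q_{v,t}(\theta)\gtrsim \min\{B(\theta)\log2, K_v(\theta^\star,\theta)\}=H_v(\theta^\star,\theta)$.

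The inductive step is then the same diffusion mechanism: if $j$ is at distance $d+1$ with a neighbor $w$ at distance $d$ on a shortest path to $v$, then by \eqref{eq:qmubar}, $\bar{\mu}_{j,t}(\theta)\leq q_{w,t}(\theta)$ once $w$ has broadcast, and by the update \eqref{eqn:update} together with $\mu_{j,t}(\theta)\leq \bar{\mu}_{j,t-1}(\theta)/(\text{normalizer}\geq\mu_{j,t}(\theta^\star))$, agent $j$ inherits $w$'s decay rate up to a bounded-below normalizing factor; here the key point is that no \emph{further} attenuation per hop occurs (unlike the $\alpha^{d(v,i)}$ in Theorem~\ref{thm:main}), because the quantizer range at every agent is anchored at $0$ and $q_{w,t}$ is already below $\mu_{w,t}$'s true value, so passing it along loses nothing exponentially — I should double-check that the re-quantization at $j$ (if $j$ itself re-broadcasts) does not compound the loss, but since $j$'s quantizer also has the $\geq x$ property and is refined relative to $q_{w,t}$, the exponent can only improve. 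Finally I would take a countable intersection over the a.s.\ events (one per source agent per false hypothesis, finitely many) and over the induction, and invoke $\sum_p\mu_{i,t}(\theta_p)=1$ to conclude $\mu_{i,t}(\theta^\star)\to1$. The main obstacle I anticipate is the bookkeeping in the base case — rigorously showing that broadcasts from a source agent fire frequently enough (linearly often) that the transmitted exponent actually attains $\min\{B(\theta)\log2,K_v\}$ rather than something smaller, i.e.\ ruling out a pathological stall where $\mu_{v,t}(\theta)$ hovers just above $q_{v,t}(\theta)$ without triggering; this should follow from the a.s.\ strict exponential decay of $\pi_{v,t}(\theta)$ dominating, but it requires care to state cleanly.
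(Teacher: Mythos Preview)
Your overall architecture matches the paper's: first a lemma (the paper's Lemma~\ref{lemma:qminbound}) that the quantized beliefs on $\theta^\star$ never hit zero and hence the lower bounds of Lemma~\ref{lemma:bound} carry over; then source-agent decay of the transmitted value $q_{v,t}(\theta)$; then hop-by-hop diffusion via \eqref{eq:qmubar} and \eqref{eqn:update}. The consistency-from-rejection reduction is also exactly what the paper does.

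The substantive divergence is in how you propose to control $q_{v,t}(\theta)$. You plan to count broadcasts and argue they fire ``linearly often,'' and you correctly flag this as the delicate step. The paper sidesteps this bookkeeping entirely with a one-line recursion that holds \emph{whether or not} a broadcast occurs: using $\lceil x\rceil < 1+x$ in \eqref{eqn:encoder} gives
\[
q_{v,t+1}(\theta) \;<\; 2^{-B(\theta)}\,q_{v,t}(\theta) \;+\; \mu_{v,t+1}(\theta),
\]
and when no broadcast fires one has $q_{v,t+1}(\theta)=q_{v,t}(\theta)\leq \mu_{v,t+1}(\theta)$, so the same inequality holds trivially. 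Unrolling this linear recursion with the input $\mu_{v,t}(\theta)<e^{-(K_v-\epsilon)t}$ immediately yields $q_{v,t}(\theta)\lesssim \rho^{\,t}$ with $\rho=\max\{2^{-B(\theta)},e^{-(K_v-\epsilon)}\}$, which is precisely $-\log\rho = H_v(\theta^\star,\theta)$ up to $\epsilon$. This discrete-Gronwall device is cleaner than broadcast counting and removes the ``pathological stall'' worry you raised.

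Two small corrections. First, you write ``$q_{w,t}$ is already below $\mu_{w,t}$'s true value''; it is the opposite, $q_{w,t}(\theta)\geq \mu_{w,t}(\theta)$ whenever a broadcast occurs (the encoder rounds up). What actually prevents per-hop loss is not this, but the idempotence of the map $r\mapsto \min\{B(\theta)\log 2,\,r\}$: once a neighbor $j$ inherits rate $H_v$, re-running the same recursion at $j$ gives $\min\{B(\theta)\log 2,\,H_v\}=H_v$, so no further attenuation accrues---this is how the paper dispatches the inductive step in one sentence. Second, your remark that $j$'s quantizer is ``refined relative to $q_{w,t}$'' is not quite right either; $j$'s quantizer range is governed by $q_{j,t}(\theta)$, not by what $w$ sent. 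The diffusion works because $\bar\mu_{j,t}(\theta)\leq q_{w,t}(\theta)$ feeds into \eqref{eqn:update}, after which the same recursion applies to $q_{j,t}(\theta)$.
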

We prove the above result in Appendix \ref{app:ProofThmQmin}. Under what conditions on $B(\theta)$ can one recover the same long-run learning rate as with infinite precision? The following result, which is an immediate corollary of Theorem \ref{thm:Quantmin}, provides an answer. 

\begin{corollary}
Suppose the conditions in Theorem \ref{thm:Quantmin} hold. Moreover, for each $\theta\in\Theta$, suppose the bit precision $B(\theta)$ is chosen such that
\begin{equation}
    B(\theta) \geq \frac{1}{\log2} \left(\max_{\theta^*\neq\theta} \max_{i\in\mathcal{V}} K_i(\theta^*,\theta)\right).
\label{eqn:bitprecision}
\end{equation}
Then, for each $\theta\in\Theta\setminus\{\theta^{\star}\}$, and $i\in\mathcal{V}$, we have:
    \begin{equation}
        \liminf_{t\to\infty}-\frac{\log\mu_{i,t}(\theta)}{t} \geq \max_{v\in\mathcal{S}(\theta^{\star},\theta)}K_v(\theta^{\star},\theta) \hspace{1mm} a.s.
    \end{equation}
\label{eqn:corrqmin}
\end{corollary}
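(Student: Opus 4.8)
The plan is to observe that this is an immediate consequence of Theorem~\ref{thm:Quantmin}, requiring only an elementary comparison of the two terms inside the $\min$ that defines $H_v(\theta^*,\theta)$. The only ``work'' is to verify that the hypothesis~\eqref{eqn:bitprecision} forces the bit-budget term $B(\theta)\log 2$ to dominate the relative-entropy term $K_v(\theta^\star,\theta)$ for every relevant source agent $v$, so that the $\min$ collapses to $K_v(\theta^\star,\theta)$ and the quantized rate in~\eqref{eqn:quantizedrate} reduces exactly to the infinite-precision rate of~\cite{mitra2019new}.

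Concretely, I would proceed as follows. First, fix any false hypothesis $\theta\in\Theta\setminus\{\theta^\star\}$ and any agent $i\in\mathcal{V}$. Since $\theta^\star\neq\theta$, the inner double maximum in~\eqref{eqn:bitprecision} is taken over a set that includes the pair $(\theta^\star,\theta)$ and the index $v$ for every $v\in\mathcal{S}(\theta^\star,\theta)$; hence~\eqref{eqn:bitprecision} gives $B(\theta)\log 2 \ge \max_{\theta^*\neq\theta}\max_{j\in\mathcal{V}} K_j(\theta^*,\theta) \ge K_v(\theta^\star,\theta)$ for every $v\in\mathcal{S}(\theta^\star,\theta)$. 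Therefore, for each such $v$,
\[
H_v(\theta^*,\theta)=\min\{B(\theta)\log 2,\, K_v(\theta^\star,\theta)\}=K_v(\theta^\star,\theta).
\]
Substituting this identity into the bound~\eqref{eqn:quantizedrate} of Theorem~\ref{thm:Quantmin} yields
\[
\liminf_{t\to\infty}-\frac{\log\mu_{i,t}(\theta)}{t}\ \ge\ \max_{v\in\mathcal{S}(\theta^\star,\theta)} H_v(\theta^*,\theta)\ =\ \max_{v\in\mathcal{S}(\theta^\star,\theta)} K_v(\theta^\star,\theta)\quad a.s.,
\]
which is exactly the claimed rate. Consistency, i.e.\ $\mu_{i,t}(\theta^\star)\to 1$ a.s., is inherited verbatim from Theorem~\ref{thm:Quantmin} since~\eqref{eqn:bitprecision} in particular implies $B(\theta)\ge 1$ for all $\theta$ (as relative entropies are nonnegative and $1/\log 2 > 1$, one should note the budget is at least $1$ whenever some $K_i$ is positive; if all $K_i(\theta^*,\theta)$ vanish the pair is irrelevant, but one still takes $B(\theta)\ge 1$ as required by the theorem's standing hypothesis).

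There is no genuine obstacle here: the statement is a one-line corollary, and the only point meriting a sentence of care is making sure the indices/pairs over which~\eqref{eqn:bitprecision} maximizes genuinely cover all $v\in\mathcal{S}(\theta^\star,\theta)$ and the specific direction $K_v(\theta^\star,\theta)$, which they do because the outer max ranges over all $\theta^*\neq\theta$ (in particular $\theta^*=\theta^\star$) and the inner max ranges over all agents in $\mathcal{V}\supseteq\mathcal{S}(\theta^\star,\theta)$.
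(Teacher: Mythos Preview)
Your proposal is correct and follows exactly the approach the paper intends: the paper states this result as ``an immediate corollary of Theorem~\ref{thm:Quantmin}'' without further proof, and your argument---checking that \eqref{eqn:bitprecision} forces $B(\theta)\log 2 \ge K_v(\theta^\star,\theta)$ so that $H_v(\theta^*,\theta)=K_v(\theta^\star,\theta)$---is precisely the one-line observation required. The digression about consistency and the case of vanishing relative entropies is unnecessary (the corollary only asserts the rate bound, and $B(\theta)\ge 1$ is already part of the standing hypotheses of Theorem~\ref{thm:Quantmin}), but harmless.
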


 We now remark on the implications of the above results. 

$\bullet$ \textbf{1-bit precision per hypothesis is sufficient for learning:} Under standard assumptions on the observation model and the network structure, Theorem \ref{thm:Quantmin} reveals that based on Algorithm \ref{algo:Qmin}, it is possible to learn the true state exponentially fast while using just $1$ bit to encode each hypothesis. Thus, at any given  time-step, it suffices for each agent to broadcast an $m$-bit binary vector, where $m$ is the number of hypotheses. This is a key implication of Theorem \ref{thm:Quantmin}. 

{In order for each agent to learn the true state asymptotically, we conjecture that each agent must \textit{necessarily} use at least 1 bit precision to encode each hypothesis. As future work, it would be interesting to either prove or disprove this conjecture}. 

$\bullet$ \textbf{Trade-offs between bit-precision and learning rate:} While 1-bit precision per hypothesis is adequate for exponentially fast learning, the rate of learning may no longer be that with infinite precision. To understand this better, recall that with infinite precision, the basic min-rule in \cite{mitra2019new} allows each agent to rule out a false hypothesis $\theta$ exponentially fast at the rate $\max_{i\in\mathcal{V}}K_i(\theta^*,\theta)$.\footnote{Observe that setting $B(\theta)=\infty$ in \eqref{eqn:quantizedrate} leads to the same conclusion.} Let $v\in \argmax_{i\in\mathcal{V}}K_i(\theta^*,\theta)$. Although agent $v$'s belief on $\theta$ may decay to zero relatively fast, its ability to convey such a low belief to its neighbors is limited by the precision of the quantizer, when beliefs can no longer be transmitted perfectly. In particular, observe that the R.H.S. of \eqref{eqn:quantizedrate} simplifies to $\min\{B(\theta) \log 2, \max_{i\in\mathcal{S}(\theta^*,\theta)} K_i(\theta^{\star},\theta)\}$. This suggests that one can recover the same rate of rejection of $\theta$ as with infinite precision if and only if $B(\theta)\log2 \geq \max_{i\in\mathcal{S}(\theta^*,\theta)} K_i(\theta^{\star},\theta)$, i.e., a low bit-precision can come at the expense of a reduced learning rate. To sum up,  just as Theorem \ref{thm:main} highlighted the trade-offs between sparse communication and the learning rate under Algorithm \ref{algo:ETmin}, Theorem \ref{thm:Quantmin} quantifies the trade-offs between imprecise communication and the learning rate under Algorithm \ref{algo:Qmin}. 

$\bullet$ \textbf{Recovering the same learning rate as with perfect communication:} Intuitively, the condition in Eq. \eqref{eqn:bitprecision} can be interpreted as follows. To be able to reject $\theta\neq\theta^*$ at the same rate as with perfect communication, the range of the quantizer used to encode $\theta$ must shrink at least as fast as the fastest possible rate at which an agent can reject $\theta$ on its own, while accounting for the realization of any state $\theta^*\neq \theta$. However, in order to pick $B(\theta)$ to satisfy the condition in Eq.  \eqref{eqn:bitprecision}, an agent requires certain knowledge of the relative entropies of other agents in the network - this additional knowledge is the price to be paid for maintaining the same learning rate as with perfect communication (under the proposed scheme).

{
\section{Learning under Sparse and Imprecise Communication}
\label{sec:quant_event}
In sections \ref{sec:algo} and \ref{sec:quantrule}, we separately treated the aspects of event-triggering to achieve sparse communication, and adaptive quantization to deal with finite-precision channels. In this section, we will develop a learning rule that combines these ideas in a natural way. Let us begin by describing the main components of this rule. First, define a periodic sequence $\mathbb{I}=\{t_1,t_2,t_3,\ldots\} \in \mathbb{N}_{+}$ of \textit{event-monitoring} time-steps, with $t_1=1$, and period equal to a positive integer $\tau$, i.e., $t_{k+1}-t_{k} = \tau, \forall k \in \mathbb{N}_{+}$.  {We will comment on the restriction to periodic sequences later in the section.} Next, as in Algorithm \ref{algo:ETmin}, we consider a non-decreasing \textit{threshold} function $\gamma:\mathbb{N} \rightarrow (0,1]$. Finally, we consider the same encoder map $\mathcal{Q}_{R,B}(\cdot)$ as described in Section \ref{sec:quantrule}. 

$\bullet$ \textbf{Algorithm Description:} Each agent $i$ maintains the vectors $\boldsymbol{\pi}_{i,t}$, $\boldsymbol{\mu}_{i,t}$, $\boldsymbol{\bar{\mu}}_{i,t}$, $\boldsymbol{q}_{i,t}$, and \{$\boldsymbol{q}_{j,t}\}_{j\in\mathcal{N}_i}$; these vectors have exactly the same meaning as in Algorithm \ref{algo:Qmin}, and are initialized in the same way. At each time-step $t+1\in\mathbb{N}_{+}$, and for each hypothesis $\theta\in\Theta$, the following steps are executed by each agent in parallel.  (i) Agent $i$ updates $\pi_{i,t+1}(\theta)$ via \eqref{eqn:Bayes} and $\mu_{i,t+1}(\theta)$ via \eqref{eqn:update}. (ii) If $t+1 \in \mathbb{I}$, agent $i$ checks the following event condition:
\begin{equation}
    \mu_{i,t+1}(\theta) < \gamma(t+1) q_{i,t}(\theta).
\label{eqn:event_quant}
\end{equation}
If the above condition holds, then $\mu_{i,t+1}(\theta)$ is quantized to $q_{i,t+1}(\theta)$ based on equation \eqref{eqn:encoder}. Agent $i$ then encodes $q_{i,t+1}(\theta)$ as $\mathbb{J}_{i,t+1}(\theta)$ -- the binary representation of the index of the bin to which $\mu_{i,t+1}(\theta)$ belongs. {The index $\mathbb{J}_{i,t+1}(\theta)$ is then transmitted to every neighbor in $\mathcal{N}_i$.} If the event condition in Eq. \eqref{eqn:event_quant} fails, then agent $i$ sets $q_{i,t+1}(\theta)=q_{i,t}(\theta)$, and does not broadcast about $\theta$ to any neighbor. (iii) If $t+1 \notin \mathbb{I}$, then agent $i$ does not communicate with any neighbor, and sets $q_{i,t+1}(\theta)=q_{i,t}(\theta)$. (iv) Beliefs are decoded exactly as in Algorithm \ref{algo:Qmin}, and $\bar{\mu}_{i,t+1}(\theta)$ is updated based on \eqref{eq:qmubar}. We call the above algorithm the \texttt{Quantized Event-Triggered Min-Rule}, or simply, the  \texttt{QET Min-Rule}. 

A few points are worth highlighting about the above algorithm. First, note that unlike Algorithm \ref{algo:Qmin}, updates to an agent's quantizer, and all inter-agent interactions, are restricted to time-steps in $\mathbb{I}$ subject to the event condition \eqref{eqn:event_quant}. Between two consecutive event-monitoring time-steps, the quantizers maintained by each agent remain unchanged, and there is no communication between agents. Unlike the event condition \eqref{eq:event}, the one in \eqref{eqn:event_quant} checks whether an agent's current belief on $\theta$ has fallen significantly below the last \textit{quantized} belief on $\theta$ it broadcasted. One could, in principle, design a more involved \textit{agent-specific} event condition (in the spirit of \eqref{eq:event}) that also incorporates feedback from the neighbors. However, this would require an agent $i$ to maintain specific quantizers for each of its neighbors, significantly complicating the design and analysis of the resulting algorithm. We do not investigate such a complex mechanism here since our main aim is to (i) highlight how one can, in a simple way,  blend the ideas of event-triggering and quantization; (ii) provide a sense for the flavor of results one can expect when these ideas are combined. With this in mind, we now state the main result of this section; for its proof, see Appendix \ref{app:event_quant}. 

\begin{theorem}
Suppose every agent uses at least one bit to encode each hypothesis, i.e., let $B(\theta) \geq 1, \forall \theta\in\Theta$, and let $\bar{B}(\theta)=B^{1/\tau}(\theta)$, where $\tau$ is the communication period. Let the threshold function $\gamma(\cdot)$ satisfy the condition in \eqref{eqn:functions}. Furthermore, suppose global identifiability (Assumption \ref{assump:globiden}) holds, and the communication graph $\mathcal{G}$ is connected. Then, the \texttt{QET Min-rule} guarantees consistency, i.e., for each agent  $i\in\mathcal{V}$,  $\mu_{i,t}(\theta^{\star}) \rightarrow 1$ almost surely. Moreover, for each $i\in\mathcal{V}$, and for each  $\theta\in\Theta\setminus\{\theta^{\star}\},$ the following holds:
    \begin{equation}
        \liminf_{t\to\infty}-\frac{\log\mu_{i,t}(\theta)}{t} \geq \max_{v\in\mathcal{S}(\theta^{\star},\theta)} \bar{H}_v(\theta^*,\theta) \hspace{1mm} a.s., 
\label{eqn:quant_event_rate}
    \end{equation}
where $\bar{H}_v(\theta^*,\theta)=\min\{\bar{B}(\theta) \log 2, K_v(\theta^{\star},\theta)\}$.
\label{thm:event_quant}
\end{theorem}

\textbf{Discussion:} From Theorem \ref{thm:event_quant}, we note that as long as the threshold function $\gamma(\cdot)$ does not decay too fast, we essentially end up getting similar guarantees as in Theorem \ref{thm:Quantmin}. However, the key distinction from Theorem \ref{thm:Quantmin} lies in the effect of the communication period $\tau$ on the asymptotic learning rate: whereas we had a $B(\theta)\log(2)$ term showing up in the rate of learning earlier (see Eq. \eqref{eqn:quantizedrate}), we now have a $B^{1/\tau}(\theta)\log(2)$ term taking its place instead. It is instructive to compare this result with that of Corollary \ref{corr:functionalform} where we saw that, in the absence of quantization, even when the gap between successive event-monitoring steps grows polynomially, the long-term learning rate remains unaffected. In contrast, Theorem \ref{thm:event_quant} tells us that even a constant gap of $\tau$ does impact the convergence rate of the \texttt{QET Min-rule}, {suggesting that growing event-interval functions can significantly slow down the convergence rate}. This phenomenon can be essentially attributed to the fact that between successive event-monitoring time-steps, the quantizers at any given agent are never updated. If they were, the neighbors of this agent would not be aware of such updates, and hence, would perform incorrect decoding based on stale information. {To keep the analysis simple, and at the same time provide the above insights, we considered periodic communication patterns in this section.}}
\color{black}{}

\section{Simulations}
\label{sec:simulations}
\begin{figure}[t]
\hspace{-3mm}
\begin{tabular}{cc}
\includegraphics[scale=0.295]{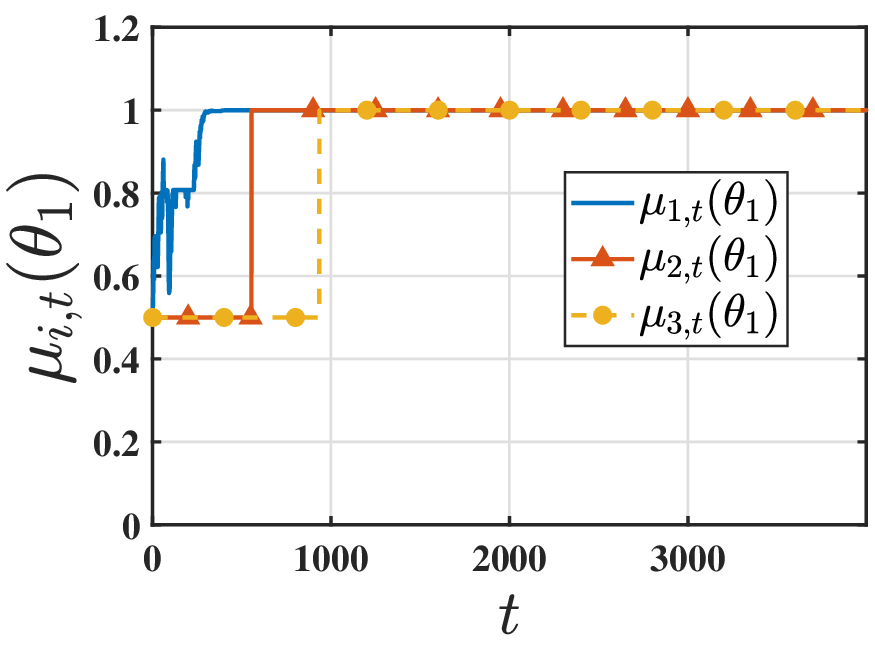}&\hspace{-3mm}\includegraphics[scale=0.295]{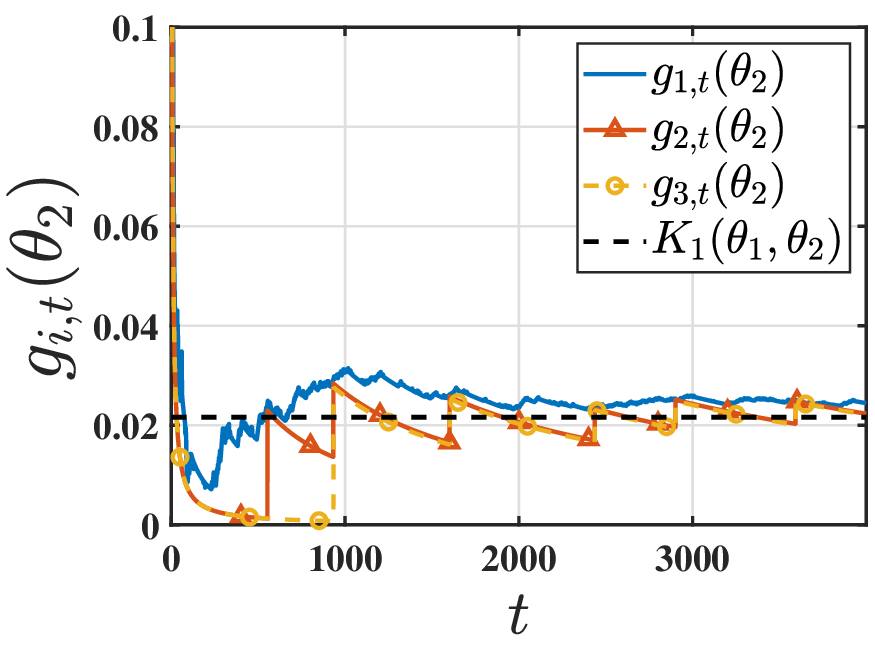}\\
(a) & (b)\\
\includegraphics[scale=0.295]{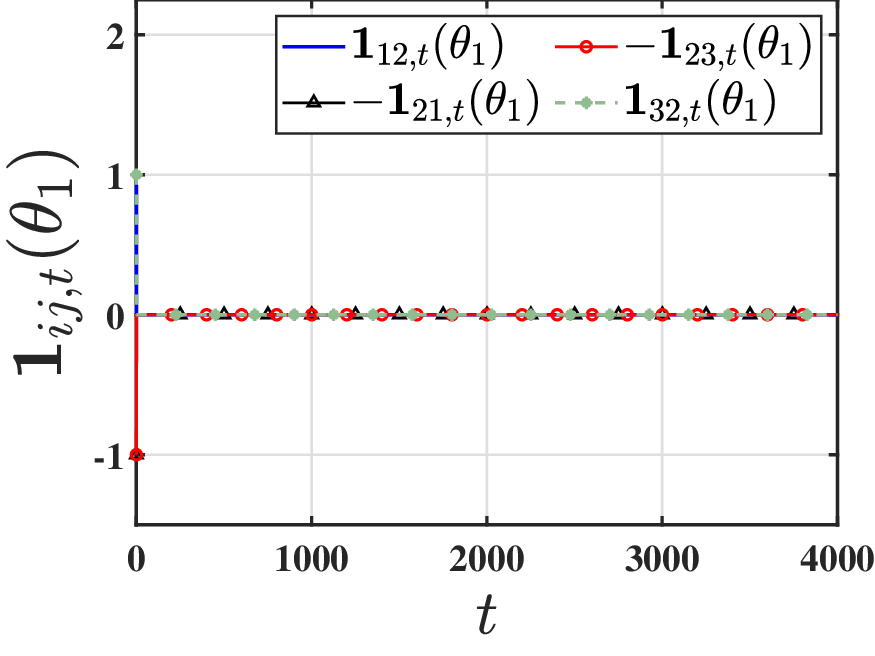}&\hspace{-3mm} \includegraphics[scale=0.295]{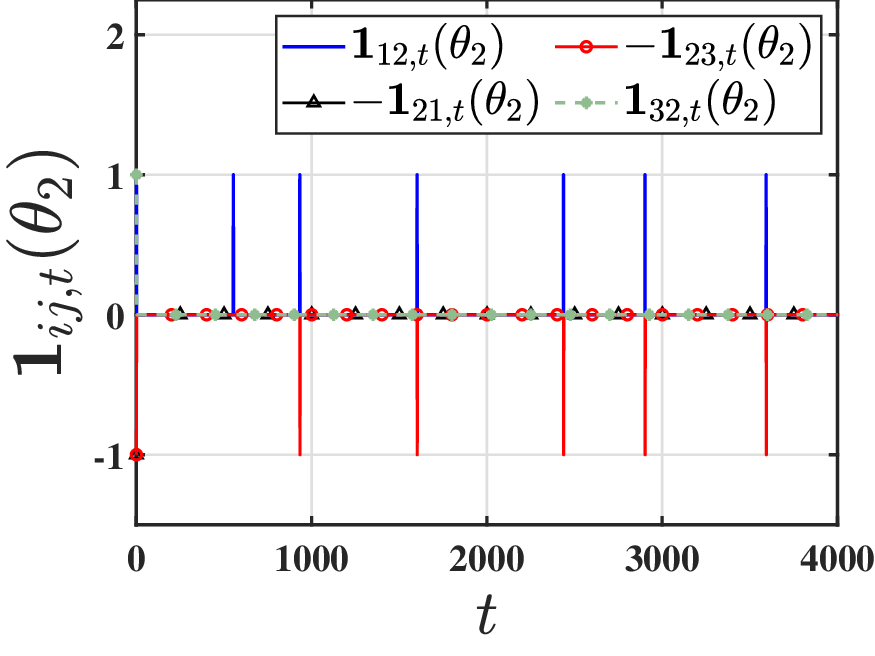}\\
(c) & (d)
\end{tabular}
\caption{Plots pertaining to the simulation example in Sec. \ref{sec:simulations}. Fig. \ref{fig:plots}(a) plots the belief evolutions on the true state $\theta_1$. Fig. \ref{fig:plots}(b) plots the rate at which each agent rejects the false hypothesis $\theta_2$, namely $q_{i,t}(\theta_2)=-\log(\mu_{i,t}(\theta_2))/t$. Fig.'s \ref{fig:plots}(c) and \ref{fig:plots}(d) demonstrate the sparse communication patterns generated by our event-triggering scheme.} 
\label{fig:plots}
\end{figure}
To validate our key theoretical findings, we first consider the simple 3-agent network in Fig. \ref{fig:example}. Suppose $\Theta=\{\theta_1,\theta_2\}, \theta^*=\theta_1$, and let the signal space for each agent be $\{0,1\}$. The likelihood models are as follows: $l_1(0|\theta_1)=0.7, l_1(0|\theta_2)=0.6$, and $l_i(0|\theta_1)=l_i(0|\theta_2)=0.5, \forall i\in\{2,3\}$. Clearly, agent $1$ is the only informative agent. To isolate the impact of our event-triggering strategy, we set $g(k)=1, \forall k\in\mathbb{N}_{+}$, i.e., the event condition in Eq. \eqref{eq:event} is monitored at every time-step. We set the threshold function as $\gamma(k)=1/k^2$. The performance of Algorithm \ref{algo:ETmin} is depicted in Fig. \ref{fig:plots}. We make the following observations. (i) From Fig. \ref{fig:plots}(a), we note that all agents eventually learn the truth. (ii) From Fig. \ref{fig:plots}(b), we note that the asymptotic rate of rejection of the false hypothesis $\theta_2$, namely ${g_{i,t}(\theta_2)}=-\log(\mu_{i,t}(\theta_2))/t$, complies with the theoretical bound in Thm.  \ref{thm:main}. (iii) From Fig. \ref{fig:plots}(c), we note that after the first time-step, all agents stop broadcasting about the true state $\theta_1$, complying with claim (i) of Prop. \ref{prop:sparsity}. (iv) From Fig. \ref{fig:plots}(d), we note that broadcasts about $\theta_2$ along the path $3\rightarrow2\rightarrow1$ stop after the first time-step, in accordance with claim (ii) of Prop. \ref{prop:sparsity}, and Corr. \ref{prop:tree}. We also observe that in the first 4000 time-steps, agent 1 (resp., agent 2) broadcasts its belief on $\theta_2$ to agent 2 (resp., agent 3) only 7 times (resp., 6 times). Despite such drastic reduction in the number of communication rounds, all agents still learn the truth with no loss in learning rate relative to the baseline algorithm in \cite{mitra2019new}. This demonstrates the effectiveness of our approach.

\begin{figure}[t]
\hspace{-4mm}
\begin{tabular}{cc}
\includegraphics[scale=0.295]{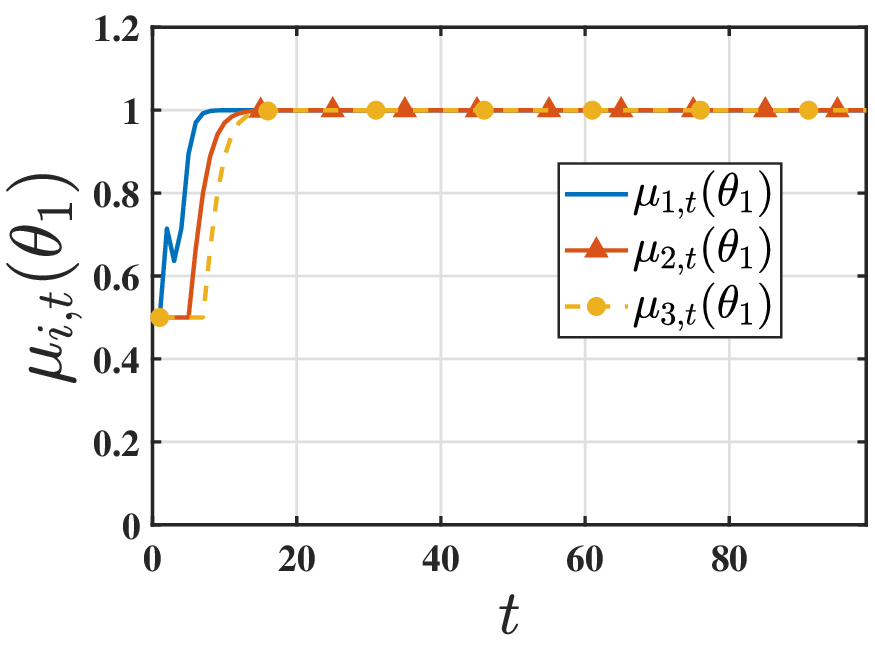}& \hspace{-3mm}  \includegraphics[scale=0.295]{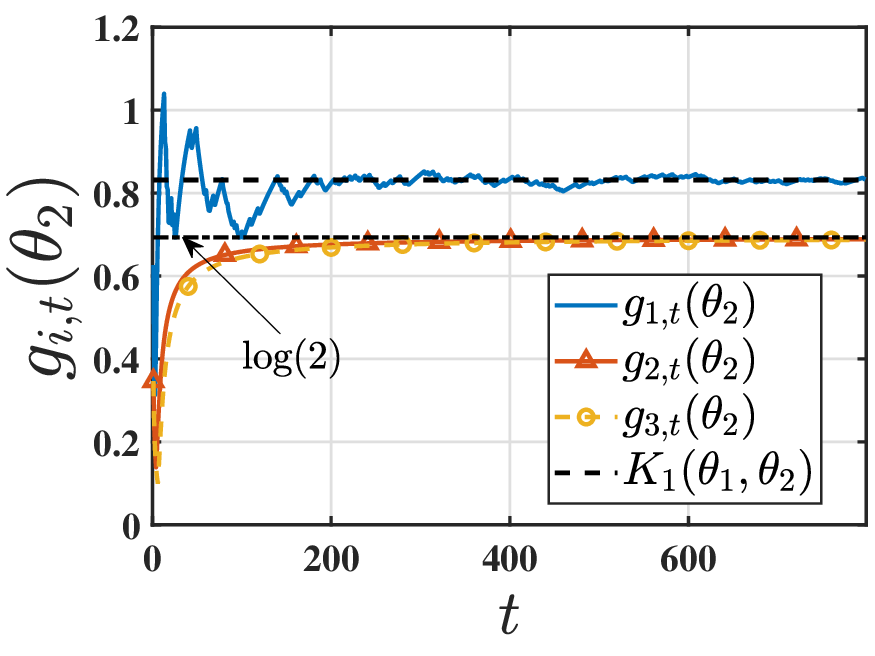}\\
(a) & (b)\\
\end{tabular}
\caption{Plots concerning the performance of Algorithm \ref{algo:Qmin} for the network in Fig \ref{fig:example}, when $1$ bit is used to encode each hypothesis. Figures \ref{fig:qplots}(a) and \ref{fig:qplots}(b) are analogous to Figures  \ref{fig:plots}(a) and \ref{fig:plots}(b). These plots demonstrate that while learning is possible even with $1$-bit precision, the learning rate exhibits a dependence on the quantizer precision level.}
\label{fig:qplots}
\end{figure}

As our second simulation study, we investigate the performance of our quantized learning rule, namely Algorithm \ref{algo:Qmin}. To do so, keeping everything else the same, suppose we now modify the likelihood model of agent 1 as follows: {$l_1(0 | \theta_1)=0.8$ and $l_1(0|\theta_2)=0.2$}. Fig.  \ref{fig:qplots} depicts the performance of Algorithm \ref{algo:Qmin} for this scenario, when $B(\theta_1)=B(\theta_2)=1$, i.e., when $1$ bit is used to encode each hypothesis. From Fig. \ref{fig:qplots}(a), we note that all agents learn the true state. Fig. \ref{fig:qplots}(b) reveals that the learning rates of the uninformative agents 2 and 3 are limited by the precision of the quantizer. In particular, since $K_1(\theta_1,\theta_2)=0.8318 > \log(2)$, the learning rates for these agents get saturated at $\log(2)$, exactly as suggested by Eq. \eqref{eqn:quantizedrate} in Theorem \ref{thm:Quantmin}. Despite these quantization effects, we observe that the beliefs of all agents converge to $\theta^*$ quite fast.

\begin{figure}[t]
\hspace{-4mm}
\begin{tabular}{cc}
\includegraphics[scale=0.295]{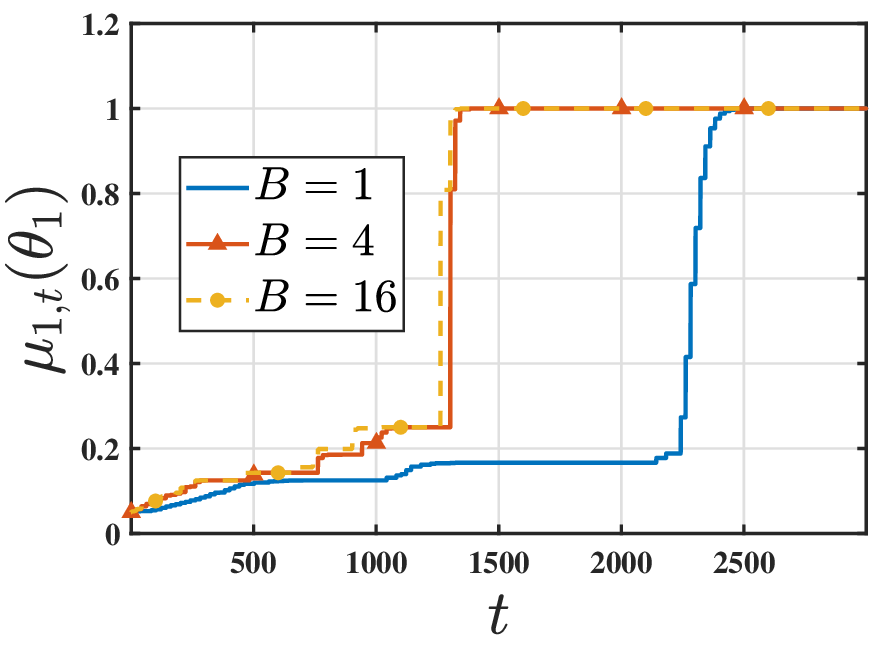}& \hspace{-3mm}  \includegraphics[scale=0.295]{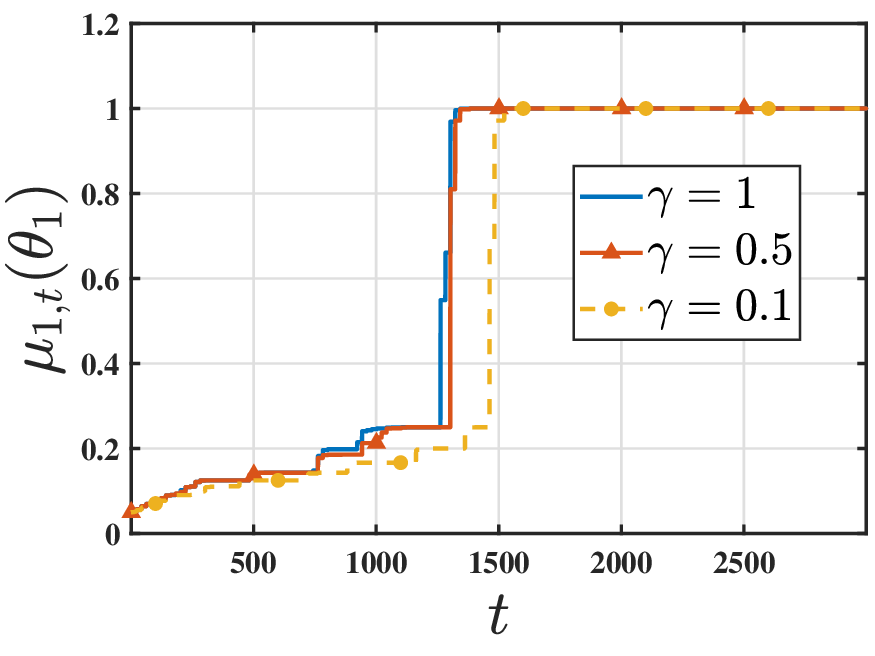}\\
(a) & (b)\\
\end{tabular}
\caption{{Plots concerning the performance of the \texttt{QET} Min-Rule on a 100-agent ring graph. Figures \ref{fig:QETplot}(a) and \ref{fig:QETplot}(b) show the evolution of agent 1's belief on $\theta^*=\theta_1$ with varying bit-precision, and varying threshold $\gamma(\cdot)$, respectively.}}
\label{fig:QETplot}
\end{figure}
{
To show that our framework extends to larger networks, we evaluate the performance of the \texttt{QET} Min-Rule of Section \ref{sec:quant_event} on a 100-agent ring graph. For this example, we let $m=20$, i.e., there are $20$ hypotheses. The common signal space is still $\{0,1\}$. The agents' likelihood models are generated as follows. For each agent $i$, we first draw an index $r_i$ uniformly at random from the set \{1,\ldots,20\}, and then set $l_i(0|\theta_{r_i})=0.7$, and $l_i(0|\theta)=0.5,  \forall \theta \in \Theta\setminus\{\theta_{r_i}\}$. Thus, agent $i$ is informative about $\theta_{r_i}$ in the sense that it can distinguish $\theta_{r_i}$ from every other hypothesis. However, agent $i$'s observation model is uninformative w.r.t. every other hypothesis. We let $\theta^*=\theta_1$. Based on the randomly generated likelihood models, only agents $28,42,79,$ and $82$ can identify $\theta_1$ as the true hypothesis on their own; the remaining agents are thus reliant on information diffusion for identifying the truth. With the communication period $\tau$ set to $20$, we plot agent 1's belief evolution on the true state $\theta_1$ in Fig. \ref{fig:QETplot}. In Fig. \ref{fig:QETplot}(a), we study how the bit-precision level impacts convergence time: as one would expect, using more bits leads to faster convergence to the truth. Nonetheless, even for this large 100-agent network with connectivity equal to just 2, Fig. \ref{fig:QETplot}(a) reveals that 1-bit precision suffices for learning. In Fig \ref{fig:QETplot}(b), we see that using a smaller threshold $\gamma(\cdot)$ slows down convergence, aligning with intuition; for this experiment, we fix the bit precision to $4$ bits.} 

\color{black}{}

\section{Conclusion}
We developed novel  learning algorithms to solve the distributed inference problem in the face of sparse and imprecise communication. For reducing the  communication frequency, we proposed an event-triggered rule that has the potential to significantly limit information flow from uninformative agents to informative agents. To deal with finite bandwidth constraints, we developed a learning rule based on adaptive quantization that allows each agent to learn the true state exponentially fast using just 1 bit to encode each hypothesis. Finally, we showed how the ideas of event-triggering and adaptive quantization can be effectively combined. Our analysis provides several new insights into the trade-offs between communication-efficiency and the learning rate. As future work, we plan to undertake a finer non-asymptotic analysis of our algorithms to reveal trade-offs between communication-efficiency and transient performance. We also plan to explore more general settings where the unknown parameter is no longer restricted to a finite set. 

\appendices
\section{Proofs pertaining to Section \ref{sec:results}}
\label{sec:proofs}
In this section, we provide proofs of all the results stated in Section \ref{sec:results}. We begin by compiling various useful properties of our update rule which will be useful later on.
\begin{lemma}
 Suppose the conditions in Theorem \ref{thm:main} hold. Then, there exists a set $\bar{\Omega}\subseteq\Omega$ with the following properties. (i) $\mathbb{P}^{\theta^{\star}}(\bar{\Omega})=1$. (ii) For each $\omega\in\bar{\Omega}$, there exist constants $\eta(\omega)\in(0,1)$ and $t'(\omega)\in(0,\infty)$ such that
\begin{equation}
\pi_{i,t}(\theta^{\star}) \geq \eta(\omega), \bar{\mu}_{i,t}(\theta^{\star}) \geq \eta(\omega), \forall t \geq t'(\omega),\forall i\in\mathcal{V}.
\label{eqn:lowerbound}
    \end{equation}
(iii) Consider a false hypothesis $\theta\neq \theta^*$, and an agent $i\in\mathcal{S}(\theta^*,\theta)$. Then on each sample path $\omega\in\bar{\Omega}$, we have:
\begin{equation}
        \liminf_{t\to\infty}-\frac{\log\mu_{i,t}(\theta)}{t} \geq K_i(\theta^{\star},\theta) \hspace{0.25mm}. 
        \label{eqn:source_rate}
    \end{equation}
\label{lemma:bound}
\end{lemma}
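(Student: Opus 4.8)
\textbf{Proof proposal for Lemma~\ref{lemma:bound}.}

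The plan is to establish the three claims in sequence, since each relies on the previous one. For claim~(ii), I would first analyze the local Bayesian recursion \eqref{eqn:Bayes}. Standard arguments (cf. \cite{jad1,mitra2019new}) using the strong law of large numbers show that for each agent $i$, $\frac{1}{t}\log\frac{\pi_{i,t}(\theta^\star)}{\pi_{i,t}(\theta)} \to K_i(\theta^\star,\theta)$ a.s.\ for every $\theta$ with $K_i(\theta^\star,\theta)>0$, while for $\theta$ observationally equivalent to $\theta^\star$ the log-ratio stays bounded. In particular, since $l_i(w_i|\theta)>0$ everywhere, $\pi_{i,t}(\theta^\star)$ never hits $0$, and on a probability-one set $\bar\Omega$ each sequence $\{\pi_{i,t}(\theta^\star)\}_{t\ge0}$ is bounded away from $0$: indeed $\pi_{i,t}(\theta^\star)\to 1$ a.s.\ for informative configurations, but even in general $\liminf_t \pi_{i,t}(\theta^\star)>0$ a.s.\ because $\pi_{i,t}(\theta^\star) \ge \pi_{i,t}(\theta^\star)/(\pi_{i,t}(\theta^\star)+\sum_{\theta\in\Theta_i^{\theta^\star}}\pi_{i,t}(\theta))$ and the finitely many log-ratios against the observationally equivalent hypotheses form recurrent random walks with bounded increments, hence are a.s.\ bounded below along the whole trajectory. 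Taking $\bar\Omega$ to be the intersection over the finitely many agents of these probability-one events gives $\mathbb{P}^{\theta^\star}(\bar\Omega)=1$, and defines $\eta(\omega)$ and $t'(\omega)$ for the $\pi$-part. For the $\bar\mu$-part: from \eqref{eqn:update}, $\mu_{i,t+1}(\theta^\star) \ge \min\{\bar\mu_{i,t}(\theta^\star),\pi_{i,t+1}(\theta^\star)\}$ since the normalizing denominator is at most $1$ (each term in the sum is a min of two numbers in $[0,1]$, and actually one can bound the denominator by $1$ using that $\min\{\bar\mu_{i,t}(\theta_p),\pi_{i,t+1}(\theta_p)\}\le\pi_{i,t+1}(\theta_p)$ and $\sum_p\pi_{i,t+1}(\theta_p)=1$). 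Combined with the min-update \eqref{eq:mubardefn}, an easy induction shows $\bar\mu_{i,t}(\theta^\star) \ge \min_{j}\inf_{s\le t}\pi_{j,s}(\theta^\star)$ over agents $j$ within the relevant reach, so $\bar\mu_{i,t}(\theta^\star)$ inherits the uniform lower bound $\eta(\omega)$ for $t\ge t'(\omega)$; shrinking $\eta$ and enlarging $t'$ if necessary handles both simultaneously.

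For claim~(iii), fix a false hypothesis $\theta$ and a source agent $i\in\mathcal{S}(\theta^\star,\theta)$. From the normalization in \eqref{eqn:update} and the bound on its denominator, $\mu_{i,t+1}(\theta) \le \frac{\pi_{i,t+1}(\theta)}{\sum_p \min\{\bar\mu_{i,t}(\theta_p),\pi_{i,t+1}(\theta_p)\}}$, and the denominator is at least $\min\{\bar\mu_{i,t}(\theta^\star),\pi_{i,t+1}(\theta^\star)\} \ge \eta(\omega)$ for $t\ge t'(\omega)$ by claim~(ii). Hence $\mu_{i,t}(\theta) \le \pi_{i,t}(\theta)/\eta(\omega)$ for all large $t$, so $-\frac{1}{t}\log\mu_{i,t}(\theta) \ge -\frac{1}{t}\log\pi_{i,t}(\theta) + \frac{1}{t}\log\eta(\omega)$. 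It then suffices to show $\liminf_t -\frac{1}{t}\log\pi_{i,t}(\theta) \ge K_i(\theta^\star,\theta)$ a.s. Writing $\pi_{i,t}(\theta) = \pi_{i,t}(\theta^\star)\cdot\frac{\pi_{i,t}(\theta)}{\pi_{i,t}(\theta^\star)} \le \frac{\pi_{i,t}(\theta)}{\pi_{i,t}(\theta^\star)}$ and unrolling \eqref{eqn:Bayes}, $\log\frac{\pi_{i,t}(\theta)}{\pi_{i,t}(\theta^\star)} = \log\frac{\pi_{i,0}(\theta)}{\pi_{i,0}(\theta^\star)} + \sum_{\tau=1}^{t}\log\frac{l_i(s_{i,\tau}|\theta)}{l_i(s_{i,\tau}|\theta^\star)}$; dividing by $t$ and applying the SLLN (the summands are i.i.d.\ with mean $-K_i(\theta^\star,\theta)$ under $\mathbb{P}^{\theta^\star}$, and have finite expectation since $\mathcal{S}_i$ is finite and the likelihoods are strictly positive) gives $\frac{1}{t}\log\frac{\pi_{i,t}(\theta)}{\pi_{i,t}(\theta^\star)}\to -K_i(\theta^\star,\theta)$ a.s., which yields the claim after intersecting with $\bar\Omega$.

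The main obstacle I anticipate is the careful bookkeeping around the denominator of \eqref{eqn:update} and the propagation of the lower bound through \eqref{eq:mubardefn}: one must verify that $\bar\mu_{i,t}(\theta^\star)$ can only decrease by taking mins of quantities that are themselves lower-bounded, which requires a simultaneous induction over all agents (the bound at agent $i$ at time $t$ depends on the received beliefs $\mu_{j,t}(\theta^\star)$, which in turn depend on agent $j$'s denominator, etc.). A clean way around circularity is to prove the stronger statement ``$\mu_{i,t}(\theta^\star) \ge \eta(\omega)$ and $\bar\mu_{i,t}(\theta^\star)\ge\eta(\omega)$ for all $i$ simultaneously, for $t\ge t'(\omega)$'' by a single induction on $t$, using at the inductive step only that (a) $\pi_{i,t}(\theta^\star)\ge\eta(\omega)$ (from the $\pi$-analysis, which is self-contained per agent), (b) the denominator bound $\le 1$, and (c) the fact that the only values entering the min in \eqref{eq:mubardefn} are $\bar\mu_{i,t-1}(\theta^\star)$ and various $\mu_{j,t}(\theta^\star)$, all of which are $\ge\eta(\omega)$ by the inductive hypothesis. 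Everything else is routine, and the only probabilistic input is the SLLN together with finiteness of $\mathcal{S}_i$, which guarantees integrability of the log-likelihood-ratio increments.
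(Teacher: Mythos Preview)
Your proposal is correct and follows essentially the same route as the paper: bound the denominator in \eqref{eqn:update} by $1$ via $\min\{\bar\mu_{i,t}(\theta_p),\pi_{i,t+1}(\theta_p)\}\le\pi_{i,t+1}(\theta_p)$, then run a simultaneous induction over all agents on $t\ge t'(\omega)$ to propagate the lower bound $\eta(\omega)$ from $\pi_{i,t}(\theta^\star)$ to $\mu_{i,t}(\theta^\star)$ and $\bar\mu_{i,t}(\theta^\star)$; for (iii), bound $\mu_{i,t}(\theta)\le \pi_{i,t}(\theta)/\eta(\omega)$ and apply the SLLN to the log-likelihood ratio. The paper does exactly this (citing \cite{mitra2019new} for the $\pi$-bound and defining $\eta(\omega)=\min\{\delta,\min_i\bar\mu_{i,t'(\omega)-1}(\theta^\star)\}$ to seed the induction, which is your ``shrinking $\eta$'' step made explicit).

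One small slip to fix: for $\theta$ observationally equivalent to $\theta^\star$ at agent $i$, the log-ratio increments $\log\frac{l_i(s_{i,\tau}|\theta^\star)}{l_i(s_{i,\tau}|\theta)}$ are identically zero, so $\pi_{i,t}(\theta)/\pi_{i,t}(\theta^\star)$ is \emph{constant}, not a recurrent random walk. (Also, a recurrent random walk with bounded increments is \emph{not} a.s.\ bounded below---think of the simple symmetric walk---so that implication would fail anyway.) The correct, simpler argument is: $\pi_{i,t}(\theta^\star)=\big(\sum_\theta \pi_{i,t}(\theta)/\pi_{i,t}(\theta^\star)\big)^{-1}$, the ratios for observationally equivalent $\theta$ stay fixed at $\pi_{i,0}(\theta)/\pi_{i,0}(\theta^\star)$, and the remaining ratios tend to $0$ a.s.\ by the SLLN, so $\pi_{i,t}(\theta^\star)$ converges to a strictly positive limit and is eventually bounded below.
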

\begin{proof}
The proof of claim (ii) rests on the same ideas as that of \cite[Lemma 2]{mitra2019new}; we thus only sketch the main arguments for completeness. From \cite[Lemma 2]{mitra2019new}, there exists a set $\bar{\Omega}\subseteq\Omega$ with $\mathbb{P}^{\theta^{\star}}(\bar{\Omega})=1$ such that for each $\omega\in\bar{\Omega}$, the following are true for every $i\in\mathcal{V}$: (i) $\pi_{i,t}(\theta^*)>0, \forall t\in\mathbb{N}$; and (ii) $\exists \delta >0$ and $t'(\omega) < \infty$ such that $\pi_{i,t}(\theta^*) \geq \delta, \forall t\geq t'(\omega)$. Fix an $\omega\in\bar{\Omega}$. Let $\rho(\omega)=\min_{i\in\mathcal{V}}\{\bar{\mu}_{i,t'(\omega)-1}(\theta^*)\}$. Based on update rules \eqref{eq:mubardefn} and \eqref{eqn:update}, observe that $\rho(\omega) >0$; for if not, this would necessarily imply that $\pi_{i,t}(\theta^*)=0$ for some agent $i$ at some time-step $t\leq t'(\omega)-1$, which would be a contradiction given our choice of $\omega$. Let $\eta(\omega)=\min\{\delta,\rho(\omega)\}$, fix an agent $i$, and consider the update of $\mu_{i,t'(\omega)}(\theta^*)$ based on \eqref{eqn:update}:
\begin{equation}
\begin{aligned}
\mu_{i,t'(\omega)}(\theta^*)&=\frac{\min\{\bar{\mu}_{i,t'(\omega)-1}(\theta^*),\pi_{i,t'(\omega)}(\theta^*)\}}{\sum\limits_{p=1}^{m}\min\{\bar{\mu}_{i,t'(\omega)-1}(\theta_p),\pi_{i,t'(\omega)}(\theta_p)\}}\\
&\geq \frac{\eta(\omega)}{\sum\limits_{p=1}^{m}\pi_{i,t'(\omega)}(\theta_p)}=\eta(\omega),
\end{aligned}
\end{equation}
where the last equality follows from the fact that the local belief vectors generated via \eqref{eqn:Bayes} are valid probability distributions over $\Theta$ at each time-step, and hence $\sum\limits_{p=1}^{m}\pi_{i,{t}'(\omega)}(\theta_p)=1$. The above argument applies identically to every agent in the graph, and hence we have from $\eqref{eq:mubardefn}$ that $\bar{\mu}_{i,t'(\omega)}(\theta^*)=\min\{\bar{\mu}_{i,t'(\omega)-1}(\theta^*),\{\mu_{j,t'(\omega)}(\theta^*)\}_{j\in \{i\}\cup \mathcal{N}_{i,t'(\omega)}(\theta^*)}\} \geq \eta(\omega)$. We have thus argued that for every agent  $i\in\mathcal{V}$, $\mu_{i,t'(\omega)}(\theta^*) \geq \eta(\omega), \bar{\mu}_{i,t'(\omega)}(\theta^*) \geq \eta(\omega)$. We can keep repeating the above analysis for each $t > t'(\omega)$ to establish \eqref{eqn:lowerbound}. Claim (iii) in Lemma \ref{lemma:bound} follows the same reasoning as  \cite[Lemma 3]{mitra2019new}.
\end{proof}

The above lemma informs us that the belief $\mu_{v,t}(\theta)$ of an agent $v\in\mathcal{S}(\theta^*,\theta)$ decays exponentially fast at a rate lower-bounded by $K_v(\theta^*,\theta)$ on a set of $\mathbb{P}^{\theta^*}$-measure 1. How does this impact the belief $\mu_{i,t}(\theta)$ of an agent $i\in\mathcal{V}\setminus\mathcal{S}(\theta^*,\theta)$? The following result answers this question. 

\begin{lemma}
Consider a false hypothesis $\theta\in\Theta\setminus\{\theta^{\star}\}$ and an agent $v\in\mathcal{S}(\theta^{\star},\theta)$. Suppose the conditions stated in Theorem \ref{thm:main} hold. Then, the following is true for each agent $i\in\mathcal{V}$:
\begin{equation}
 \liminf\limits_{t\to\infty}-\frac{\log\mu_{i,t}(\theta)}{t} \geq {\alpha^{d(v,i)}}K_v(\theta^{\star},\theta) \hspace{1mm} a.s.
      \label{eqn:eachsource}
    \end{equation}
\label{lemma:main}
\end{lemma}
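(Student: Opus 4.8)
The plan is to establish Lemma~\ref{lemma:main} by induction on the distance $d(v,i)$ from the source agent $v$ to agent $i$. The base case $d(v,i)=0$ (i.e., $i=v$) is exactly claim (iii) of Lemma~\ref{lemma:bound}, which gives the rate $K_v(\theta^\star,\theta) = \alpha^0 K_v(\theta^\star,\theta)$ on the set $\bar\Omega$ of full measure. For the inductive step, fix a sample path $\omega\in\bar\Omega$, suppose the claim holds for all agents at distance $\ell$ from $v$, and consider an agent $i$ with $d(v,i)=\ell+1$. There is a neighbor $j\in\mathcal{N}_i$ with $d(v,j)=\ell$, so by the induction hypothesis $\liminf_{t\to\infty} -\log\mu_{j,t}(\theta)/t \ge \alpha^{\ell}K_v(\theta^\star,\theta)$. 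I would introduce $\beta \triangleq \alpha^{\ell}K_v(\theta^\star,\theta)$, and for any fixed $\epsilon>0$ choose $T(\omega)$ such that $\mu_{j,t}(\theta) \le e^{-(\beta-\epsilon)t}$ for all $t\ge T(\omega)$.

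The heart of the argument is to track how $j$'s small belief on $\theta$ propagates to $i$ through the event-triggered communication and the update rules \eqref{eq:mubardefn}, \eqref{eqn:update}. The key point is that whenever $j$ broadcasts $\mu_{j,t_k}(\theta)$ to $i$ at an event-monitoring time $t_k\in\mathbb{I}$, agent $i$ incorporates it into $\bar\mu_{i,t_k}(\theta)$ via \eqref{eq:mubardefn}, and $\bar\mu$ is monotonically non-increasing, so this value persists. So I need a lower bound on how frequently $j$ actually transmits to $i$. Here the event condition \eqref{eq:event} works in our favor for a source-adjacent propagation: since $\mu_{j,t}(\theta)$ is decaying geometrically, at successive event-monitoring times the ratio $\mu_{j,t_k}(\theta)/\hat\mu_{ji,t_k}(\theta)$ eventually drops below $\gamma(t_k)$ (because $\gamma$ decays only sub-exponentially by \eqref{eqn:functions}), forcing a broadcast. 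Thus between consecutive broadcasts from $j$ to $i$ there is a bounded-in-ratio gap, and one can show that the most recent value broadcast by $j$ and stored in $\bar\mu_{i,t}(\theta)$ is at most roughly $\mu_{j,s}(\theta)$ for some $s$ not too far behind $t$ — specifically $s \ge G^{-1}(G(t)-2)$-ish, up to the event-interval structure, which is precisely where the quantity $\alpha = \lim_{t}G(G^{-1}(t)-2)/t$ enters. Combining, $\bar\mu_{i,t}(\theta) \lesssim e^{-(\beta-\epsilon)G(G^{-1}(t)-2)} \cdot (\text{threshold factors})$, and then \eqref{eqn:update} together with the lower bound $\mu_{i,t}(\theta^\star)\ge\eta(\omega)$ from Lemma~\ref{lemma:bound}(ii) on the denominator yields $-\log\mu_{i,t}(\theta)/t \ge (\beta-\epsilon)\,G(G^{-1}(t)-2)/t - o(1)$. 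Taking $t\to\infty$ gives $\alpha(\beta-\epsilon) = \alpha^{\ell+1}K_v(\theta^\star,\theta) - \alpha\epsilon$, and letting $\epsilon\downarrow 0$ closes the induction.

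I expect the main obstacle to be making precise the claim that the value of $\bar\mu_{i,t}(\theta)$ is controlled by $\mu_{j,s}(\theta)$ at a time $s$ satisfying $G(s) \gtrsim G(t)-2$. This requires carefully unwinding the interplay between (a) the event-monitoring schedule $\mathbb{I}$ with increments $g(k)$, so that $t_k \approx G^{-1}(k)$-type asymptotics hold, (b) the trigger condition \eqref{eq:event}, which only guarantees a broadcast when enough innovation has accumulated relative to $\gamma(t_k)$ and to $\hat\mu_{ji,t_k}(\theta)$ — hence one must argue that since $\mu_{j,\cdot}(\theta)\to 0$ geometrically while $\gamma$ decays sub-exponentially, a broadcast is forced within at most two consecutive event-monitoring windows — and (c) the asymmetric min$\{\hat\mu_{ij},\hat\mu_{ji}\}$ inside \eqref{eq:event}, which I would handle by noting that for the purposes of this lemma we only need a lower bound on broadcast frequency from $j$ to $i$, and $\mu_{j,t}(\theta)\le\hat\mu_{ij,t}(\theta)$ trivially once $j$ has ever broadcast about $\theta$. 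The sub-exponential decay hypothesis on $\gamma$ in \eqref{eqn:functions} and the definition of $\alpha$ are exactly the two ingredients that let these error terms vanish in the $\liminf$; the bookkeeping of the ``two-step delay'' that produces $G(G^{-1}(t)-2)$ rather than $G(G^{-1}(t))$ is the delicate part and is the reason the statement features $\alpha$ as a (possibly strict) attenuation factor.
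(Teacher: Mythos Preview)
Your inductive skeleton matches the paper's: establish the rate for $v$ via Lemma~\ref{lemma:bound}(iii), push it to a neighbor $j$ with an extra factor $\alpha$, then iterate along a shortest path. The divergence is entirely in the one-hop propagation step, and there your mechanism has a real gap.

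You plan to show that $j$ is \emph{forced} to broadcast $\mu_{j,t_k}(\theta)$ to $i$ often enough (``within at most two consecutive event-monitoring windows''), and you dispose of the $\hat\mu_{ij}$ branch of the $\min$ in \eqref{eq:event} by asserting $\mu_{j,t}(\theta)\le\hat\mu_{ij,t}(\theta)$ ``trivially once $j$ has ever broadcast about $\theta$''. Neither claim holds as stated. First, $\hat\mu_{ij,t}(\theta)$ is $i$'s last broadcast to $j$, not $j$'s; there is no a priori relation between $\mu_{j,t}(\theta)$ and that quantity, and even if there were, the event condition demands $\mu_{j,t_k}(\theta)<\gamma(t_k)\hat\mu_{ij,t_k}(\theta)$, which is strictly stronger. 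Second, the number of event-monitoring windows between successive broadcasts from $j$ to $i$ is not bounded by $2$: it depends on $\gamma(\cdot)$ and can grow without bound (e.g., for $\gamma(t)=1/t$ the gap is of order $\log t$).

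The paper sidesteps all of this by not tracking broadcast times at all. Instead, at \emph{every} event-monitoring time $t_k$ past a threshold, it does a two-case analysis. If $\mathds{1}_{ji,t_k}(\theta)=1$, then $\bar\mu_{i,t_k}(\theta)\le\mu_{j,t_k}(\theta)$ directly. If $\mathds{1}_{ji,t_k}(\theta)\neq 1$, then \eqref{eq:event} \emph{fails}, so $\mu_{j,t_k}(\theta)\ge\gamma(t_k)\min\{\hat\mu_{ji,t_k}(\theta),\hat\mu_{ij,t_k}(\theta)\}$, i.e., at least one of $\hat\mu_{ji,t_k}(\theta)$, $\hat\mu_{ij,t_k}(\theta)$ is at most $\mu_{j,t_k}(\theta)/\gamma(t_k)$. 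The point is that either quantity is already stored in $\bar\mu_{i,\cdot}(\theta)$: the first because $i$ received it from $j$, the second because $i$ itself held that belief when it broadcast. Hence in \emph{both} cases $\bar\mu_{i,t}(\theta)\le\mu_{j,t_k}(\theta)/\gamma(t_k)$ for all $t\ge t_k$, with no need to force a broadcast. The $G(G^{-1}(t)-2)$ term then arises purely from relating an arbitrary $t$ to the nearest preceding event-monitoring time $t_k$ via the integral bounds on $\sum g$, not from any broadcast-frequency argument. This contrapositive use of \eqref{eq:event} is the idea your proposal is missing; once you have it, the $\gamma$ factor is absorbed by the sub-exponential hypothesis in \eqref{eqn:functions} exactly as you anticipated.
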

\begin{proof}
Let $\bar{\Omega}\subseteq \Omega$ be the set of sample paths for which assertions (i)-(iii) of Lemma \ref{lemma:bound} hold. Fix a sample path  $\omega\in\bar{\Omega}$, an agent $v\in\mathcal{S}(\theta^{\star},\theta)$, and an agent $i\in\mathcal{V}$. When $i=v$, the assertion of Eq. \eqref{eqn:eachsource} follows directly from Eq. \eqref{eqn:source_rate} in Lemma \ref{lemma:bound}. In particular, this implies that for a fixed $\epsilon > 0$,  $\exists {t}_v(\omega,\theta,\epsilon)$ such that:
\begin{equation}
    \mu_{v,t}(\theta) <  e^{-(K_v(\theta^{\star},\theta)-\epsilon)t}, \forall t \geq {t}_v(\omega,\theta,\epsilon).
    \label{eqn:bound1}
\end{equation}
Moreover, since $\omega\in\bar{\Omega}$,  Lemma \ref{lemma:bound} guarantees the existence of a time-step $t'(\omega) < \infty$, and a constant $\eta(\omega)>0$, such that on $\omega$, $\pi_{i,t}(\theta^{\star}) \geq \eta(\omega), \bar{\mu}_{i,t}(\theta^{\star}) \geq \eta(\omega), \forall t\geq t'(\omega), \forall i\in\mathcal{V}$. Let $\bar{t}_v(\omega,\theta,\epsilon)=\max\{t'(\omega),t_v(\omega,\theta,\epsilon)\}$.  Let $t_q > \bar{t}_v$ be the first event-monitoring time-step in $\mathbb{I}$ that is larger than  $\bar{t}_v$.\footnote{We will henceforth suppress the dependence of various quantities on $\omega,\theta$, and $\epsilon$ for brevity.} Now consider any $t_k\in\mathbb{I}$ such that $t_k\geq t_q$. In what follows, we will analyze the implications of agent $v$ deciding whether or not to broadcast its belief on $\theta$ to a one-hop neighbor $j\in\mathcal{N}_v$ at $t_k$. To this end, we consider the following two cases. 

\underline{\textbf{Case 1}: $\mathds{1} _{vj,t_k}(\theta)=1$}, i.e., $v$ broadcasts $\mu_{v,t_k}(\theta)$ to $j$ at $t_k$. Thus, since $v\in\mathcal{N}_{j,t_k}(\theta)$, we have $\bar{\mu}_{j,t_k}(\theta)\leq \mu_{v,t_k}(\theta)$ from \eqref{eq:mubardefn}. Let us now observe that $\forall t\geq t_{k}+1$: 
\begin{equation}
\resizebox{0.9\hsize}{!}{$
\begin{aligned}
\mu_{j,t}(\theta)&\overset{(a)}{\leq}\frac{\bar{\mu}_{j,t-1}(\theta)}{\sum\limits_{p=1}^{m}\min\{\bar{\mu}_{j,t-1}(\theta_p),\pi_{j,t}(\theta_p)\}}\\
&\overset{(b)}{\leq}\frac{\mu_{v,t_k}(\theta)}{\sum\limits_{p=1}^{m}\min\{\bar{\mu}_{j,t-1}(\theta_p),\pi_{j,t}(\theta_p)\}} 
\overset{(c)}{<}\frac{e^{-(K_v(\theta^{\star},\theta)-\epsilon)t_k}}{\eta}.
\end{aligned}
$}
\label{eqn:case1bnd}
\end{equation}
In the above inequalities, (a) follows directly from \eqref{eqn:update}, (b) follows by noting that the sequence $\{\bar{\mu}_{j,t}(\theta)\}$ is non-increasing based on \eqref{eq:mubardefn}, and (c) follows from \eqref{eqn:bound1} and the fact that all beliefs on $\theta^{\star}$ are bounded below by $\eta$ for $t \geq \bar{t}_v$. 

\underline{\textbf{Case 2}: $\mathds{1} _{vj,t_k}(\theta) \neq 1$}, i.e., $v$ does not broadcast $\mu_{v,t_k}(\theta)$ to $j$ at $t_k$. From the event condition in \eqref{eq:event}, it must then be that at least one of the following is true: (a) $\mu_{v,t_k}(\theta) \geq \gamma(t_k) \hat{\mu}_{vj,t_k}(\theta)$, and (b) $\mu_{v,t_k}(\theta) \geq \gamma(t_k) \hat{\mu}_{jv,t_k}(\theta)$. Suppose $\mu_{v,t_k}(\theta) \geq \gamma(t_k) \hat{\mu}_{vj,t_k}(\theta)$. From \eqref{eqn:bound1}, we then have:
\begin{equation}
    \hat{\mu}_{vj,t_k}(\theta)\leq \frac{\mu_{v,t_k}(\theta)} {\gamma(t_k)} <  \frac{e^{-(K_v(\theta^{\star},\theta)-\epsilon)t_k}}{\gamma(t_k)}.
    \label{eqn:case2(a)interim}
\end{equation}
In words, the above inequality places an upper bound on the belief of agent $v$ on $\theta$ when it last transmitted its belief on $\theta$ to agent $j$, \textit{prior} to time-step $t_k$; at least one such transmission is guaranteed to take place since all agents broadcast their entire belief vectors to their neighbors at $t_1$. Noting that $\bar{\mu}_{j,t}(\theta) \leq  \hat{\mu}_{vj,t_k}(\theta)$, $\forall t\geq t_k$, using \eqref{eqn:update},  \eqref{eqn:case2(a)interim}, and arguments similar to those for arriving at \eqref{eqn:case1bnd}, we obtain:  
\begin{equation}
    \mu_{j,t}(\theta) < \frac{e^{-(K_v(\theta^{\star},\theta)-\epsilon)t_k}}{\eta \gamma(t_k)} \leq \frac{e^{-(K_v(\theta^{\star},\theta)-\epsilon)t_k}}{\eta\gamma(t)}, \forall t\geq t_k+1,
    \label{eqn:case2bnd}
\end{equation}
where the last inequality follows from the fact that $\gamma(\cdot)$ is a non-increasing function of its argument. Now consider the case when $\mu_{v,t_k}(\theta) \geq \gamma(t_k) \hat{\mu}_{jv,t_k}(\theta)$. Following the same reasoning as before, we can arrive at an identical upper-bound on $\hat{\mu}_{jv,t_k}(\theta)$ as in \eqref{eqn:case2(a)interim}. Using the definition of $\hat{\mu}_{jv,t_k}(\theta)$, and the fact that agent $j$ incorporates its own belief on $\theta$ in the update rule \eqref{eq:mubardefn}, we have that $\bar{\mu}_{j,t}(\theta)\leq\hat{\mu}_{jv,t_k}(\theta), \forall t\geq t_{k}$. Using similar arguments as before, observe that the bound in \eqref{eqn:case2bnd} holds for this case too.

Combining the analyses of cases 1 and 2, referring to \eqref{eqn:case1bnd} and \eqref{eqn:case2bnd}, and noting that $\gamma(t)\in (0,1], \forall t\in\mathbb{N}$, we conclude that the bound in \eqref{eqn:case2bnd} holds for each $t_k\in\mathbb{I}$ such that $t_k > \bar{t}_v$. Now since $t_{k+1}-t_{k}=g(k)$, for any $\tau\in\mathbb{N}_{+}$ we have:
\begin{equation}
    t_{q+\tau}=t_q+\sum\limits_{z=q}^{q+\tau-1}g(z).
\end{equation}
Next, noting that $g(\cdot)$ is non-decreasing, observe that:
\begin{equation}
   t_q+ \int\limits_{q}^{q+\tau}g(z-1)dz \leq t_{q+\tau} \leq t_q+\int\limits_{q}^{q+\tau}g(z)dz. 
\end{equation}
The above yields: $l(q,\tau)\triangleq t_q+G(q+\tau-1)-G(q-1) \leq t_{q+\tau} \leq t_q+G(q+\tau)-G(q)\triangleq u(q,\tau)$. Fix any time-step $t > u(q,1)$, let $\tau(t)$ be the largest index such that $u(q,\tau(t)) < t$, and $\bar{\tau}(t)$ be the largest index such that $t_{q+\bar{\tau}(t)} < t$. Observe:
\begin{equation}
    \bar{t}_v < t_q < t_{q+1} \leq t_{q+\tau(t)} \leq t_{q+\bar{\tau}(t)} < t.
\end{equation}
Using the above inequality, the fact that $l(q,\tau(t)) \leq t_{q+\tau(t)}$, and referring to \eqref{eqn:case2bnd}, we obtain:
\begin{equation}
    \mu_{j,t}(\theta) < \frac{e^{-(K_v(\theta^{\star},\theta)-\epsilon)t_{q+\bar{\tau}(t)}}}{\eta \gamma(t)}  \leq \frac{e^{-(K_v(\theta^{\star},\theta)-\epsilon)l(q,\tau(t))}}{\eta \gamma(t)}.
\label{eqn:finalbndinterim}
\end{equation}
From the definition of $\tau(t)$ and $u(q,\tau(t))$, we have $q+\tau(t)=\ceil*{G^{-1}(t-t_q+G(q))}-1$. This yields:
\begin{equation}
\begin{aligned}
    l(q,\tau(t))&=t_q+G(\ceil*{G^{-1}(t-t_q+G(q))}-2)-G(q-1)\\
    & \geq t_q+G(G^{-1}(t-t_q+G(q))-2)-G(q-1).
    \end{aligned}
    \label{eqn:lqt}
\end{equation}
From  \eqref{eqn:finalbndinterim} and \eqref{eqn:lqt}, we obtain the following $\forall t > u(q,1)$:
\begin{equation}
    -\frac{\log\mu_{j,t}(\theta)}{t} > \frac{\tilde{G}(t)}{t}(K_v(\theta^{\star},\theta)-\epsilon) -\frac{\log c}{t} -\frac{\log(1/\gamma(t))}{t},
\label{eqn:finalbnd}
\end{equation}
where $\tilde{G}(t)=G(G^{-1}(t-t_q+G(q))-2)$, and $c=e^{-(K_v(\theta^*,\theta)-\epsilon)(t_q-G(q-1))}/\eta$. Now taking the limit inferior on both sides of \eqref{eqn:finalbnd} and using \eqref{eqn:functions} yields:
\begin{equation}
    \liminf\limits_{t\to\infty}-\frac{\log\mu_{j,t}(\theta)}{t} \geq {\alpha}(K_v(\theta^{\star},\theta)-\epsilon).
\label{eqn:onehoprate}
\end{equation}
Finally, since the above inequality holds for any sample path $\omega\in\bar{\Omega}$, and an arbitrarily small $\epsilon$, it follows that the assertion in \eqref{eqn:eachsource} is true for every one-hop neighbor $j$ of agent $v$. 

Now consider any agent $i$ such that $d(v,i)=2$. Clearly, there must exist some $j\in\mathcal{N}_v$ such that $i\in\mathcal{N}_j$. Following an identical line of reasoning as before, it is easy to see that with $\mathbb{P}^{\theta^*}$-measure 1,  $\mu_{i,t}(\theta)$ decays exponentially at a rate that is at least $\alpha$ times the rate at which $\mu_{j,t}(\theta)$ decays to zero. From \eqref{eqn:onehoprate}, the latter rate is at least $\alpha K_v(\theta^*,\theta)$, and hence, the former is at least  $\alpha^2 K_v(\theta^*,\theta)$. This establishes the claim of the lemma for all agents that are two-hops away from agent $v$. Since $\mathcal{G}$ is connected, given any $i\in\mathcal{V}$, there exists a path $\mathcal{P}(v,i)$ in $\mathcal{G}$ from $v$ to $i$. One can keep repeating the above argument along the path $\mathcal{P}(v,i)$ to complete the proof. 
\end{proof}

We are now in position to prove Theorem \ref{thm:main}.
\begin{proof} (\textbf{Theorem \ref{thm:main}}) Fix a  $\theta\in\Theta\setminus\{\theta^{\star}\}$. Based on condition (i) of the Theorem, $\mathcal{S}(\theta^{\star},\theta)$ is non-empty, and based on condition (ii), there exists a path from each agent $v\in\mathcal{S}(\theta^{\star},\theta)$ to every agent  $i\in \mathcal{V}\setminus\{v\}$;  Eq.  \eqref{eqn:asymprate} then follows from Lemma \ref{lemma:main}. By definition of a source set, $K_v(\theta^{\star},\theta)>0, \forall v\in\mathcal{S}(\theta^{\star},\theta)$; Eq.  \eqref{eqn:asymprate} then implies $\lim_{t\to\infty}\mu_{i,t}(\theta)=0$ a.s., $\forall i\in\mathcal{V}$.
\end{proof}

\begin{proof} (\textbf{Proposition \ref{prop:sparsity}})
Let the set $\bar{\Omega}$ have the same meaning as in Lemma \ref{lemma:main}. Fix any $\omega\in\bar{\Omega}$, and note that since the conditions of Theorem \ref{thm:main} are met, $\mu_{i,t}(\theta^*)\rightarrow 1$ on $\omega,\forall i\in\mathcal{V}$. We prove the first claim of the proposition via contradiction. Accordingly, suppose the claim does not hold. Since there are only finitely many agents, this implies the existence of some $i\in\mathcal{V}$ and some $j\in\mathcal{N}_i$,  such that $i$ broadcasts its belief on $\theta^*$ to $j$ infinitely often, i.e., there exists a sub-sequence $\{t_{p_k}\}$ of $\{t_k\}$ at which the event-condition \eqref{eq:event} gets satisfied for $\theta^*$. From \eqref{eq:event}, $\mu_{i,t_{p_k}}(\theta^*) < \gamma^{k}\mu_{i,t_{p_0}}(\theta^*),\forall k\in\mathbb{N}_{+}$, where $\gamma\triangleq \gamma(t_{p_0})$. This implies $\lim_{k\to\infty} \mu_{i,t_{p_k}}(\theta^*)=0$, contradicting the fact that on $\omega$, $\lim_{t\to\infty} \mu_{i,t}(\theta^*)=1$. 

For establishing the second claim, fix $\omega\in\bar{\Omega}$,  $\theta\neq\theta^*$, and $i\notin\mathcal{S}(\theta^*,\theta)$. Since $i\notin\mathcal{S}(\theta^*,\theta)$, there exists $\tilde{t}_1 < \infty$ and $\bar{\eta}>0$, such that $\pi_{i,t}(\theta) \geq \bar{\eta}, \forall t \geq \tilde{t}_1.$ This follows from the fact that since $\theta$ is observationally equivalent to $\theta^*$ for agent $i$, the claim regarding $\pi_{i,t}(\theta^*)$ in Eq. \eqref{eqn:lowerbound} applies identically to $\pi_{i,t}(\theta)$. Note also that since the conditions of Theorem \ref{thm:main} are met, $\mu_{i,t}(\theta) \rightarrow 0$ on $\omega$. From \eqref{eq:mubardefn}, $\bar{\mu}_{i,t}(\theta) \rightarrow 0$ as well. Thus, there must exist some $\tilde{t}_2 < \infty$ such that $\min\{\bar{\mu}_{i,t}(\theta),\pi_{i,t+1}(\theta)\}=\bar{\mu}_{i,t}(\theta), \forall t \geq \tilde{t}_2$. Let $\tilde{t}=\max\{\tilde{t}_1,\tilde{t}_2\}$. Consider any $t_k\in\mathbb{I}, t_k > \tilde{t}$. We claim:
\begin{equation}
    \mu_{i,t}(\theta) \geq \bar{\mu}_{i,t_k}(\theta), \forall t\in [t_{k}+1,t_{k+1}], \hspace{1mm} \textrm{and}
    \label{eq:uninf1}
\end{equation}
\begin{equation}
 \hspace{-12mm}   \bar{\mu}_{i,t}(\theta) \geq \bar{\mu}_{i,t_k}(\theta), \forall t\in [t_{k},t_{k+1}).
    \label{eq:uninf2}
\end{equation}
To see why the above inequalities hold, consider the update of $\mu_{i,t_k+1}(\theta)$ based on \eqref{eqn:update}. Since $t_k > \tilde{t}_2$, we have  $\min\{\bar{\mu}_{i,t_k}(\theta),\pi_{i,t_k+1}(\theta)\}=\bar{\mu}_{i,t_k}(\theta)$. Noting that the denominator of the fraction on the R.H.S. of \eqref{eqn:update} is at most $1$, we obtain: $\mu_{i,t_k+1}(\theta) \geq \bar{\mu}_{i,t_k}(\theta).$ If $t_{k}+1=t_{k+1}$, then the claim follows. Else, if $t_{k}+1 <t_{k+1}$, then since no communication occurs at $t_{k}+1$, we have from \eqref{eq:mubardefn} that $\bar{\mu}_{i,t_k+1}(\theta)=\min\{\bar{\mu}_{i,t_k}(\theta),\mu_{i,t_k+1}(\theta)\} \geq \bar{\mu}_{i,t_k}(\theta).$ We can keep repeating the above argument for each $t\in [t_k+1,t_{k+1}]$ to establish the claim. In words, inequalities \eqref{eq:uninf1} and \eqref{eq:uninf2} reveal that agent $i$ cannot lower its belief on the false hypothesis $\theta$ between two consecutive event-monitoring time-steps when it does not hear from any neighbor. We will make use of this fact repeatedly during the remainder of the proof. Let $t_{p_0}> \tilde{t}$ be the first time-step in $\mathbb{I}$ to the right of $\tilde{t}$. Now consider the following sequence, where $k\in\mathbb{N}$: 
\begin{equation}
    t_{p_{k+1}}=\inf\{t\in\mathbb{I}: t > t_{p_k}, \bar{\mu}_{i,t}(\theta) < \bar{\mu}_{i,t-1}(\theta) \}.
    \label{eq:Sequence}
\end{equation}
The above sequence represents those event-monitoring time-steps at which $\bar{\mu}_{i,t}(\theta)$ decreases. We first argue that $\{t_{p_k}\}$ is well-defined, i.e., each term in the sequence is finite. If not, then based on \eqref{eq:uninf2}, this would mean that $\bar{\mu}_{i,t}(\theta)$ remains bounded away from $0$, contradicting the fact that $\bar{\mu}_{i,t}(\theta) \rightarrow 0$ on $\omega$. Next, for each $k\in\mathbb{N}_{+}$, let $j_{p_k}\in \argmin_{j\in\mathcal{N}_{i,t_{p_k}}(\theta)\cup\{i\}} \mu_{j,t_{p_k}}(\theta).$ We claim that $i\neq j_{p_k}$. To see why this is true, suppose $i=j_{p_k}$. Then, based on the definition of $t_{p_k}$, we would have $\bar{\mu}_{i,t_{p_k}}(\theta)=\mu_{i,t_{p_k}}(\theta) < \bar{\mu}_{i,t_{p_k}-1}(\theta)$. However, as $t_{p_k} > \tilde{t}_2$, we have from \eqref{eqn:update} that $\mu_{i,t_{p_k}}(\theta) \geq \bar{\mu}_{i,t_{p_k}-1}(\theta)$, leading to the desired contradiction. In the final step of the proof, we claim that $i$ does not broadcast its belief on $\theta$ to $j_{p_k}$ over $[t_{p_k}+1,t_{p_{k+1}}]$. 

To establish this claim, we start by noting that based on the definitions of $j_{p_k}$ and $t_{p_k}$, $\bar{\mu}_{i,t_{p_k}}(\theta)=\mu_{j_{p_k},t_{p_k}}(\theta)$.  Let us first consider the case when there are no intermediate event-monitoring time-steps in $(t_{p_k},t_{p_{k+1}})$, i.e., $t_{p_k}$ and $t_{p_{k+1}}$ are consecutive terms in $\mathbb{I}$. 
Then, at $t_{p_{k+1}}$, $\hat{\mu}_{j_{p_k}i, t_{p_{k+1}}}(\theta)=\mu_{j_{p_k},t_{p_k}}(\theta)$, since no communication occurs over $(t_{p_k},t_{p_{k+1}})$. Moreover, using \eqref{eq:uninf1}, $\mu_{i,t_{p_{k+1}}}(\theta) \geq \bar{\mu}_{i,t_{p_k}}(\theta)=\mu_{j_{p_k},t_{p_k}}(\theta)$. Thus, the event condition \eqref{eq:event} gets violated at $t_{p_{k+1}}$, and $i$ does not broadcast its belief on $\theta$ to $j_{p_k}$. Next, consider the scenario when there is exactly one event-monitoring time-step - say $\bar{t}\in\mathbb{I}$ -  in the interval $(t_{p_k},t_{p_{k+1}})$.  Since $t_{p_k}$ and $\bar{t}$ are now consecutive terms in $\mathbb{I}$, the fact that $\mathds{1}_{ij_{p_k},\bar{t}}(\theta) \neq 1$ follows from exactly the same reasoning as earlier. We argue that $\mathds{1}_{j_{p_k}i,\bar{t}}(\theta) \neq 1$ as well. To see this, suppose that $j_{p_k}$ does in fact broadcast $\mu_{j_{p_k},\bar{t}}(\theta)$ to $i$ at $\bar{t}$. For this to happen, the event condition \eqref{eq:event} entails: $\mu_{j_{p_k},\bar{t}}(\theta) < \gamma(\bar{t}) \mu_{j_{p_k},t_{p_k}}(\theta)=\gamma(\bar{t}) \bar{\mu}_{i,t_{p_k}}(\theta) \leq \bar{\mu}_{i,t_{p_k}}(\theta)$. Since $\bar{\mu}_{i,\bar{t}-1}(\theta) \geq \bar{\mu}_{i,t_{p_k}}(\theta)$ from \eqref{eq:uninf2}, $\mathds{1}_{j_{p_k}i,\bar{t}}(\theta) = 1$ would then imply that $\bar{\mu}_{i,\bar{t}}(\theta) < \bar{\mu}_{i,\bar{t}-1}(\theta)$, violating the fact that $\bar{t} < t_{p_{k+1}}$. The above reasoning suggests that $\hat{\mu}_{j_{p_k}i, t}(\theta)=\mu_{j_{p_k},t_{p_k}}(\theta), \forall t\in (t_{p_k},t_{{p}_{k+1}}]$. Moreover, since $\bar{\mu}_{i,t}(\theta)$ does not decrease at $\bar{t}$ (as $\bar{t} < t_{p_{k+1}}$), we have from \eqref{eq:uninf1} that  $\mu_{i,t}(\theta) \geq \bar{\mu}_{i,t_{p_k}}(\theta)=\mu_{j_{p_k},t_{p_k}}(\theta), \forall t\in (t_{p_k},t_{p_{k+1}}]$. It follows from the preceding discussion that $\eqref{eq:event}$ gets violated at $t_{p_{k+1}}$, and hence $\mathds{1}_{ij_{p_k},t_{p_{k+1}}}(\theta) \neq 1$. The above arguments readily carry over to the case when there are an arbitrary number of event-monitoring time-steps in the interval $(t_{p_k},t_{p_{k+1}})$. Thus, we omit such details. 

We conclude that over each interval of the form $(t_{p_k},t_{p_{k+1}}], k \in \mathbb{N}_{+}$, there exists a neighbor $j_{p_k}\in\mathcal{N}_i$ to which agent $i$ does not broadcast its belief on $\theta$. We can obtain one such $t_{p_1}$ for each $i\notin\mathcal{S}(\theta^*,\theta)$, and take the maximum of such time-steps to obtain $T_2(\omega)$.
\end{proof}

\begin{proof} (\textbf{Corollary  \ref{prop:tree}}) 
Let us fix $\theta \neq \theta^*$, and partition the set of agents $\mathcal{V}\setminus\{v_{\theta}\}$ based on their distances from $v_{\theta}$. Accordingly, we use  $\mathcal{L}_q(\theta)$ to represent level-$q$ agents that are at distance $q$ from $v_{\theta}$, where $q\in\mathbb{N}_{+}.$ Let the agent(s) that are farthest from $v_{\theta}$ be at level $\bar{q}$. Now consider any agent $i\in\mathcal{L}_{\bar{q}}(\theta)$. Based on the conditions of the proposition, note that $i\notin\mathcal{S}(\theta^*,\theta)$, and the only neighbor of $i$ is its parent in the tree rooted at $v_{\theta}$, denoted by  $p_i(\theta)$. Thus, claim (ii) of Proposition \ref{prop:sparsity} applies to agent $i$, implying that agent $i$ stops broadcasting its belief on $\theta$ to $p_i(\theta)$ eventually almost surely. Next,  consider an agent $j\in\mathcal{L}_{\bar{q}-1}(\theta)$. We have already argued that after a finite number of time-steps, $j$ will stop hearing broadcasts about $\theta$ from its children in level $\bar{q}$. Thus, for large enough $k$, $\mathcal{N}_{j,t_k}(\theta)$ can only comprise of $p_j(\theta)$, namely the parent of agent $j$ in level $\bar{q}-2$. In particular, given that $j\notin\mathcal{S}(\theta^*,\theta)$, the decrease in $\bar{\mu}_{j,t}(\theta)$ at time-steps defined by \eqref{eq:Sequence} can only be caused by $p_j(\theta)$. It then readily follows from the proof of Proposition \ref{prop:sparsity} that $j$ will stop broadcasting $\mu_{j,t}(\theta)$ to $p_j(\theta)$ eventually almost surely. We can essentially keep repeating the above argument until we reach level 1.  
\end{proof}
\begin{proof} (\textbf{Theorem}  \ref{thm:asymp}) The proof of this result is similar in spirit to that of Theorem \ref{thm:main}. Hence, we only sketch the essential details. We begin by noting that the claims in Lemma \ref{lemma:bound} hold under the conditions of the theorem - this can be easily verified. Let $\bar{\Omega}$ have the same meaning as in Lemma \ref{lemma:main}. Fix $\omega\in\bar{\Omega}$ and an arbitrarily small $\epsilon > 0$. Since $\mathbb{P}^{\theta^*}(\bar{\Omega})=1$, to prove the result, it suffices to argue that for each false hypothesis $\theta\neq\theta^*$, $\exists T(\omega,\theta,\epsilon)$ such that on $\omega$,  $\mu_{i,t}(\theta) < \epsilon, \forall t\geq T(\omega,\theta,\epsilon), \forall i\in \mathcal{V}$. Recall that based on Lemma \ref{lemma:bound},  there exists a time-step $t'(\omega) < \infty$, and a constant $\eta(\omega)>0$, such that on $\omega$, $\pi_{i,t}(\theta^{\star}) \geq \eta(\omega), \bar{\mu}_{i,t}(\theta^{\star}) \geq \eta(\omega), \forall t\geq t'(\omega), \forall i\in\mathcal{V}$. Set $\bar{\epsilon}(\omega)=\min\{\epsilon,\gamma\eta(\omega)\}$. Also, from Lemma \ref{lemma:bound}, we know that there exists $\bar{t}$ such that $\mu_{i,t}(\theta) < \bar{\epsilon}^{|\mathcal{V}|}, \forall t\geq \bar{t}, \forall i\in\mathcal{S}(\theta^*,\theta)$.\footnote{As before, we have suppressed dependence of various quantities on $\omega,\theta,$ and $\epsilon$, since they can be inferred from context.} Let $\tilde{t}_0=\max\{t',\bar{t}\}.$ Since the union graph over $[\tilde{t}_0,\infty)$ is rooted at $\mathcal{S}(\theta^*,\theta)$, there exists a set $\mathcal{F}_1(\theta)\in\mathcal{V}\setminus\mathcal{S}(\theta^*,\theta)$ of agents such that each agent in $\mathcal{F}_1(\theta)$ has at least one neighbor in $\mathcal{S}(\theta^*,\theta)$ in the union graph. Accordingly, consider any $j\in\mathcal{F}_1(\theta)$, and suppose $j\in\mathcal{N}_i(\tau)$, for some $i\in\mathcal{S}(\theta^*,\theta)$, and some $\tau \geq \tilde{t}_0$. The cases $\mathds{1}_{ij,\tau}(\theta)=1$ and $\mathds{1}_{ij,\tau}(\theta)\neq 1$ can be analyzed exactly as in the proof of Lemma \ref{lemma:main} to yield: 
\begin{equation}
\mu_{j,t}(\theta) < \frac{\bar{\epsilon}^{|\mathcal{V}|}}{\eta \gamma} \leq \bar{\epsilon}^{(|\mathcal{V}|-1)}, \forall t > \tau,
\end{equation}
where the last inequality follows by noting that $\bar{\epsilon} \leq \eta\gamma$.
Let $\tilde{t}_1 > \tilde{t}_{0}$ be the first time-step by which every agent in $\mathcal{F}_1(\theta)$ has had at least one neighbor in $\mathcal{S}(\theta^*,\theta)$. Then, based on the above reasoning, $\mu_{j,t}(\theta) < \bar{\epsilon}^{(|\mathcal{V}|-1)}, \forall t > \tilde{t}_1, \forall j\in\mathcal{F}_1(\theta)$. If $\mathcal{V}\setminus\{\mathcal{S}(\theta^*,\theta)\cup\mathcal{F}_1(\theta)\}=\emptyset$, then we are done. Else, given the fact that the union graph over $[\tilde{t}_1,\infty)$ is rooted at $\mathcal{S}(\theta^*,\theta)$, there must exist a non-empty set $\mathcal{F}_2(\theta)$ such that each agent in $\mathcal{F}_2(\theta)$ has at least one neighbor from the set $\mathcal{S}(\theta^*,\theta)\cup\mathcal{F}_1(\theta)$ in the union graph. Reasoning as before, one can conclude that there exists a time-step $\tilde{t}_2 > \tilde{t}_1$ such that $\mu_{j,t}(\theta) < \bar{\epsilon}^{(|\mathcal{V}|-2)}, \forall t > \tilde{t}_2, \forall j\in\mathcal{F}_2(\theta)$. To complete the proof, we can keep repeating the above construction until we exhaust the vertex set $\mathcal{V}$.
\end{proof}
\section{Proof of Theorem \ref{thm:Quantmin}}
\label{app:ProofThmQmin}
We begin with the following lemma.
\begin{lemma}
Suppose the conditions of Theorem \ref{thm:Quantmin} are satisfied. Then, assertions (i)-(iii) in Lemma \ref{lemma:bound}
hold when each agent employs Algorithm  \ref{algo:Qmin}.
\label{lemma:qminbound}
\end{lemma}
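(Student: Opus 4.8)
The plan is to re-examine the proof of Lemma \ref{lemma:bound} and verify that its argument carries over to Algorithm \ref{algo:Qmin}, with the key observation being that the quantization map $\mathcal{Q}_{R,B}(\cdot)$ only ever \emph{rounds beliefs upward}. Concretely, from the encoder \eqref{eqn:encoder} we have $q_{i,t+1}(\theta) \geq \mu_{i,t+1}(\theta)$ whenever a quantization occurs (since $\ceil{x} \geq x$), and when no quantization occurs $q_{i,t+1}(\theta) = q_{i,t}(\theta) \geq \mu_{i,t+1}(\theta)$ by the very triggering condition $\mu_{i,t+1}(\theta)\notin[0,q_{i,t}(\theta))$. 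Hence the quantized beliefs that propagate through the network are always \emph{at least} as large as the true beliefs. First I would record this monotonicity fact as a preliminary observation, since it is the crux of why quantization cannot hurt the lower bounds in \eqref{eqn:lowerbound}.

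Next I would establish claim (ii), i.e., the uniform lower bound $\pi_{i,t}(\theta^\star)\geq\eta(\omega)$ and $\bar\mu_{i,t}(\theta^\star)\geq\eta(\omega)$. The bound on $\pi_{i,t}(\theta^\star)$ is completely unaffected by quantization, since the local Bayesian update \eqref{eqn:Bayes} uses only private signals; thus \cite[Lemma 2]{mitra2019new} applies verbatim to give $\delta>0$ and $t'(\omega)$ with $\pi_{i,t}(\theta^\star)\geq\delta$ for $t\geq t'(\omega)$. For $\bar\mu_{i,t}(\theta^\star)$, I would replicate the inductive argument in the proof of Lemma \ref{lemma:bound}, but now using update rule \eqref{eq:qmubar} in place of \eqref{eq:mubardefn}. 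The point is that every term entering the $\min$ in \eqref{eq:qmubar} — namely $\bar\mu_{i,t}(\theta^\star)$, $\mu_{i,t+1}(\theta^\star)$, and the received quantized values $\{q_{j,t+1}(\theta^\star)\}$ — is lower-bounded by $\eta(\omega)$: the first two by the same reasoning as before, and the third by the upward-rounding property combined with the fact that each sender's true belief $\mu_{j,\cdot}(\theta^\star)$ is itself bounded below. So $\bar\mu_{i,t+1}(\theta^\star)\geq\eta(\omega)$ follows, and iterating over $t>t'(\omega)$ gives \eqref{eqn:lowerbound}. Claim (i), that $\mathbb{P}^{\theta^\star}(\bar\Omega)=1$, is inherited directly from \cite[Lemma 2]{mitra2019new} since the relevant $\bar\Omega$ is defined purely in terms of the local Bayesian beliefs.

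Finally, claim (iii) — that $\liminf_{t\to\infty}-\log\mu_{v,t}(\theta)/t\geq K_v(\theta^\star,\theta)$ for a source agent $v\in\mathcal{S}(\theta^\star,\theta)$ — again concerns only $v$'s own actual belief $\mu_{v,t}(\theta)$, which by \eqref{eqn:update} depends on $v$'s local Bayesian belief and on $\bar\mu_{v,t}(\theta)$. The lower bound already holds under the original rule by \cite[Lemma 3]{mitra2019new}, and since $\bar\mu_{v,t}(\theta)$ under Algorithm \ref{algo:Qmin} is, if anything, larger than under the original rule (quantization rounds incoming beliefs up), the denominator in \eqref{eqn:update} stays bounded below by the same $\eta(\omega)$ argument, so the exponential decay rate of $\mu_{v,t}(\theta)$ is at least $K_v(\theta^\star,\theta)$ on $\bar\Omega$ exactly as in \cite[Lemma 3]{mitra2019new}. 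I expect the main (though modest) obstacle to be bookkeeping the decoding consistency — confirming that the $q_{j,t}(\theta)$ that agent $i$ uses in \eqref{eq:qmubar} is indeed the value the neighbor intended, so that the ``received value $\geq$ true belief'' inequality is legitimate — but this is handled by the decoding discussion in Section \ref{sec:quantrule} and requires no new analysis.
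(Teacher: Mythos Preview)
Your approach matches the paper's: both rest on the upward-rounding property of the encoder and then rerun the induction from Lemma~\ref{lemma:bound}. One small slip worth fixing: in the no-quantization branch the inequality goes the other way --- if $\mu_{i,t+1}(\theta)\notin[0,q_{i,t}(\theta))$ then $\mu_{i,t+1}(\theta)\geq q_{i,t}(\theta)=q_{i,t+1}(\theta)$, so it is \emph{not} true that $q_{j,t+1}(\theta^\star)\geq\mu_{j,t+1}(\theta^\star)$ at every step, and your inductive bound on the third term in \eqref{eq:qmubar} cannot appeal to that. The induction still closes, however: in the no-broadcast branch simply use $q_{j,t+1}(\theta^\star)=q_{j,t}(\theta^\star)\geq\eta(\omega)$ by the inductive hypothesis, after folding the finite-time values $q_{i,t'(\omega)-1}(\theta^\star)>0$ into the definition of $\rho(\omega)$. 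The paper's own proof is terser still --- it records only that $q_{i,t}(\theta^\star)=0\Rightarrow\mu_{i,t}(\theta^\star)=0$ is a.s.\ impossible and defers everything else to Lemma~\ref{lemma:bound}.
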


\begin{proof}
The proof of this lemma mirrors that of Lemma \ref{lemma:bound}. The key point is that for any agent $i\in\mathcal{V}$, $q_{i,t}(\theta^*)\neq 0$ almost surely, where $t\in\mathbb{N}$. To see this, observe from $\eqref{eqn:encoder}$ that whenever an agent $i$ broadcasts about $\theta^*$, we have $q_{i,t}(\theta^*) \geq \mu_{i,t}(\theta^*)$. Hence, at such a time-step $t$, $q_{i,t}(\theta^*)=0 \implies \mu_{i,t}(\theta^*)=0$. Using the same arguments as in Lemma \ref{lemma:bound}, one can argue that this is almost surely impossible. 
\end{proof}

We are now ready to  prove Theorem \ref{thm:Quantmin}.

\begin{proof} (\textbf{Theorem \ref{thm:Quantmin}}) In view of Lemma \ref{lemma:qminbound}, we know that there exists a set $\bar{\Omega}\subseteq \Omega$ of $\mathbb{P}^{\theta^*}$-measure 1 for which assertions (ii) and (iii) of Lemma \ref{lemma:bound} hold. Consider any false hypothesis $\theta\neq\theta^*$, fix a sample path  $\omega\in\bar{\Omega}$, and an agent $v\in\mathcal{S}(\theta^{\star},\theta)$. Following the same reasoning as in the proof of Lemma \ref{lemma:main}, there exists a time-step $\bar{t}$, such that for all $t\geq \bar{t}$, the following are true on $\omega$: (i) $\pi_{i,t}(\theta^{\star}) \geq \eta(\omega), \bar{\mu}_{i,t}(\theta^{\star}) \geq \eta(\omega), \forall i\in\mathcal{V}$; and (ii) for a fixed $\epsilon>0$,  $\mu_{v,t}(\theta) <  e^{-(K_v(\theta^{\star},\theta)-\epsilon)t}$. We will complete the proof in two steps. In Step 1, we will establish that the quantization range $R_{v,t}(\theta)=[0,q_{v,t}(\theta)]$ contracts exponentially fast. In Step 2, we will analyze the implications of the above phenomenon on the beliefs of the remaining agents on $\theta$. In what follows, we elaborate on these steps. 

\textbf{Step 1.} Consider any time-step $t+1 > \bar{t}$. At this time-step, there are two possibilities. The first possibility is that  $\mu_{v,t+1}(\theta)\in[0,q_{v,t}(\theta))$, in which case we have from \eqref{eqn:encoder} that:
\begin{equation}
\resizebox{0.7\hsize}{!}{$
\begin{aligned}
q_{v,t+1}(\theta)&=\frac{q_{v,t}(\theta)}{2^{B(\theta)}}\ceil{{\mu_{v,t+1}(\theta)2^{B(\theta)}}/{q_{v,t}(\theta)}}\\
& < \frac{q_{v,t}(\theta)}{2^{B(\theta)}}\left(1+\frac{\mu_{v,t+1}(\theta)2^{B(\theta)}}{q_{v,t}(\theta)}\right)\\
& < \frac{1}{2^{B(\theta)}}q_{v,t}(\theta)+\mu_{v,t+1}(\theta).
\end{aligned}
$}
\label{eqn:quantbound}
\end{equation}
The second possibility is that  $\mu_{v,t+1}(\theta) \geq q_{v,t}(\theta)$ and, based on our encoding strategy, node $v$ sets $q_{v,t+1}(\theta)=q_{v,t}(\theta)$. Clearly, the bound on $q_{v,t+1}(\theta)$ in \eqref{eqn:quantbound} applies to both the cases we discussed above. To proceed, let $a=1/2^{B(\theta)}$, $\tilde{K}=K_v(\theta^*,\theta)-\epsilon$, and $\rho=\max\{a,e^{-\tilde{K}}\}$. Rolling out the inequality \eqref{eqn:quantbound} over $\tau\geq1$ time-steps starting from $\bar{t}$ yields:
\begin{equation}
\resizebox{0.85\hsize}{!}{$
    \begin{aligned}
q_{v,\bar{t}+\tau}(\theta)& < a^{\tau}\left(q_{v,\bar{t}}(\theta)+\sum\limits_{l=0}^{\tau-1}\frac{\mu_{v,\bar{t}+l+1}}{a^{l+1}}\right)\\
&\overset{(a)}{<} a^{\tau}\left(q_{v,\bar{t}}(\theta)+\frac{e^{-\tilde{K}(\bar{t}+1)}}{a}\sum\limits_{l=0}^{\tau-1}\frac{1}{{(ae^{\tilde{K}})}^l}\right)\\
&\overset{(b)}{<} a^{\tau}+\frac{e^{-\tilde{K}\tau}-a^{\tau}}{e^{-\tilde{K}}-a}\\
&\overset{(c)}{<}\left(1+\frac{1}{\vert e^{-\tilde{K}}-a \vert}\right){\rho}^{\tau}.
\end{aligned}
$}
\end{equation}
In the above inequalities, (a) follows by noting that $\mu_{v,\bar{t}+l+1}$ decays exponentially $\forall l\geq0$ based on the definition of $\bar{t}$. For (b), we simplify the preceding inequality using the facts that $q_{v,\bar{t}}(\theta) \leq 1$, and $e^{-\tilde{K}(\bar{t}+1)} \leq 1$ as $\tilde{K}>0$; the latter is true since $v\in\mathcal{S}(\theta^*,\theta)$. Finally, (c) follows from straightforward algebra. We thus obtain:
\begin{equation}
    q_{v,t}(\theta) <  \frac{1}{\rho^{\bar{t}}}\left(1+\frac{1}{\vert e^{-\tilde{K}}-a \vert}\right){\rho}^{t}, \forall t\geq \bar{t}+1.
\label{eqn:finalqbound}
\end{equation}
Since $B(\theta) \geq 1$, we have $a <1$. Moreover, as $\tilde{K} >0$, it follows that $\rho <1$. In view of \eqref{eqn:finalqbound}, we thus observe that $q_{v,t}(\theta)$ eventually decays to $0$ exponentially fast at the rate $\rho$. 

\textbf{Step 2.} Consider any neighbor $j$ of agent $v$. Let us now make two simple observations, each of which follow easily from the rules of Algorithm \ref{algo:Qmin}. First, given that $\mu_{v,1}(\theta) < 1=q_{v,0}(\theta)$, the condition in line 4 of Algorithm \ref{algo:Qmin} will pass at $t=1$, and hence agent $v$ will broadcast $q_{v,1}(\theta)$ to agent $j$ at time-step $t=1$. Second, at each subsequent  time-step $t\geq 1$, the value of $q_{v,t}(\theta)$ held by agent $v$ is consistent with that held by agent $j$, irrespective of whether $v$ broadcasts to $j$ at time $t$ about $\theta$, or not. We thus have that $\forall t\geq \bar{t}+2$:
\begin{equation}
\resizebox{0.6\hsize}{!}{$
    \begin{aligned}
\mu_{j,t}(\theta) &\overset{(a)}{\leq} \frac{\bar{\mu}_{j,t-1}(\theta)}{\eta}\\
& \overset{(b)}\leq \frac{q_{v,t-1}(\theta)}{\eta}\\
& \overset{(c)}<  \frac{1}{\eta\rho^{\bar{t}+1}}\left(1+\frac{1}{\vert e^{-\tilde{K}}-a \vert}\right){\rho}^{t},
    \end{aligned}
$}
\label{eqn:qfinal}
\end{equation}
where (a) follows from \eqref{eqn:update} and the fact that all beliefs on $\theta^*$ are bounded below by $\eta$ for $t\geq \bar{t}$; (b) follows from \eqref{eq:qmubar}; and (c) follows from \eqref{eqn:finalqbound}. Taking the natural log on both sides of \eqref{eqn:qfinal}, dividing throughout by $t$, and then taking the limit inferior on both sides of the resulting inequality yields:
\begin{equation}
    \liminf\limits_{t\to\infty}-\frac{\log\mu_{j,t}(\theta)}{t} \geq \log{\frac{1}{\rho}}.
\label{eqn:qrate}
\end{equation}
Now let us consider two cases. First, suppose $B(\theta)\log2 \geq K_v(\theta^*,\theta)$. Then, $\log1/\rho = \tilde{K}=K_v(\theta^*,\theta)-\epsilon$, where $\epsilon$ can be made arbitrarily small. Hence, in this case, the L.H.S. of \eqref{eqn:qrate} is at least $K_v(\theta^*,\theta)$. Next, suppose $B(\theta)\log2 < K_v(\theta^*,\theta)$. Then, there must exist $\epsilon > 0$ such that $B(\theta)\log2 < K_v(\theta^*,\theta)-\epsilon$. With such a choice of $\epsilon$, we can set $\tilde{K}=K_v(\theta^*,\theta)-\epsilon$ and conduct the above analysis to arrive at $\log1/ \rho = B(\theta)\log2$. We conclude:  
\begin{equation}
    \liminf\limits_{t\to\infty}-\frac{\log\mu_{j,t}(\theta)}{t} \geq \min\{B(\theta)\log2, K_v(\theta^*,\theta)\}.
\label{eqn:qfinalrate}
\end{equation}
Consider any neighbor $l$ of agent $j$, i.e., a two-hop neighbor of agent $v$. We can analyze the decay of $q_{j,t}(\theta)$ and $\mu_{l,t}(\theta)$ exactly as we did for $q_{v,t}(\theta)$ and $\mu_{j,t}(\theta)$ to conclude that $\mu_{l,t}(\theta)$ also decays exponentially at a rate that is lower bounded by $H_v(\theta^*,\theta)=\min\{B(\theta)\log2, K_v(\theta^*,\theta)\}$; this is not too hard to verify and hence we omit details. Repeating this argument reveals that every agent reachable from $v$ can reject $\theta$ at a rate that is at least $H_v(\theta^*,\theta)$. Since $\mathcal{G}$ is connected, the above conclusion applies to every agent.

An analysis identical to the one above can be carried out for each  $v\in\mathcal{S}(\theta^*,\theta)$. The proof can then be completed following the same arguments as in Theorem \ref{thm:main}.
\end{proof}

\section{Proof of Theorem \ref{thm:event_quant}}
\label{app:event_quant}

{
\begin{proof} (\textbf{Theorem \ref{thm:event_quant}}) The proof of this result is a simple variation on that of Theorem \ref{thm:Quantmin}. Hence, we will be somewhat terse in our arguments. First, it is easy to verify that Lemma \ref{lemma:qminbound} holds for the \texttt{QET} Min-Rule as well. Accordingly, let $\bar{\Omega}$ have the same meaning as in Theorem \ref{thm:Quantmin}, and fix a sample path  $\omega\in\bar{\Omega}$. Next, consider any  $\theta\neq\theta^*$, and $v\in\mathcal{S}(\theta^{\star},\theta)$. Fix $\epsilon > 0$, and recall from Theorem \ref{thm:Quantmin} that there exists a time-step $\bar{t}_1$, such that for all $t\geq \bar{t}_1$, the following are true on $\omega$: (i) $\pi_{i,t}(\theta^{\star}) \geq \eta(\omega), \bar{\mu}_{i,t}(\theta^{\star}) \geq \eta(\omega), \forall i\in\mathcal{V}$; and (ii) for a fixed $\epsilon>0$,  $\mu_{v,t}(\theta) <  e^{-(K_v(\theta^{\star},\theta)-\epsilon)t}$. From the condition on the threshold function $\gamma(\cdot)$ in Eq. \eqref{eqn:functions}, we also know that there exists $\bar{t}_2$ such that
\begin{equation}
\frac{1}{\gamma(t)} \leq e^{\epsilon t}, \forall t \geq \bar{t}_2.  
\label{eqn:gamma}
\end{equation}
Let $\bar{t}=\max\{\bar{t}_1,\bar{t}_2\}$, and let $t_p \in \mathbb{I}$ be the first event-monitoring time-step satisfying $t_p \geq \bar{t}$. Now consider any $t_k \in \mathbb{I}$ such that $t_k > t_p$. At $t_k$, if the event condition \eqref{eqn:event_quant} holds, then $\mu_{v,t_k}(\theta)$ is quantized to $q_{v,t_k}(\theta)$ based on the encoder in \eqref{eqn:encoder}. This yields:
\begin{equation}
    q_{v,t_k}(\theta) \leq \frac{q_{v,t_{k-1}}(\theta)}{2^{B(\theta)}} + \mu_{v,t_k}(\theta).
\label{eqn:quantrangebnd1}
\end{equation}
In the above step, we used $q_{v,t_{k-1}}(\theta)=q_{v,t_k-1}(\theta)$ by noting that $q_{v,t}(\theta)$ does not change over the interval $[t_{k-1}, t_k -1]$. If the event condition \eqref{eqn:event_quant} fails at time $t_k$, we would then have
\begin{equation}
    q_{v,t_k}(\theta) = q_{v,t_{k-1}(\theta)} \leq \frac{\mu_{v,t_k}(\theta)}{\gamma(t_k)}.
\label{eqn:quantrangebnd2}
\end{equation}
Combining the bounds in \eqref{eqn:quantrangebnd1} and \eqref{eqn:quantrangebnd2}, we obtain
\begin{equation}
    q_{v,t_k}(\theta) \leq \frac{q_{v,t_{k-1}}(\theta)}{2^{B(\theta)}} + \frac{\mu_{v,t_k}(\theta)}{\gamma(t_k)}.
\label{eqn:quantrangebnd}
\end{equation}
Our immediate goal is to analyze the above periodic  recursion and show that the quantizer range $R_{v,t}(\theta) = [0, q_{v,t}(\theta)]$ shrinks exponentially fast. To proceed, let $a=1/2^{B(\theta)}$, $\bar{a}=a^{1/\tau}$, $\tilde{K}=K_v(\theta^*,\theta)-2\epsilon$, and $\rho=\max \{e^{-\tilde K}, \bar{a} \}.$ Next, observe that for any positive integer $h$,
\begin{equation}
    \begin{aligned}
    q_{v,t_{p+h}(\theta)} & \leq a^h \left(q_{v,t_p}(\theta) + \sum_{\ell=1}^h a^{-\ell} \frac{ \mu_{v,t_{p+\ell}}(\theta)}{\gamma_{t_{p+\ell}}} \right) \\
  &\overset{(a)}\leq a^h \left(q_{v,t_p}(\theta) + \sum_{\ell=1}^h a^{-\ell} e^{-\tilde{K} t_{p+\ell}} \right) 
    \\
 &\overset{(b)} = a^h \left(q_{v,t_p}(\theta) + e^{-\tilde{K} t_p}\sum_{\ell=1}^h {\left(\frac{1}{a e^{\tilde{K} \tau}}\right)}^l 
 \right) \\
 &\leq \bar{a}^{\tau h} + \frac{\vert e^{-\tilde K \tau h} - \bar{a}^{\tau h} \vert}{\vert e^{-\tilde K \tau} - \bar{a}^{\tau} \vert} \\
 & \leq \frac{1}{\rho^{t_p}} \left(1 + \frac{1}{\vert e^{-\tilde K \tau} - \bar{a}^{\tau} \vert}\right) \rho^{t_{p+h}}. 
 \end{aligned}
\end{equation}
For (a), we used the fact that $t_{p+h} \geq t_p \geq \bar{t}$, and \eqref{eqn:gamma}. For (b), we used $t_{p+\ell}=t_p + \tau \ell$.\footnote{Recall that the event-monitoring sequence is periodic with period $\tau$.} We conclude that for any $t_k \in \mathbb{I}$ such that $t_k \geq t_p$, 
\begin{equation}
    q_{v,t_k}(\theta) \leq \frac{1}{\rho^{t_p}} \left(1 + \frac{1}{\vert e^{-\tilde K \tau} - \bar{a}^{\tau} \vert}\right) \rho^{t_{k}}. 
\label{eqn:QETbound1}
\end{equation}
Consider any $t\geq t_{p+1}$, and let $t_{f(t)}=\max\{t_k \in \mathbb{I} : t_k \leq t\}$. Noting that the sequence $\{q_{v,t}(\theta)\}$ is non-increasing (based on the rules of the \texttt{QET Min-Rule}), and using \eqref{eqn:QETbound1}, we obtain
\begin{equation}
\begin{aligned}
    q_{v,t}(\theta) &\leq q_{v,t_{f(t)}}(\theta) \leq \frac{1}{\rho^{t_p}} \left(1 + \frac{1}{\vert e^{-\tilde K \tau} - \bar{a}^{\tau} \vert}\right) \rho^{t+t_{f(t)}-t} \\ 
      & \leq \underbrace{\frac{1}{\rho^{t_{p+1}}} \left(1 + \frac{1}{\vert e^{-\tilde K \tau} - \bar{a}^{\tau} \vert}\right)}_{C} \rho^t.
\end{aligned}
\label{eqn:QETbound2}
\end{equation}
To arrive at the last inequality, we used: (i) $\rho < 1$; (ii) $t-t_{f(t)} \leq \tau$; and (iii) $t_p +\tau=t_{p+1}$. Now consider any $j\in\mathcal{N}_v$, $t > t_{p+1}$, and let $t_{\ell(t)}=\max\{t_k \in \mathbb{I} : t_k < t\}$. At $t_{\ell(t)}$, $v$ either broadcasts $q_{v,t_{\ell(t)}}(\theta)$ to $j$ or it does not, depending upon whether or not the event condition \eqref{eqn:event_quant} holds. Each of these cases can be analyzed just as in the proof of Theorem \ref{thm:main} to arrive at the following conclusion:
\begin{equation}
    \bar{u}_{j,t_{\ell(t)}}(\theta) \leq \frac{C}{\gamma(t)} \rho^{t_{\ell(t)}} \leq \frac{C}{\rho^{\tau}} \frac{\rho^t}{\gamma(t)}. 
\end{equation}

Using \eqref{eqn:update}, we then have
\begin{equation}
    \mu_{j,t}(\theta) \leq \frac{\bar{\mu}_{j,t-1}(\theta)}{\eta} \leq \frac{\bar{\mu}_{j,t_{\ell(t)}}(\theta)}{\eta} \leq \frac{C}{\eta \rho ^{\tau}} \frac{\rho^t}{\gamma(t)}. \end{equation}

Thus, for any $t > t_{p+1}$, we have 
\begin{equation}
    -\frac{\log\mu_{j,t}(\theta)}{t} \geq \log{\frac{1}{\rho}} -\frac{\log \bar{C}}{t} -\frac{\log(1/\gamma(t))}{t},
\end{equation}
where $\bar{C}=C/(\eta \rho^{\tau})$. Taking the limit inferior on both sides of the above inequality, and using \eqref{eqn:functions}, we obtain \eqref{eqn:qrate}. The rest of the proof is similar to that of Theorem \ref{thm:Quantmin}, and hence, we omit details in the interest of space.
\end{proof}
} \color{black}{}
\bibliographystyle{IEEEtran} 
\bibliography{refs}

\begin{thebibliography}{10}
\providecommand{\url}[1]{#1}
\csname url@samestyle\endcsname
\providecommand{\newblock}{\relax}
\providecommand{\bibinfo}[2]{#2}
\providecommand{\BIBentrySTDinterwordspacing}{\spaceskip=0pt\relax}
\providecommand{\BIBentryALTinterwordstretchfactor}{4}
\providecommand{\BIBentryALTinterwordspacing}{\spaceskip=\fontdimen2\font plus
\BIBentryALTinterwordstretchfactor\fontdimen3\font minus
  \fontdimen4\font\relax}
\providecommand{\BIBforeignlanguage}[2]{{%
\expandafter\ifx\csname l@#1\endcsname\relax
\typeout{** WARNING: IEEEtran.bst: No hyphenation pattern has been}%
\typeout{** loaded for the language `#1'. Using the pattern for}%
\typeout{** the default language instead.}%
\else
\language=\csname l@#1\endcsname
\fi
#2}}
\providecommand{\BIBdecl}{\relax}
\BIBdecl

\bibitem{jad1}
A.~Jadbabaie, P.~Molavi, A.~Sandroni, and A.~Tahbaz-Salehi, ``Non-{B}ayesian
  social learning,'' \emph{Games and Economic Behavior}, vol.~76, no.~1, pp.
  210--225, 2012.

\bibitem{jad2}
A.~Jadbabaie, P.~Molavi, and A.~Tahbaz-Salehi, ``Information heterogeneity and
  the speed of learning in social networks,'' \emph{Columbia Business School
  Research Paper}, pp. 13--28, 2013.

\bibitem{liu}
Q.~Liu, A.~Fang, L.~Wang, and X.~Wang, ``Social learning with time-varying
  weights,'' \emph{Journal of Systems Science and Complexity}, vol.~27, no.~3,
  pp. 581--593, 2014.

\bibitem{salami}
H.~Salami, B.~Ying, and A.~H. Sayed, ``Social learning over weakly connected
  graphs,'' \emph{IEEE Transactions on Signal and Information Processing over
  Networks}, vol.~3, no.~2, pp. 222--238, 2017.

\bibitem{shahin}
S.~Shahrampour, A.~Rakhlin, and A.~Jadbabaie, ``Distributed detection:
  Finite-time analysis and impact of network topology,'' \emph{IEEE
  Transactions on Automatic Control}, vol.~61, no.~11, pp. 3256--3268, 2016.

\bibitem{nedic}
A.~Nedi{\'c}, A.~Olshevsky, and C.~A. Uribe, ``Fast convergence rates for
  distributed {N}on-{B}ayesian learning,'' \emph{IEEE Transactions on Automatic
  Control}, vol.~62, no.~11, pp. 5538--5553, 2017.

\bibitem{lalitha}
A.~Lalitha, T.~Javidi, and A.~Sarwate, ``Social learning and distributed
  hypothesis testing,'' \emph{IEEE Transactions on Information Theory},
  vol.~64, no.~9, 2018.

\bibitem{su1}
L.~Su and N.~H. Vaidya, ``Defending {N}on-{B}ayesian learning against
  adversarial attacks,'' \emph{Distributed Computing}, pp. 1--13, 2016.

\bibitem{uribe}
C.~A. Uribe, J.~Z. Hare, L.~Kaplan, and A.~Jadbabaie, ``Non-{B}ayesian social
  learning with uncertain models over time-varying directed graphs,'' in
  \emph{{P}roc. of the 58th IEEE Conference on Decision and Control}, 2019, pp.
  3635--3640.

\bibitem{mitraACC19}
A.~Mitra, J.~A. Richards, and S.~Sundaram, ``A new approach for distributed
  hypothesis testing with extensions to {B}yzantine-resilience,'' in
  \emph{{P}roc. of the American Control Conference}, 2019, pp. 261--266.

\bibitem{mitra2019new}
------, ``A new approach to distributed hypothesis testing and non-{B}ayesian
  learning: Improved learning rate and {B}yzantine-resilience,'' \emph{IEEE
  Transactions on Automatic Control}, 2020.

\bibitem{tabuadaevent}
P.~Tabuada, ``Event-triggered real-time scheduling of stabilizing control
  tasks,'' \emph{IEEE Transactions on Automatic Control}, vol.~52, no.~9, pp.
  1680--1685, 2007.

\bibitem{heemels}
W.~Heemels, K.~H. Johansson, and P.~Tabuada, ``An introduction to
  event-triggered and self-triggered control,'' in \emph{{P}roc. of the Conf.
  on Decision and Control}.\hskip 1em plus 0.5em minus 0.4em\relax IEEE, 2012,
  pp. 3270--3285.

\bibitem{mitraCDC20}
A.~Mitra, S.~Bagchi, and S.~Sundaram, ``Event-triggered distributed
  inference,'' in \emph{{P}roc. of the 59th IEEE Conf. on Decision and
  Control}, 2020, pp. 6228--6233.

\bibitem{switching}
S.~Shahrampour, M.~A. Rahimian, and A.~Jadbabaie, ``Switching to learn,'' in
  \emph{{P}roc. of the American Control Conf.}, 2015, pp. 2918--2923.

\bibitem{hareiccasp}
J.~Z. Hare, C.~A. Uribe, L.~M. Kaplan, and A.~Jadbabaie, ``Communication
  constrained learning with uncertain models,'' in \emph{{P}roc. of the IEEE
  International Conference on Acoustics, Speech and Signal Processing}, 2020,
  pp. 8609--8613.

\bibitem{olshevsky}
A.~Olshevsky, I.~C. Paschalidis, and A.~Spiridonoff, ``Fully asynchronous
  push-sum with growing intercommunication intervals,'' in \emph{{P}roc. of the
  American Control Conference}, 2018, pp. 591--596.

\bibitem{opt1}
K.~Tsianos, S.~Lawlor, and M.~G. Rabbat, ``Communication/computation tradeoffs
  in consensus-based distributed optimization,'' in \emph{Advances in Neural
  Info. Proc. systems}, 2012, pp. 1943--1951.

\bibitem{opt2}
G.~Lan, S.~Lee, and Y.~Zhou, ``Communication-efficient algorithms for
  decentralized and stochastic optimization,'' \emph{Mathematical Programming},
  pp. 1--48, 2017.

\bibitem{sahu}
A.~K. Sahu, D.~Jakovetic, and S.~Kar, ``Credo: A communication-efficient
  distributed estimation algorithm,'' in \emph{{P}roc. of the IEEE
  International Symposium on Information Theory}, 2018, pp. 516--520.

\bibitem{nowzari}
C.~Nowzari, E.~Garcia, and J.~Cort{\'e}s, ``Event-triggered communication and
  control of networked systems for multi-agent consensus,'' \emph{Automatica},
  vol. 105, pp. 1--27, 2019.

\bibitem{csiszar}
R.~Ahlswede and I.~Csisz{\'a}r, ``Hypothesis testing with communication
  constraints,'' \emph{IEEE Transactions on Information theory}, vol.~32,
  no.~4, pp. 533--542, 1986.

\bibitem{longo}
M.~Longo, T.~D. Lookabaugh, and R.~M. Gray, ``Quantization for decentralized
  hypothesis testing under communication constraints,'' \emph{IEEE Transactions
  on Information Theory}, vol.~36, no.~2, pp. 241--255, 1990.

\bibitem{amari}
S.~Amari \emph{et~al.}, ``Statistical inference under multiterminal data
  compression,'' \emph{IEEE Transactions on Information Theory}, vol.~44,
  no.~6, pp. 2300--2324, 1998.

\bibitem{tatikonda}
S.~Tatikonda and S.~Mitter, ``Control under communication constraints,''
  \emph{IEEE Transactions on Automatic control}, vol.~49, no.~7, pp.
  1056--1068, 2004.

\bibitem{thinh}
T.~T. Doan, S.~T. Maguluri, and J.~Romberg, ``Fast convergence rates of
  distributed subgradient methods with adaptive quantization,'' \emph{IEEE
  Transactions on Automatic Control}, 2020.

\bibitem{cover}
T.~M. Cover and J.~A. Thomas, \emph{Elements of information theory}.\hskip 1em
  plus 0.5em minus 0.4em\relax John Wiley \& Sons, 2012.

\bibitem{de}
C.~De~Persis and P.~Frasca, ``Robust self-triggered coordination with ternary
  controllers,'' \emph{IEEE Transactions on Automatic Control}, vol.~58,
  no.~12, pp. 3024--3038, 2013.

\bibitem{PETM}
W.~H. Heemels, M.~Donkers, and A.~R. Teel, ``Periodic event-triggered control
  for linear systems,'' \emph{IEEE Transactions on Automatic Control}, vol.~58,
  no.~4, pp. 847--861, 2012.

\bibitem{mitraCDC19}
A.~Mitra, J.~A. Richards, and S.~Sundaram, ``A communication-efficient
  algorithm for exponentially fast non-{B}ayesian learning in networks,'' in
  \emph{{P}roc. of the Conf. on Decision and Control}, 2019, pp. 8347--8352.

\bibitem{moreau}
L.~Moreau, ``Stability of multiagent systems with time-dependent communication
  links,'' \emph{IEEE Transactions on Automatic control}, vol.~50, no.~2, pp.
  169--182, 2005.

\bibitem{quantopt1}
J.~Li, G.~Chen, Z.~Wu, and X.~He, ``Distributed subgradient method for
  multi-agent optimization with quantized communication,'' \emph{Mathematical
  Methods in the Applied Sciences}, vol.~40, no.~4, pp. 1201--1213, 2017.

\bibitem{quantopt2}
M.~El~Chamie, J.~Liu, and T.~Ba{\c{s}}ar, ``Design and analysis of distributed
  averaging with quantized communication,'' \emph{IEEE Transactions on
  Automatic Control}, vol.~61, no.~12, pp. 3870--3884, 2016.

\bibitem{quantopt3}
A.~Nedic, A.~Olshevsky, A.~Ozdaglar, and J.~N. Tsitsiklis, ``Distributed
  subgradient methods and quantization effects,'' in \emph{{P}roc. of the 47th
  IEEE Conference on Decision and Control}, 2008, pp. 4177--4184.

\end{thebibliography}
\IEEEpeerreviewmaketitle
\end{document}